\begin{document}
\title{Fault-tolerant gates on hypergraph product codes}
\author{Anirudh Krishna}\
\affiliation{
 D\'epartement de physique \& Institut Quantique, Universit\'e de Sherbrooke, Sherbrooke, Qu\'ebec J1K 2R1, Canada
}
\author{David Poulin}
\affiliation{
 D\'epartement de physique \& Institut Quantique, Universit\'e de Sherbrooke, Sherbrooke, Qu\'ebec J1K 2R1, Canada
}

\date{\today}
\begin{abstract}
Hypergraph product codes are a class of quantum low density parity check (LDPC) codes discovered by Tillich and Z\'emor.
These codes have a constant encoding rate and were recently shown to  have a constant fault-tolerant error threshold.
With these features, they asymptotically offer a smaller overhead compared to topological codes.
However, existing techniques to perform logical gates in hypergraph product codes require many separate code blocks, so only becomes effective with a very large number of logical qubits.
Here, we demonstrate how to perform Clifford gates on this class of codes using code deformation.
To this end, we generalize punctures and wormhole defects, the latter introduced in a companion paper.
Together with state injection, we can perform a universal set of gates within a single block of the class of hypergraph product codes.
\end{abstract}
\maketitle

\section{Introduction}
Quantum error correcting codes can be used to simulate an ideal quantum circuit using noisy circuit components \cite{aharonov1997fault,aliferis2006quantum,kitaev1997quantum,knill1998resilient}.
These simulations are characterized by a tug-of-war between the size of the circuit and the accuracy of the result.
Larger problem instances require larger circuits, and therefore larger error correcting codes; this however creates more opportunities for errors to accumulate.
To compensate for this increased error accumulation, we must lower the logical fault rate.
By choosing our quantum error correcting code appropriately, we can increase the code size $n$ logarithmically with the size of the logical circuit and still find that the global logical error rate decreases exponentially in the code size $n$.

Not all error correcting codes are equivalent.
The size of the noisy circuit depends on the quantum error correcting code being used.
Finding codes which minimize the size of the circuit to achieve a target logical error rate is a subject of active research.

A large amount of work in both theory and experiment focuses on topological codes \cite{kitaev2003fault,bravyi1998quantum,bombin2006topological}.
These codes have the desired ability to suppress errors exponentially in the size $n$ of the codes.
Furthermore, topology guarantees that each qubit, be it data qubit or ancilla qubit for readout, is only connected to a constant number of other qubits in its neighborhood.
These properties, among others, make these codes suitable for serving as the architecture for quantum computers.

One drawback of topological codes is their ability to store logical qubits.
Indeed, each logical qubit in a topological code is encoded in a distinct block of size $n$. 
The number of physical qubits required to simulate a single logical qubit therefore increases with the size of the quantum computer.

In contrast, quantum \emph{low density parity check} (LDPC) codes \cite{kovalev2013fault,gottesman2014fault} are generalizations of topological codes that overcome this increasing encoding overhead.
LDPC codes refer to families of codes where all qubits (data qubits and ancilla qubits for readout) are only connected to a constant number of other qubits.
This constant is independent of the block size and thus simplifies the process of syndrome extraction.
In \cite{gottesman2014fault}, Gottesman proposed a construction that combines techniques for efficient syndrome extraction with ideas to perform logical gates on block codes.
The result was a conditional statement: if `good' quantum LDPC codes exist, the number of physical qubits required to simulate a logical qubit becomes a constant, independent of the size of the quantum computer.

The difference between LDPC codes and topological codes is that the connectivity need no longer be spatially local.
By sacrificing locality, these codes overcome one of the shortcomings of the surface code and this permits a `wholesale effect'.
Increasing the size $n$ of the code lets us encode $k$ qubits, where $k$ can increase with $n$.
We can find codes for which the logical error probability decreases exponentially with $n$ with a fixed $k/n$ ratio. 
This is to be contrasted with topological codes that also achieve an exponential error suppression with $n$, but with $k=1$ and hence vanishing encoding rate $1/n$. 

Good LDPC codes are elusive.
It is hard to enforce the commutation relations between stabilizers of a quantum code while simultaneously maintaining low connectivity.
Topological codes use topology to achieve this, but there are strong constraints on the number of logical qubits they can simulate and how effectively they can do so \cite{bravyi2009no,bravyi2010tradeoffs}.
These constraints are a consequence of \emph{locality}.
Although no doubt simpler to engineer, locality is not a fundamental constraint.
There exist techniques that permit qubits that are not adjacent to share entanglement in various architectures \cite{nickerson2013topological,campagne2018deterministic,axline2018demand,kurpiers2018deterministic}.
This motivates theoretical investigations of LDPC codes that are not constrained by locality.
Only with a complete understanding of the potential benefits of LDPC codes will we be able to decide if they are worth the extra experimental effort. 

One of the leading candidates for LDPC codes are the so-called hypergraph product codes which eschew topology, and instead engineer commutation relations using algebraic / graph-theoretic techniques \cite{tillich2014quantum}.
The hypergraph product is itself not a code family, but rather a technique to construct quantum codes from classical codes.
If we input two classical codes to the hypergraph product machinery, the resulting quantum code inherits properties of the classical codes.
Importantly, if the classical code families are LDPC then the quantum code families will also be LDPC.
Furthermore, if the classical code families have a code dimension $k$ scaling linearly in the block size $n$, then so does the code dimension of the quantum code family.
With regards to distance, the hypergraph product code construction yields quantum codes with distance scaling as the square-root of the block size.
Up to constants, this is the same functional dependence between the distance and the block size as the surface code (but it applies to all the logical qubits).
So in short, quantum LDPC code achieve the same error suppression as topological codes, but do so at a constant encoding rate.

When these codes were first proposed, it was unclear if they possessed an efficient decoding algorithm.
Naively applying decoders for classical codes to their quantum counterparts does not work because of degeneracy \cite{poulin2008iterative};
there exist low-weight errors that confuse the classical decoding algorithm.
We needed an inherently quantum decoding algorithm, but it was not clear if one existed.
Since then, the situation has changed significantly.

Leverrier et al. \cite{leverrier2015quantum} were the first to show that we could overcome this problem.
For certain classes of hypergraph product codes called quantum expander codes, they discovered an algorithm called small-set-flip that overcame the degeneracy issue.
The small-set-flip algorithm was shown to function in linear time and is therefore efficient.
Secondly, it was shown that the small-set-flip algorithm came with a performance guarantee.
Suppose an adversary were handed the ability to target qubits and apply a Pauli error of choice on each qubit to inflict maximum damage.
The small-set-flip algorithm can protect against errors whose weight is less than the square-root of the block size (up to some constant).

Of course, this is a pessimistic error model; errors that occur in nature may not be orchestrated by an adversary.
The local stochastic error model is one way to model errors that occur naturally.
Roughly, it states that the probability of an error decays exponentially with the number of qubits afflicted.
Fawzi et al. showed that the small-set-flip algorithm is capable of correcting a constant fraction of such errors \cite{fawzi2018efficient}.

In a followup work, Fawzi et al. demonstrated that these codes are even resilient to syndrome errors \cite{fawzi2018constant}; we only need to measure the syndromes once in order to proceed with decoding.
If we applied the small-set-flip algorithm to decode in the presence of potentially incorrect syndromes, then it is guaranteed that the number of physical errors on the qubits is upperbounded by a function of the number of syndrome errors.
Thus so long as we do not have too many syndrome errors, we are guaranteed that we can reduce the weight of the error on the physical qubits.

These works put the hypergraph product code on solid theoretical foundation.
In trying to reason about its threshold however, the best bounds were very weak.
Numerical work has sought to remedy this.
Kovalev et al. \cite{kovalev2018numerical} estimated the threshold of the hypergraph product codes using a statistical mechanical mapping.
These computations are independent of a decoder, and instead estimate the threshold using a proxy.
Grospellier and Krishna \cite{grospellier2018numerical} estimated the threshold of the hypergraph product code using the small-set-flip algorithm.
The focus of this work was a class of hypergraph product codes with qubit degree $11$ and check degree $10$ and $12$.
These codes were subject to an independent bit and phase flip error model.
The estimates for the thresholds are several orders of magnitudes better than previous analytical lower bounds.

On a related note, Liu and Poulin have studied small codes using a neural-network \cite{liu2018neural}.
Finally, Panteleev and Kalachev \cite{panteleev2019degenerate} showed very promising results for related codes.
The observed threshold can be comparable to the one observed for the surface code, but in general will depend strongly on the code parameters such as the encoding rate and the weight of the checks.

These results demonstrate that hypergraph product codes are efficient fault-tolerant quantum memories.
In this paper, we describe the first techniques to perform Clifford gates fault tolerantly on hypergraph product codes.
Our method is an instance of code deformation, a general framework to perform gates on quantum codes.
Continuing in the spirit of the hypergraph product construction, we express defects on the surface code as purely algebraic and graph-theoretic concepts.
Importantly, the code remains LDPC over the course of code deformation.
Fault tolerance follows because the modifications we make at each step are local (in a graph-theoretic sense).
The generalized defects are capable of encoding several logical qubits.
If we choose to use a subset of these qubits to encode information, then the rest can be considered as gauge qubits.
We discuss constraints on code deformation that keeps the spaces of logical and gauge qubits separate.
To conclude, we show that we can achieve a universal gate set on the logical qubits via state injection.

We emphasize at this point that we are not providing a technique to \emph{compile} a Clifford gate of interest.
We provide a framework within which it is possible to realize Clifford gates via code deformation.
These Clifford gates can then be composed to generate a larger group of transformations.
We show that the framework is sufficiently rich to realize all different types of generating gates, but depending on the code, 
 this may or may not encompass the entire space of Clifford operators.
We will return to this discussion later.

This approach can be contrasted to Gottesman's work which entails partitioning logical qubits into blocks of LDPC codes.
Each block is of `intermediate' size and a computation with $k$ logical qubits requires blocks whose size scales as $O(k/\operatorname{poly}\log(k))$.
Gates are then performed by state injection using ancilla states.
The size of these blocks limits the number of gates that can be implemented at any given time.
Furthermore, the savings of LDPC codes become compelling only as we increase the block size.
Partitioning the logical qubits into blocks implies that it will take longer for this effect to manifest.
In contrast, we propose performing quantum computation on a single block.
Our proposal does not limit the number of qubits that can be processed at any given time to a constant.
On the other hand, the time required to perform a gate could scale so it is unclear whether these gates will be faster.

\textbf{Outline of the paper:}
In section \ref{sec:background}, we begin by reviewing some facts about classical codes and proceed to recall the definition of the hypergraph product code.

In section \ref{sec:punctures}, we describe how to deform the hypergraph product code by introducing a {\em puncture}.
This puncture shall itself be described as a hypergraph product of subgraphs of the graphs that together form the quantum code.
Section \ref{subsec:defPunc} includes the definition of a puncture, and describes how they arise in two types, smooth and rough.
In section \ref{subsec:logicalPaulis}, we describe the logical operators supported on the puncture.

We generalize puncture defects in section \ref{sec:wormholes}, and discuss how to create a wormhole.

Unlike a surface code, a puncture will be capable of supporting several logical qubits.
In section \ref{sec:code-def}, we study how to perform code deformation.
We first impose constraints on code deformation with multiple logical qubits in order to guarantee that the process is not error prone in section \ref{subsec:nonmixing}.
We proceed to list the requirements to perform all Clifford gates on the hypergraph product code in section \ref{subsec:codedefhgp}.
After discussing state injection in section \ref{subsec:injection}, we conclude by discussing point-like punctures in \ref{subsec:pointlike}.
Point-like punctures are useful in demonstrating how the notion of topology may generalize to purely graph-theoretic conditions.
From a condensed-matter physics perspective, these punctures can be seen as a generalization of anyons from two-dimensionnal manifolds to general graph structures, which could be of independent interest.

\section{Background and notation}
\label{sec:background}

\subsection{Classical and quantum codes}

\textbf{Classical codes:} A classical code $\C = [n,k,d] \subseteq \field_2^n$ over $n$ bits is the (right-)kernel of a matrix $\h \in \field_2^{m \times n}$, known as its parity check matrix.
The code dimension $k$ is the dimension of the kernel, $\dim{\ker{\h}}$, and $d$ is the minimum Hamming distance between a pair of vectors in $\C$.
In general, we could have redundant checks and so $m \geq n - k$.
We let $\rs{\h}$ denote the rowspan of the matrix $\h$ and $\h^t$ be the transpose of $\h$.
Each row of $\h$ encodes a parity constraint that we refer to as a check and we label $c\in C$.
Likewise we label the columns of $\h$ by variable indices $v\in V$.
Let $\1_V, \1_C$ denote the identity on $\field_2^V$ and $\field_2^C$ respectively.
For $u,v \in \field_2^n$, we let $\ip{u}{v} = \sum_{i=1}^{n} u_i v_i$ denote the inner product between them.

An LDPC code is a code family $\{\C_n\}_n$ such that as a function of the block size $n$, the number of bits in the support of a check is upper bounded by a constant, as are the number of checks a bit is connected to \cite{richardson2008modern}.
In other words, the number of non-zero elements in each row and column of the parity-check matrix is bounded by a constant with respect to the block-size $n$.
The factor graph $\G(\C)$ of a code $\C$ lets us infer properties of the code $\C$ from the properties of the graph.
The graph $\G(\C) = (V \union C, E)$ is a bipartite graph, where $V = [n]$ and $C = [m]$.
We draw an edge between check node $c \in C$ and variable node $v \in V$ if and only if $\h_{cv} = 1$.
Given a subset $P \subseteq V\union C$, the neighborhood of $P$ is denoted $\Gamma(P)$ and is defined as
\begin{align*}
  \Gamma(P) = \{q | (q,p) \text{ or } (p,q) \in E \text{ for } p \in P \}~.
\end{align*}

\textbf{Quantum codes:} Let $\P = \{I,X,Y,Z\}$ denote the Pauli group and $\P_n = \P^{\otimes n}$ denote the $n$-fold tensor product of the Pauli group.
A quantum error correcting code is specified by a group $\Q \subseteq \P_n$.
The stabilizer $\S$ of the code is the center $Z(\Q)$, and the quotient group $\G := \Q/\S$ is called the (pure) gauge group.
The set of logical operators is then defined as $\N(\Q) \setminus \Q$.
A stabilizer code is a code such that $\G = \emptyset$ and $\Q$ is an Abelian group.

Just like the individual rows of the classical parity check matrix generate a linear space of constraints, we can choose a generating set of checks for the stabilizer group $\S$.
Much like its classical counterpart, the factor graph $\G_Q$ can be use to represent the stabilizer generator, and provides a visual representation of $\Q$.
The only difference is that the edges could carry labels of Pauli elements $X, Y$ or $Z$, indicating the action of a check on a qubit.

\subsection{The hypergraph product code}
The hypergraph product code is a way to construct a quantum code $\Q$ given two classical codes $\C_1$ and $\C_2$.
This can naturally be extended to families of classical codes.
If the two classical code families are LDPC, then so is the resulting quantum code family.
The resulting code is a CSS code \cite{calderbank1996good,steane1996multiple}, i.e. the stabilizer generators are either  products of only $X$ operators or only $Z$ operators.
For simplicity, we shall consider the graph product of a code $\C$ with itself.

\textbf{Graph-theoretic description:} Let $\G = (V \union C, E)$ be a bipartite graph.
Let $\Q$ denote the quantum code obtained from the hypergraph product of $\G$ with itself.
The factor graph $\G_\Q$ of $\Q$ is defined as $\G_\Q = \G \times \G$.
Its nodes are partitioned as follows:
\begin{enumerate}
  \item qubits $V \times V \union C \times C$;
  \item $X$ stabilizers $V \times C$;
  \item $Z$ stabilizers $C \times V$.
\end{enumerate}
This representation highlights that this construction yields two kinds of qubits -- those emerging from the product of two variable nodes (VV nodes) and those emerging from the product of two check nodes (CC nodes).
We draw an edge between $(a_1,a_2)$ and $(b_1,b_2)$ in $V \union C \times V \union C$ if either $(a_1,b_1) \in E$ and $a_2 = b_2$ or if $(a_2,b_2) \in E$ and $a_1 = b_1$.

\textbf{Algebraic description:} Let $\h \in \field_2^{m \times n}$ define the codes $\C = [n,k,d]$ and $\tC = [m, \tk, \td]$ as
\begin{align}
  \begin{matrix}
  \C = \ker{\h} & \tC = \ker{\h^t}~.
  \end{matrix}
\end{align}
The $X$ and $Z$ stabilizers of the code are specified via their symplectic representation \cite{nielsen2002quantum}.
The parity check matrices of the quantum code are denoted $\h_X$ and $\h_Z$ respectively, where
\begin{align}
  \label{eq:graphProductDef}
  \h_X = \left(\1_{V} \otimes \h| \h^t \otimes \1_{C} \right) \qquad
  \h_Z = \left(\h \otimes \1_{V}| \1_{C} \otimes \h^t \right) ~.
\end{align}
In this expression, VV nodes are in the left partition while CC nodes are in the right partition.
Let $d_{\min} = \min\{d,\td\}$ denote the minimum of the distance of the two codes.
The hypergraph product $\Q$ is a $\dsl n^2 + m^2, k^2 + \tk^2, d_{\min} \dsr$ quantum code.

We shall refer to the logical operators of the quantum code $\Q$ as the embedded logical operators to distinguish them from the logical operators that we introduce later by creating defects.
The embedded logical operators of the code are described as follows.
\begin{lemma}\textbf{(Embedded logical operators)}
  \label{lem:embeddedLogicals}
    \begin{enumerate}
      \item the $X$ logical operators of $\Q$ are spanned by
         \[
          \left(\ker{\h} \otimes \left(\field_2^n / \rs{\h}\right) |\bz_{m^2}\right) \union
          \left(\bz_{n^2} \;|\left(\field_2^m / \rs{\h^t}\right) \otimes \ker{\h^t}\right)
         \]
      \item the $Z$ logical operators of $\Q$ are spanned by
        \[
          \left( \left(\field_2^n / \rs{\h}\right) \otimes \ker{\h}|\bz_{m^2}\right) \union
          \left(\bz_{n^2} \;| \ker{\h^t} \otimes \left(\field_2^m / \rs{\h^t}\right) \right)
        \]
    \end{enumerate}
\end{lemma}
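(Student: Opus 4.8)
The plan is to combine the usual CSS dictionary with a non-degenerate pairing between the $X$- and $Z$-type representatives.  Recall that for a CSS code with parity-check matrices $\h_X$ and $\h_Z$ the $X$-type logical operators are represented by $\ker{\h_Z}/\rs{\h_X}$ and the $Z$-type logical operators by $\ker{\h_X}/\rs{\h_Z}$, and that the CSS commutation condition $\h_X\h_Z^t = 0$ forces $\rs{\h_X}\subseteq\ker{\h_Z}$ and $\rs{\h_Z}\subseteq\ker{\h_X}$; hence each of these quotients has dimension equal to the number of logical qubits of $\Q$, which was recorded above to be $k^2+\tk^2$.  It therefore suffices to (i) check that the vectors listed in the statement lie in the appropriate kernel, and (ii) produce $k^2+\tk^2$ of them that are linearly independent modulo the corresponding rowspan.

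Step (i) is a short tensor computation.  Writing a vector on the VV block as $x\otimes y$ with $x\in\ker{\h}$ gives $\h_Z(x\otimes y\mid\bz_{m^2})^t=(\h x)\otimes y=0$, and writing a vector on the CC block as $u\otimes w$ with $w\in\ker{\h^t}$ gives $\h_Z(\bz_{n^2}\mid u\otimes w)^t=u\otimes(\h^t w)=0$; the analogous computations against $\h_X$ dispose of the $Z$-type list.  Note that in the statement the second tensor factor of a VV-block $X$-operator (and the first factor of a VV-block $Z$-operator) ranges over the quotient $\field_2^n/\rs{\h}$, i.e., distinct representatives are intended to give the same logical operator; this is consistent, since shifting such a representative by an element of $\rs{\h}$ changes the operator by a row of $\h_X$, hence by a stabilizer.

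For step (ii) I would compute the symplectic form between the listed $X$- and $Z$-representatives.  An $X$-representative supported on the VV block has disjoint support from a $Z$-representative supported on the CC block (and vice versa), so the pairing matrix is block diagonal.  On the VV block, $\ip{x\otimes y}{x'\otimes y'}=\ip{x}{x'}\ip{y}{y'}$ where $x,y'\in\ker{\h}$ and $y,x'$ represent classes in $\field_2^n/\rs{\h}$.  Because $\ker{\h}=\rs{\h}^\perp$ and the standard bilinear form on $\field_2^n$ is non-degenerate, the induced pairing $\ker{\h}\times(\field_2^n/\rs{\h})\to\field_2$ is well defined and non-degenerate; thus the VV block of the matrix is a tensor product of two invertible matrices, hence invertible, and likewise for the CC block using $\ker{\h^t}=\rs{\h^t}^\perp$.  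So the pairing between the $k^2+\tk^2$ listed $X$-representatives and the $k^2+\tk^2$ listed $Z$-representatives is non-degenerate.  Consequently no non-trivial $\field_2$-combination of the listed $X$-representatives can lie in $\rs{\h_X}$ — such a combination would be an $X$-stabilizer and would pair trivially with every element of $\ker{\h_X}$ — so they are independent in $\ker{\h_Z}/\rs{\h_X}$, and by the dimension count they form a basis; the $Z$-type statement follows symmetrically, or directly by interchanging the two tensor factors, since $\h_X$ and $\h_Z$ differ precisely by that interchange.

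I expect step (ii), and in particular the non-degeneracy of the induced pairing, to be the only delicate point: one must verify that quotienting by $\rs{\h}$ on exactly one of the two tensor factors (rather than on neither or both) is what removes all redundancy while keeping the pairing non-degenerate.  A cleaner but more abstract alternative would be to recognise the hypergraph-product chain complex as the total complex of the tensor product of the two length-two complexes built from $\h$ and from $\h^t$, and to apply the Künneth formula over $\field_2$ (which carries no $\mathrm{Tor}$ term): the homology in the degree of the qubits is then $\ker{\h}\otimes(\field_2^n/\rs{\h})\oplus(\field_2^m/\rs{\h^t})\otimes\ker{\h^t}$, which is exactly the $X$-type description, with the $Z$-type description obtained by dualising.
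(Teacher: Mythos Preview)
Your proof is correct and takes a genuinely different route from the paper for the independence step.  The paper argues directly by contradiction: if a listed $X$-representative $\alpha$ were in $\rs{\h_X}$, it projects onto a single row index $u$ of the VV block and obtains a vector $\beta_u$ that would have to lie in $\rs{\h}$ while being, by construction, a non-trivial coset representative of $\field_2^n/\rs{\h}$.  This is elementary and treats the $X$-case in isolation, with the $Z$-case handled ``similarly''.  Your argument instead establishes independence of the $X$-list and the $Z$-list simultaneously by showing that the symplectic pairing between them is non-degenerate, reducing to the standard fact that $W^\perp\times(\field_2^n/W)\to\field_2$ is a perfect pairing for any subspace $W$ (here $W=\rs{\h}$, $W^\perp=\ker{\h}$).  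This buys you something the paper does not spell out: the listed $X$- and $Z$-representatives come in canonical conjugate pairs, so you get an explicit symplectic basis for the logical algebra rather than merely two bases of the correct size.  Your K\"unneth remark is also valid and is the most conceptual of the three approaches, though it requires the reader to identify the hypergraph product with the total complex of a tensor product of complexes.
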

\begin{proof}
  The style of the proof follows arguments presented in lemma 17 of \cite{tillich2014quantum}.
  We first show that the spaces above are contained in the set of logical operators, and then use counting arguments to show that this must be the entire space of logical operators.

  We deal with the $X$ type logical operators and note that the $Z$ logical operators follow using a similar argument.
  Let $\alpha$ be an $X$ logical operator, i.e.
  \begin{align*}
    \alpha \in \left(\ker{\h} \otimes \left(\field_2^n / \rs{\h}\right) |\bz_{m^2}\right) \union
          \left(\bz_{n^2} \;|\left(\field_2^m / \rs{\h^t}\right) \otimes \ker{\h^t}\right)~.
  \end{align*}
  This object clearly commutes with the $Z$ stabilizers.

  For the sake of contradiction, assume that $\alpha$ is in fact in the span of the $X$ stabilizers, i.e. that there exists a non-trivial vector $a \in \field_2^{V \times C}$ such that $a\h_X = \alpha$.
  Without loss of generality, let us assume that the VV portion of $\alpha$ is non-trivial and let $\pi(\alpha)$ be the projection of $\alpha$ on to the VV type qubits.

  It follows that
  \begin{align}
  \label{eq:Xlogicalcontra2}
    a(\1_V \otimes \h) = \pi(\alpha)~.
  \end{align}
  For $u,v \in V$, we can index the elements of $\pi(\alpha)$ as $\pi(\alpha)[u,v]$.
  Furthermore, for fixed $u \in V$, we let $\pi(\alpha)[u,*]$ denote the vector over $\field_2^V$ obtained by fixing the first component of $\pi(\alpha)$.

  Similarly, we can index the elements of $a$ as $a[v,c]$ for $v \in V$ and $c\in C$ and let $a[v,*]$ denote the vector over $\field_2^C$.
  Eq. \ref{eq:Xlogicalcontra2} implies that there exists some index $u \in V$ such that
  \begin{align*}
    b_u \h = \beta_u~,
  \end{align*}
  where $b_u := a[u,*]$ and $\beta_u := \pi(\alpha)[u,*]$.
  However, this is a contradiction since $\beta_u \in \field_2^V/\rs{\h}$ and lies outside the row-span of $\h$.
  
  The row rank of $\h$ is $n-k$ and so the number of cosets in $\field_2^n/\rs{\h}$ is $n - (n-k) = k$.
  Therefore the number of elements in $\ker{\h} \otimes \field_2^n/\rs{\h}$ is $k^2$.
  Similarly, the number of cosets $\field_2^m/\rs{\h^t}$ is  $m - (m - \tk) = \tk$.
  Therefore the number of elements in $\field_2^m/\rs{\h^t} \otimes \ker{\h^t}$ is $\tk^2$.
  On counting the operators, we see that there are indeed $k^2$ vectors of VV type and $\tk^2$ vectors of CC type, thus adding up to the correct number of logical operators.
\end{proof}

\textbf{Example:} consider the surface code generated using two copies of the repetition code $[3,1,3]$ as shown in fig.~(\ref{fig:threebythree}) below.
Its factor graph $\G$ runs vertically on the left and horizontally on the bottom.
Variable nodes have been colored blue and indexed by numerals whereas check nodes are green and indexed by letters.
The product of two nodes is represented as
\begin{center}
\begin{tikzpicture}
  \node at (-2.6,0) {$VV:$};
  \node[v] (v1) at (-2,0) {};
  \node () at (-1.5,0) {$\times$};
  \node[v] (v2) at (-1,0) {};
  \node () at (-0.5,0) {$\rightarrow$};
  \node[v] (v) at (0,0) {};
  \node[vv] (vv) at (0,0) {};

  \node at (-2.6,-1) {$CC:$};
  \node[c] (c1) at (-2,-1) {};
  \node () at (-1.5,-1) {$\times$};
  \node[c] (c2) at (-1,-1) {};
  \node () at (-0.5,-1) {$\rightarrow$};
  \node[c] (c) at (0,-1) {};
  \node[cc] (cc) at (0,-1) {};
\end{tikzpicture}
\qquad
\begin{tikzpicture}
  \node () at (-2.6,0) {$X:$};
  \node[v] (v1) at (-2,0) {};
  \node () at (-1.5,0) {$\times$};
  \node[c] (c2) at (-1,0) {};
  \node () at (-0.5,0) {$\rightarrow$};
  \node[c] (c) at (0,0) {};
  \node[vv] (vv) at (0,0) {};

  \node () at (-2.6,-1) {$Z:$};
  \node[c] (c1) at (-2,-1) {};
  \node () at (-1.5,-1) {$\times$};
  \node[v] (v2) at (-1,-1) {};
  \node () at (-0.5,-1) {$\rightarrow$};
  \node[v] (v) at (0,-1) {};
  \node[cc] (cc) at (0,-1) {};
\end{tikzpicture}
\end{center}
Mapping this to the surface code, we choose a convention where $Z$ stabilizers correspond to plaquettes and $X$ stabilizers correspond to vertices.

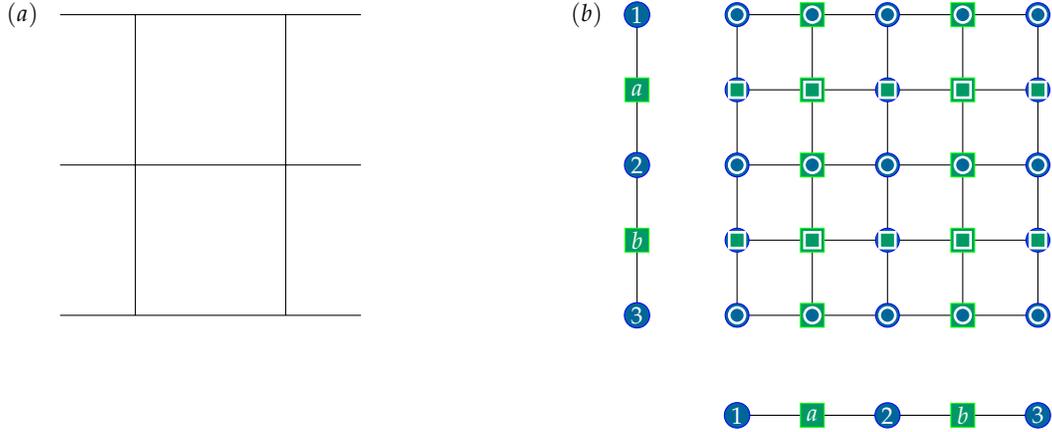
\begin{figure}[h]
  \centering
  \begin{tikzpicture}
    \node at (-9.5,4) {$(a)$};
    \draw[step=2cm] (-9,0) grid (-5,4);
    \node at (-2,4) {$(b)$};
    %Grid
    \draw[step=1cm] (0,0) grid (4,4);
    %outer code
    \foreach \x in {0,2,4}
        \foreach \y in {0,...,4}
           \node [v]  (v\x\y) at (\x,\y) {};

     \foreach \x in {1,3}
        \foreach \y in {0,...,4}
          \node [c]  (c\x\y) at (\x,\y) {};

     %inner code
     \foreach \x in {0,...,4}
         \foreach \y in {0,2,4}
           \node [vv]  (\x\y\x\y) at (\x,\y) {};

       \foreach \x in {0,...,4}
         \foreach \y in {1,3}
           \node [cc]  (\x\y\x\y) at (\x,\y) {};

       %code 1
       \draw (-1.33,0) edge[-] (-1.33,4);

         \node [v] (3) at (-1.33,0) {{\color{white}3}};
         \node [v] (2) at (-1.33,2) {{\color{white}2}};
         \node [v] (1) at (-1.33,4) {{\color{white}1}};

         \node [c] (B) at (-1.33,1) {{\color{white}$b$}};
         \node [c] (A) at (-1.33,3) {{\color{white}$a$}};

         %code 2
       \draw (0,-1.33) edge[-] (4,-1.33);

         \node [v] (1) at (0,-1.33) {{\color{white}1}};
         \node [v] (2) at (2,-1.33) {{\color{white}2}};
         \node [v] (3) at (4,-1.33) {{\color{white}3}};

         \node [c] (A) at (1,-1.33) {{\color{white}$a$}};
         \node [c] (B) at (3,-1.33) {{\color{white}$b$}};
  \end{tikzpicture}
  \caption{
  $(a)$ A surface code with smooth boundaries above and below and rough boundaries on the sides.
  $(b)$ The corresponding $3 \times 3$ hypergraph product code.}
  \label{fig:threebythree}
\end{figure}

To break it down further, we could consider each of the constituent parts of the definition above.
First, note that the qubits of VV type, depicted using two concentric circles, represent horizontal edges in the surface code whereas qubits of CC type, depicted using concentric squares, are vertical edges of the surface code.
The action of the $X$ stabilizers on qubits of VV type is defined by the matrix $\1[1,2,3] \otimes \h$ and is depicted in fig. \ref{fig:graphProductX}(a).
Similarly, the action of the $X$ stabilizers on qubits of CC type is defined by $\h^t \otimes \1[A,B]$ and is depicted in fig. \ref{fig:graphProductX}(b).
\begin{figure}[h]
  \centering
    \begin{tikzpicture}
    \node at (-2,4) {$(a)$};
    \draw (0,0) edge[-] (4,0);
    \draw (0,2) edge[-] (4,2);
    \draw (0,4) edge[-] (4,4);
    
    %outer code
    \foreach \x in {0,2,4}
        \foreach \y in {0,2,4} 
           \node [v]  (v\x\y) at (\x,\y) {};

   \foreach \x in {0,2,4}
       \foreach \y in {1,3} 
          \node [classv]  (v\x\y) at (\x,\y) {};

     \foreach \x in {1,3}
        \foreach \y in {0,2,4} 
          \node [c]  (c\x\y) at (\x,\y) {};

      \foreach \x in {1,3}
         \foreach \y in {1,3} 
           \node [classc]  (c\x\y) at (\x,\y) {};

     %inner code
     \foreach \x in {0,...,4}
         \foreach \y in {0,2,4} 
           \node [vv]  (\x\y\x\y) at (\x,\y) {};
           
     %inner code
     \foreach \x in {0,...,4}
         \foreach \y in {1,3} 
           \node [classcc]  (\x\y\x\y) at (\x,\y) {};
          
      %code 1

        \node [v] (3) at (-1.33,0) { {\color{white} 3}};
        \node [v] (2) at (-1.33,2) { {\color{white} 2}};
        \node [v] (1) at (-1.33,4) { {\color{white} 1}};
        
        \node [classc] (A) at (-1.33,3) {$a$};
        \node [classc] (A) at (-1.33,1) {$b$};
      
      %code 2
      \draw (0,-1.33) edge[-] (4,-1.33);
      
        \node [v] (1) at (0,-1.33) { {\color{white} 1}};
        \node [v] (2) at (2,-1.33) { {\color{white} 2}};
        \node [v] (3) at (4,-1.33) { {\color{white} 3}};
      
        \node [c] (A) at (1,-1.33) { {\color{white} $a$}};
        \node [c] (B) at (3,-1.33) { {\color{white} $b$}};
  \end{tikzpicture}
  \qquad\qquad
    \begin{tikzpicture}
      \node at (-2,4) {$(b)$};
      \draw (1,0) edge[-] (1,4);
      \draw (3,0) edge[-] (3,4);
         
      %outer code
       \foreach \x in {1,3}
          \foreach \y in {0,...,4}
            \node [c]  (c\x\y) at (\x,\y) {};
        
        \foreach \x in {0,2,4}
            \foreach \y in {0,...,4} 
               \node [classv]  (v\x\y) at (\x,\y) {};

       %inner code
       \foreach \x in {1,3}
           \foreach \y in {0,2,4}
             \node [vv]  (\x\y\x\y) at (\x,\y) {};

       \foreach \x in {0,2,4}
           \foreach \y in {0,2,4}
             \node [classvv]  (\x\y\x\y) at (\x,\y) {};

         \foreach \x in {1,3}
           \foreach \y in {1,3}
             \node [cc]  (\x\y\x\y) at (\x,\y) {};
        
         \foreach \x in {0,2,4}
             \foreach \y in {1,3} 
                \node [classcc]  (v\x\y) at (\x,\y) {};

         %code 1
         \draw (-1.33,0) edge[-] (-1.33,4);

           \node [v] (3) at (-1.33,0) {{\color{white}3}};
           \node [v] (2) at (-1.33,2) {{\color{white}2}};
           \node [v] (1) at (-1.33,4) {{\color{white}1}};

           \node [c] (B) at (-1.33,1) {{\color{white} $b$}};
           \node [c] (A) at (-1.33,3) {{\color{white} $a$}};

           %code 2
           
           \node [classv] (1) at (0,-1.33) {1};
           \node [classv] (2) at (2,-1.33) {2};
           \node [classv] (3) at (4,-1.33) {3};
           \node [c] (A) at (1,-1.33) {{\color{white} $a$}};
           \node [c] (B) at (3,-1.33) {{\color{white} $b$}};
    \end{tikzpicture}
 \caption{
  The action of $X$ stabilizers on VV qubits on the left and CC qubits on the right.
  Filled nodes represent nodes involved in the stabilizer generator.
 }
 \label{fig:graphProductX}
\end{figure}
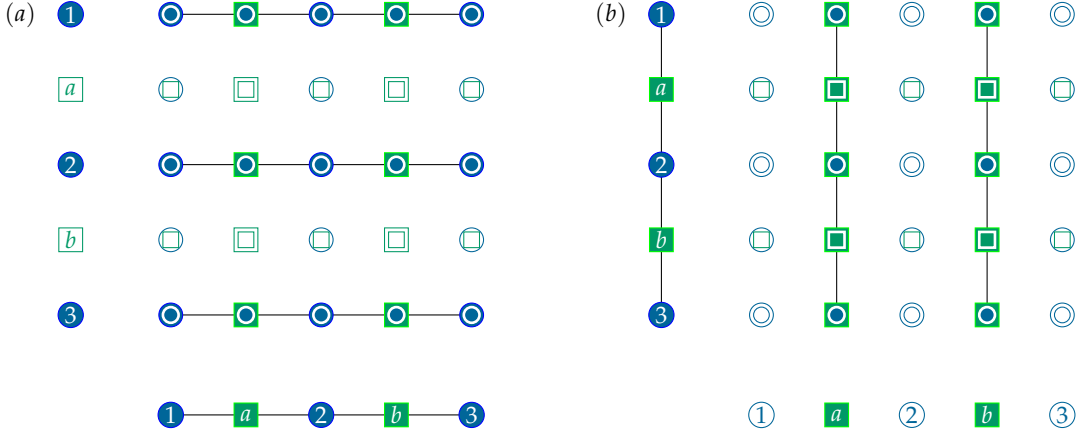

In a similar manner, we can obtain the representation for $Z$ stabilizers.
By overlaying these diagrams, we obtain fig. (\ref{fig:threebythree}).

\section{Punctures}
\label{sec:punctures}

A puncture is a defect on the hypergraph product created by removing both qubits and stabilizers belonging to some (small) portion of the code.
This shall be effected by measuring single-qubit Pauli operators within the interior of the puncture.
This is similar to creating a puncture on the surface code \cite{vuillot2019code}.

\subsection{Definition}
\label{subsec:defPunc}
We begin this section with some notation.
Let $S \subseteq V$ denote a connected subset of variable nodes.
$N = \nbhd(S) \subseteq C$ is its neighborhood and $A = \nbhd^{-1}(S)$ is its ancestor as shown in fig. \ref{fig:identities}(a).
\begin{align*}
  N = \{c \in C : \exists u \in S \text{ such that } (u,c) \in E\} \qquad A = \{c \in C: \forall u \in \Gamma(c), u \in S\}.
\end{align*}
For any set $V' \subseteq V$, we let $\1_{V'}$ denote the projector on $V'$ over $\field_2^{V}$.
We also write $\h_{V'} = \h\1_{V'}$ for the restriction of the parity check matrix to $V'$.

Similarly, let $T \subseteq C$  denote a connected subset of check nodes.
$M = \nbhd(T) \subseteq V$ is its neighborhood and $B = \nbhd^{-1}(T)$ is its inverse neighborhood as shown in fig. \ref{fig:identities}(b).
\begin{align*}
  M = \{v \in V : \exists c \in T \text{ such that } (v,c) \in E\} \qquad B = \{v \in V: \forall c \in \Gamma(v), c \in T\}.
\end{align*}
For any set $C' \subseteq C$, we let $\1_{C'}$ denote the projector on $C'$ over $\field_2^{C}$.
We also write $\h_{C'} = \1_{C'} \h$ for the restriction of the parity check matrix to $C'$.

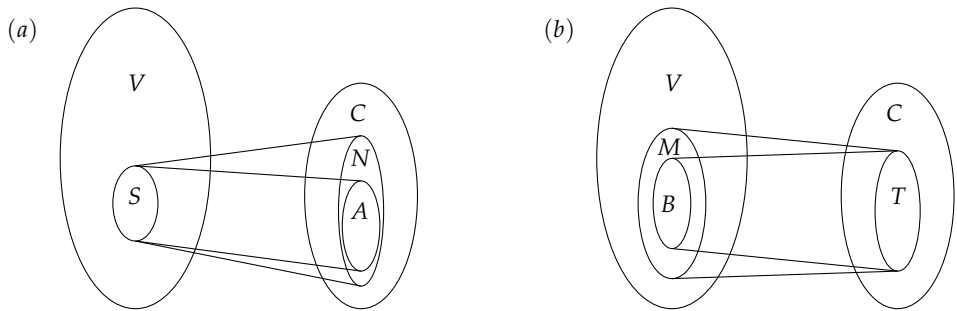
\begin{figure}[h]
  \centering
  \begin{tikzpicture}
      \node at (-1.5,1.7) {$(a)$};
      %V
      \draw (0,0) ellipse (1cm and 2cm);
      %C
      \draw (3,-0.5) ellipse (0.75cm and 1.5cm);
      %int V
      \draw (0,-0.6) ellipse (0.3cm and 0.5cm);
      %int C
      \draw (3,-0.9) ellipse (0.25cm and 0.6cm);
      %int C + bdry C
      \draw (3,-0.7) ellipse (0.3cm and 1cm);
      %int C to int V
      \draw (3,-1.5) edge[-] (0,-1.1) (3,-0.3) edge[-] (0,-0.1);
      %int V to bdry C
      \draw (0,-0.1) edge[-] (3,0.3) (0,-1.1) edge[-] (3,-1.7);
      %labels
      \node[text width=1cm] at (3.35,0) {$N$};
      \node[text width=1cm] at (3.35,-0.7) {$A$};
      \node[text width=1cm] at (0.4,-0.5) {$S$};
      \node[text width=1cm] at (0.4,1) {$V$};
      \node[text width=1cm] at (3.35,0.6) {$C$};
    \end{tikzpicture}
    \qquad\qquad
    \begin{tikzpicture}
      \node at (-1.5,1.7) {$(b)$};
      %V
      \draw (0,0) ellipse (1cm and 2cm);
      %C
      \draw (3,-0.5) ellipse (0.75cm and 1.5cm);
      %bdry + int V
      \draw (0,-0.6) ellipse (0.45cm and 1cm);
      %int V
      \draw (0,-0.6) ellipse (0.25cm and 0.6cm);
      %int C
      \draw (3,-0.7) ellipse (0.3cm and 0.8cm);
      %int V to int C
      \draw (0,-1.2) edge[-] (3,-1.5) (0,0) edge[-] (3,0.1);
      %int C to bdry V
      \draw (3,0.1) edge[-] (0,0.4) (3,-1.5) edge[-] (0,-1.6);
      %Labels
      \node[text width=1cm] at (0.3,0.15) {$M$};
      \node[text width=1cm] at (0.35,-0.6) {$B$};
      \node[text width=1cm] at (3.4,-0.5) {$T$};
      \node[text width=1cm] at (0.4,1) {$V$};
      \node[text width=1cm] at (3.35,0.6) {$C$};
    \end{tikzpicture}
  \caption{Schematic of factor graphs.
  $(a)$ Denotes the subgraph induced by $S$. $N$ is its neighborhood and $A$ is its ancestor.
  $(b)$ Denotes the subgraph induced by $T$. $M$ is its neighborhood and $B$ is its ancestor.}
  \label{fig:identities}
\end{figure}
At this juncture, we make some observations that will be useful later.
\begin{align*}
  \Gamma(A) \subseteq S \implies \Gamma(S^c) \subseteq A^c &\qquad \Gamma(B) \subseteq T \implies \Gamma(T^c) \subseteq B^c \\
  A^c = N^c \union (N\setminus A) &\qquad B^c = M^c \union (M \setminus B)~.
\end{align*}

To create a puncture on the quantum code, we will stop measuring certain stabilizers, and modify others when carving out a portion of the interior.
The punctures will be classified by how stabilizers are modified.

\begin{definition}[\textbf{Smooth puncture}]
\label{def:smooth}
Let $S \subseteq V$ and $T \subseteq C$ be connected sets of variable and check nodes.
Let $N, A$ and $M, B$ denote induced sets as defined above.
A smooth puncture is defined by the stabilizers $\h_X'$ and $\h_Z'$ where
\begin{align*}
  \h_X' = \left(\1_B \otimes \h_S | \h_T^t \otimes \1_A \right) \qquad \h_Z' = \left(\h_T \otimes \1_S | \1_T \otimes \h_S^t \right)~.
\end{align*}
\end{definition}
Note that this is not exactly the graph product of the two subgraphs selected by $T$ and $S$.
This is verified by noting that it is missing elements from $M \times S$.
Rather it follows the hypergraph product construction on the interior nodes of both graphs.
For simplicity, we abuse notation and refer to this as the graph product  $T \times S$.

The defining trait of a smooth puncture is that $Z$ stabilizers are not broken across its boundary.
We refer to the schematic in fig. \ref{fig:schematic}(a) below.
Such stabilizers would have to be of the form $(c,v)$ for some check $c \in C$ and $v \in V$ where either the check node $c$ or the variable node $v$ are in the boundary of $T$ or $S$ respectively.
This does not exist by construction -- check nodes in $T$ are contained entirely within the puncture, as are variable nodes in $S$.
The internal qubits of a smooth puncture are the nodes $B \times S \union T \times A$ and will be measured in the $Z$ basis to create the puncture.
The qubits on the boundary of a smooth puncture correspond to the sets
\begin{align*}
  (M\setminus B) \times S \union T \times (N \setminus A)~.
\end{align*}
The $X$ stabilizers on the boundary of a smooth puncture correspond to the sets
\[
  (M \setminus B) \times N \union M \times (N \setminus A)~.
\]
Their support on the interior of the puncture, $B \times S \union T \times A$, is removed.
Therefore $X$ stabilizers on the boundary of a smooth puncture are broken.

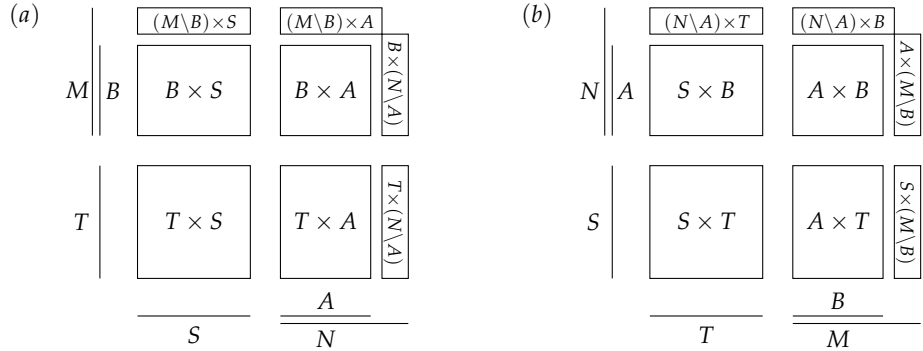
\begin{figure}[h]
  \centering
  \begin{tikzpicture}
    \draw (0,0) rectangle (1.5,1.5); % T x S
    \node at (0.75,0.75) {$T \times S$};
    \draw (1.9,1.9) rectangle (3.1,3.1); % B x A
    \node at (2.5,2.5) {$B \times A$};
    \draw (1.9,3.25) rectangle (3.25,3.6); % (M \ B) x A top
    \node at (2.55,3.4) {{$\scriptstyle (M \setminus B) \times A$}};
    \draw (3.25,1.9) rectangle (3.6,3.25); % B x (N \ A) right
    \node[rotate=-90] at (3.4,2.6) {{$\scriptstyle B \times (N\setminus A)$}};
    \draw (1.9,0) rectangle (3.1,1.5); % T x A
    \node at (2.5,0.75) {$T \times A$};
    \draw (0,1.9) rectangle (1.5,3.1); % B x S
    \node at (0.75,2.5) {$B \times S$};
    \draw (-0.5,0) edge[-] (-0.5,1.5); % T
    \node at (-0.75,0.75) {$T$};
    \draw (-0.5,1.9) edge[-] (-0.5,3.1); % B
    \node at (-0.325,2.5) {$B$};
    \draw (-0.6,1.9) edge[-] (-0.6,3.6); % M
    \node at (-0.8,2.5) {$M$};
    \draw (0,-0.5) edge[-] (1.5,-0.5); % S
    \node at (0.75,-0.75) {$S$};
    \draw (1.9,-0.5) edge[-] (3.1,-0.5); % A
    \node at (2.5,-0.3) {$A$};
    \draw (1.9,-0.6) edge[-] (3.6,-0.6); % N
    \node at (2.5,-0.8) {$N$};
    \draw (0,3.25) rectangle (1.5,3.6); % (M \ B) x S top
    \node at (0.75,3.4) {{$\scriptstyle (M \setminus B) \times S$}};
    \draw (3.25,0) rectangle (3.6,1.5); % T x (N \ A) right
    \node[rotate=-90] at (3.4,0.75) {{$\scriptstyle T \times (N\setminus A)$}};
    \node at (-1.5,3.5) {$(a)$};
  \end{tikzpicture}
  \qquad
  \qquad
  \begin{tikzpicture}
    \draw (0,0) rectangle (1.5,1.5); % S x T
    \node at (0.75,0.75) {$S \times T$};
    \draw (1.9,1.9) rectangle (3.1,3.1); % A x B
    \node at (2.5,2.5) {$A \times B$};
    \draw (1.9,3.25) rectangle (3.25,3.6); % A x (M \ B) top
    \node at (2.55,3.4) {{$\scriptstyle (N \setminus A) \times B$}};
    \draw (3.25,1.9) rectangle (3.6,3.25); % (N \ A) x B right
    \node[rotate=-90] at (3.4,2.6) {{$\scriptstyle A \times (M \setminus B)$}};
    \draw (1.9,0) rectangle (3.1,1.5); % A x T
    \node at (2.5,0.75) {$A \times T$};
    \draw (0,1.9) rectangle (1.5,3.1); % S x B
    \node at (0.75,2.5) {$S \times B$};
    \draw (-0.5,0) edge[-] (-0.5,1.5); % S
    \node at (-0.75,0.75) {$S$};
    \draw (-0.5,1.9) edge[-] (-0.5,3.1); % A
    \node at (-0.325,2.5) {$A$};
    \draw (-0.6,1.9) edge[-] (-0.6,3.6); % N
    \node at (-0.8,2.5) {$N$};
    \draw (0,-0.5) edge[-] (1.5,-0.5); % T
    \node at (0.75,-0.75) {$T$};
    \draw (1.9,-0.5) edge[-] (3.1,-0.5); % B
    \node at (2.5,-0.3) {$B$};
    \draw (1.9,-0.6) edge[-] (3.6,-0.6); % M
    \node at (2.5,-0.8) {$M$};
    \draw (0,3.25) rectangle (1.5,3.6); % S x (M \ B) top
    \node at (0.75,3.4) {{$\scriptstyle (N \setminus A) \times T$}};
    \draw (3.25,0) rectangle (3.6,1.5); % (N \ A) x T right
    \node[rotate=-90] at (3.4,0.75) {{$\scriptstyle S \times (M\setminus B)$}};
    \node at (-1.5,3.5) {$(b)$};
  \end{tikzpicture}
  \caption{Schematic for the puncture.
  The subgraphs selected by $T$ and $S$ are flattened and placed below and to the left.
  Their product is represented using four quadrants.
  The qubits are in the North-West and South-East quadrants.
  The $Z$ stabilizers are in the South-West quadrant.
  The $X$ stabilizers are in the North-East quadrant.
  (a) Smooth puncture defined by $T$ and $S$.
  The $Z$ stabilizers $T\times S$ are completely within the puncture and are thus not broken.
  (b) Rough puncture defined by $S$ and $T$.
  The $X$ stabilizers $S\times T$ are completely within the puncture and are thus not broken.}
  \label{fig:schematic}
\end{figure}

In a similar manner, a rough puncture can be created by interchanging the roles of $T$ and $S$ on the graphs.
It is formally defined as follows.
\begin{definition}[\textbf{Rough puncture}]
\label{def:rough}
Let $S \subseteq V$ and $T \subseteq C$ be connected sets of variable and check nodes.
Let $N, A$ and $M, B$ denote induced sets as defined above.
A rough puncture is defined by the stabilizers $\h_X'$ and $\h_Z' where
$\begin{align*}
  \h_X' = \left(\1_S \otimes \h_T | \h_S^t \otimes \1_T \right) \qquad \h_Z' = \left(\h_S \otimes \1_B | \1_A \otimes \h_T^t \right)~.
\end{align*}
\end{definition}
Abusing notation, this can be thought of as a graph product $S \times T$.

The defining trait of a rough puncture is that $X$ stabilizers are not broken across the boundary.
We refer to the schematic in fig. \ref{fig:schematic} (b) for the following discussion.
Such stabilizers would have to be of the form $(v,c)$ for some check $c \in C$ and $v \in V$ where either the check node $c$ or the variable node $v$ are in the boundary of $S$ or $T$ respectively.
For the same reasons as before, such nodes do not exist.
The internal qubits of a rough puncture are the nodes $S \times B \union A \times T$ and will be measured in the $X$ basis to create the puncture.
The qubits on the boundary of a rough puncture correspond to the sets
\begin{align*}
  S \times (M\setminus B) \union (N \setminus A) \times T~.
\end{align*}
The $Z$ stabilizers on the boundary of a rough puncture correspond to the sets
\[
  N \times (M \setminus B) \union (N \setminus A) \times M~.
\]
Their support on the interior of the puncture, $S \times B \union A \times T$ is removed.
Hence $Z$ stabilizers on the boundary of a rough puncture are broken.

To deform $\Q$, we remove the edges that are contained in a puncture.
Algebraically, it is described by $\h_X + \h_X'$ and $\h_Z + \h_Z'$.
This code is itself not a hypergraph product code but is clearly LDPC.
We are merely puncturing an LDPC code; by removing edges, we cannot increase the weight of checks.

We first show that the code defined this way obeys the desired commutation relations.
Before doing so, it is useful to note the following identity:
\begin{align}
\label{eq:identity}
  \h_B = \h \1_B = \1_T \h \1_B \qquad \h_A = \1_A \h = \1_A \h \1_S~.
\end{align}
This follows from the fact that the neighborhoods of sets $B$ and $A$ are completely contained within the sets $T$ and $S$ by definition.

\begin{lemma}
  \label{lem:validCommutation}
  The punctured code forms a valid stabilizer code.
\end{lemma}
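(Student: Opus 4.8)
The plan is to show directly that $\h_X'$ and $\h_Z'$ (for a smooth puncture; the rough case is symmetric) satisfy the CSS commutation condition $\h_X' (\h_Z')^t = 0$ over $\field_2$, and to verify that the puncture defect applied to the full code, i.e.\ the code with check matrices $\h_X+\h_X'$ and $\h_Z+\h_Z'$, also satisfies $(\h_X+\h_X')(\h_Z+\h_Z')^t = 0$. Since we already know $\h_X\h_Z^t = 0$ for the unpunctured hypergraph product code, and since (by the edge-removal description) $\h_X'$ is the restriction of $\h_X$ to the edges inside the puncture and likewise for $\h_Z'$, it suffices to establish $\h_X'(\h_Z')^t = 0$ and the two cross terms $\h_X(\h_Z')^t = \h_X'(\h_Z')^t$ and $\h_X'\h_Z^t = \h_X'(\h_Z')^t$; then all four terms in the expansion cancel pairwise. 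So the computation really reduces to the single identity $\h_X'(\h_Z')^t = 0$ together with a bookkeeping argument that the overlap of $\h_X$ with the punctured region equals $\h_X'$ and similarly for $\h_Z$.

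For the core identity, I would write out the block product. With $\h_X' = \left(\1_B \otimes \h_S \mid \h_T^t \otimes \1_A\right)$ and $\h_Z' = \left(\h_T \otimes \1_S \mid \1_T \otimes \h_S^t\right)$, we have
\begin{align*}
  \h_X'(\h_Z')^t = (\1_B \otimes \h_S)(\h_T^t \otimes \1_S) + (\h_T^t \otimes \1_A)(\1_T \otimes \h_S)
  = (\1_B\h_T^t)\otimes(\h_S\1_S) + (\h_T^t\1_T)\otimes(\1_A\h_S)~.
\end{align*}
Now I invoke the identity \eqref{eq:identity}: $\h_B = \h\1_B = \1_T\h\1_B$ gives $\1_B\h^t\1_T = \1_B\h^t = (\h\1_B)^t$ wait — more carefully, I want $\1_B\h_T^t = \1_B(\1_T\h)^t = \1_B\h^t\1_T$, and since the neighborhood of $B$ lies in $T$ this equals $\1_B\h^t = \h_B^t$; similarly $\h_T^t\1_T = (\1_T\h)^t\1_T = \h^t\1_T = \h_T^t$ is immediate, and $\1_A\h_S = \1_A\h\1_S = \1_A\h = \h_A$ since the neighborhood of $A$ lies in $S$. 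So the two terms become $\h_B^t\otimes\h_S + \h_T^t\otimes\h_A$. To kill this, I need to identify $\h_B^t$ with $\h_T^t$ restricted appropriately and $\h_S$ with $\h_A$-compatible pieces — this is exactly where the neighborhood/ancestor structure does its work: on the relevant index set, $\h_B = \h_T$ (rows of $\h$ supported on $T$ acting on columns in $B$) and the mismatch $\1_B$ versus $\1_A$, $\h_S$ versus $\h_A$ is absorbed because any row of $\h$ touching $B$ lies in $T$ and any column of $\h^t$ touching $A$ lies in $S$. Pushing the $\1$-projectors around using \eqref{eq:identity} until the two tensor terms coincide, they cancel mod $2$.

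The main obstacle I anticipate is the careful index-set bookkeeping in the cross terms $\h_X(\h_Z')^t$ and $\h_X'\h_Z^t$: I need to argue that the support of $\h_X$ overlapping the punctured qubit set $B\times S \cup T\times A$ is precisely $\h_X'$ (as a matrix on those columns), which requires checking that no $X$-stabilizer of the original code has support on an interior qubit through an edge that survives in $\h_X + \h_X'$ — i.e.\ the "broken" versus "intact" classification of stabilizers from the discussion after Definition~\ref{def:smooth} must be turned into exact matrix identities. Concretely I would show $\h_X\,\1_{B\times S \cup T\times A} = \h_X'$ by expanding $\h_X = (\1_V\otimes\h \mid \h^t\otimes\1_C)$ and restricting columns, using that $\1_V\otimes\h$ restricted to columns $B\times S$ gives $\1_B\otimes\h_S$ because rows of $\h$ hitting $S$-columns that also involve $B$-rows... — this is the delicate point, and the containments $\Gamma(A)\subseteq S$, $\Gamma(B)\subseteq T$ recorded just before Definition~\ref{def:smooth} are exactly what make the restrictions collapse to the punctured matrices. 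Once those restriction identities are in hand, $(\h_X+\h_X')(\h_Z+\h_Z')^t = \h_X\h_Z^t + \h_X\h_Z'^t + \h_X'\h_Z^t + \h_X'\h_Z'^t = 0 + \h_X'\h_Z'^t + \h_X'\h_Z'^t + \h_X'\h_Z'^t = \h_X'\h_Z'^t = 0$ over $\field_2$, completing the proof; the rough puncture follows by exchanging the roles of $S,T$ and of $X,Z$.
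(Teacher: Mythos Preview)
Your overall strategy---expand $(\h_X+\h_X')(\h_Z+\h_Z')^t$ into four block products and use the neighborhood identities \eqref{eq:identity}---is exactly the paper's approach. But the execution goes wrong at the step where you try to ``kill'' $\h_X'(\h_Z')^t$. Your computation
\[
  \h_X'(\h_Z')^t = \h_B^t\otimes\h_S + \h_T^t\otimes\h_A
\]
is correct, but this expression is \emph{not} zero: $\h_B^t$ and $\h_T^t$ are genuinely different matrices (one kills rows outside $B$, the other kills columns outside $T$), and no amount of pushing projectors around will make $\h_B^t\otimes\h_S$ coincide with $\h_T^t\otimes\h_A$. Your hand-wave ``the mismatch is absorbed because any row of $\h$ touching $B$ lies in $T$'' does not establish equality of the tensor factors.

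Relatedly, your claim that both cross terms equal $\h_X'(\h_Z')^t$ is false. The restriction identity $\h_X\,\1_{B\times S\cup T\times A}=\h_X'$ \emph{does} hold (your derivation of it is fine), but the analogous identity for $\h_Z$ fails: $\h_Z\,\1_{B\times S\cup T\times A}=(\h_B\otimes\1_S\mid\1_T\otimes\h_A^t)\neq\h_Z'$. The smooth puncture is asymmetric---$\h_X'$ is the restriction of $\h_X$ to interior \emph{columns}, whereas $\h_Z'$ is the restriction of $\h_Z$ to interior \emph{rows} $T\times S$, and its column support spills onto the boundary $M\times S\cup T\times N$. So the argument ``$\h_X(\h_Z')^t=\h_X'(\h_Z')^t$ because $\h_Z'$ only sees interior qubits'' breaks down.

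The correct bookkeeping is: compute $\h_X(\h_Z')^t$ directly to get $\h_T^t\otimes\h_S+\h_T^t\otimes\h_S=0$ (the two blocks are literally identical), then observe that $\h_X'\h_Z^t$ and $\h_X'(\h_Z')^t$ both equal $\h_B^t\otimes\h_S+\h_T^t\otimes\h_A$ (the second one via \eqref{eq:identity}), so \emph{those} two cancel mod~$2$. The four terms then sum to zero, but not by the pairing you proposed.
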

\begin{proof}
 Consider a smooth puncture created by two subsets $T \subseteq C$ and $S\subseteq V$.
 The case of a rough puncture follows similarly.
 The punctured code has stabilizers
 \begin{align*}
   \h_X + \h_X' \qquad \h_Z + \h_Z'~.
 \end{align*}
 We already know that $\h_X \h_Z^t = 0 \pmod{2}$.
 We need to check the other relations.
 \begin{align*}
  \h_X (\h_Z')^t &= (\1_{V} \otimes \h | \h^t \otimes \1_{C}) \left[(\h_{T} \otimes \1_{S} | \1_{T} \otimes \h_{S}^t)\right]^t\\
                 &= \h_T^t \otimes \h_S + \h_T^t \otimes \h_S  = 0 \pmod{2} \\
   \h_X' (\h_Z)^t &= (\1_{B} \otimes \h_{S} | \h_{T}^t \otimes \1_{A}) \left[(\h \otimes \1_V | \1_C \otimes \h^t)\right]^t\\
                  &=  \h_{B}^t \otimes \h_S + \h_T^t \otimes \h_A\\
   \h_X' (\h_Z')^t &= (\1_{B} \otimes \h_{S} | \h_{T}^t \otimes \1_{A})\left[(\h_{T} \otimes \1_S | \1_T \otimes \h_S^t)\right]^t\\
                   &= \h_B^t \otimes \h_S + \h_T^t \otimes \h_A~.
 \end{align*}
 In the last line, we have used the identity in eq. \ref{eq:identity}.
 Inspecting the last two equations, we find that each term appears twice.
 Therefore the sum of all the terms in these two equations is $0$ mod $2$ as desired.
\end{proof}

For convenience, we have summarized this section in table \ref{tab:summary}.
\begin{table}[h]
\def\arraystretch{2}
\begin{tabular}{|c|c|c|}
  \hline
          & Smooth puncture                                         & Rough puncture\\
  \hline
  $\h_X'$ & $\left(\1_B \otimes \h_S | \h_T^t \otimes \1_A \right)$ & $\left(\1_S \otimes \h_T | \h_S^t \otimes \1_T \right)$\\
  \hline
  $\h_Z'$ & $\left(\h_T \otimes \1_S | \1_T \otimes \h_S^t \right)$ & $\left(\h_S \otimes \1_B | \1_A \otimes \h_T^t \right)$\\
  \hline
  Internal qubits & $(B \times S) \union (T \times A)$ & $(S \times B) \union (A \times T)$\\
  \hline
  Boundary qubits & $(M\setminus B)\times S \union T\times (N \setminus A)$ & $S\times (M\setminus B)\union(N \setminus A)\times T$ \\
  \hline
  Boundary $X$ stabilizers & $(M \setminus B) \times N \union M \times (N \setminus A)$ & $\emptyset$ \\
  \hline
  Boundary $Z$ stabilizers & $\emptyset$ & $N \times (M \setminus B) \union (N \setminus A) \times M$\\
  \hline
\end{tabular}
\caption{Summary of properties of punctures. We assume that $S \subseteq V$ and $T \subseteq C$ are (connected) subsets of variable and check nodes.
$S$ induces the sets $N$ and $A$, its neighborhood and ancestor, and similarly $T$ induces the sets $M$ and $B$.}
\label{tab:summary}
\end{table}

\subsection{Example: Surface code}
We illustrate these ideas with the surface code.
This code is formed using two $[5,1,5]$ repetition codes.
We choose the sets $T = \{b,c\}$ and $S = \{3,4\}$.
For the sake of completeness, all the neighbors and ancestors are listed below.
These subsets, along with the quantum code they generate, are shown in fig.~(\ref{fig:PQ}).
\begin{align*}
  \begin{matrix}
    T = \{b,c\} & M = \{2,3,4\} &  B = \{3\}   \\
    S = \{3,4\}    & N = \{b,c,d\} & A = \{c\} 
  \end{matrix}~.
\end{align*}

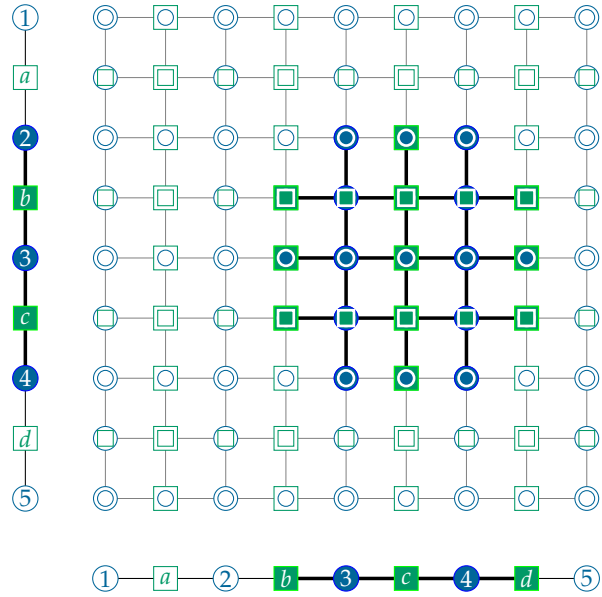
\begin{figure}[h]
  \centering
  \begin{tikzpicture}[scale=0.8]
    %Grid
    \draw[step=2cm,gray,very thin] (0,0) grid (8,8);
    \draw[step=2cm,gray,very thin,xshift=1cm,yshift=1cm] (-0.9,-1) grid (7,7);
    \draw (4,2) edge[-,line width=1.33pt] (4,6);
    \draw (5,2) edge[-,line width=1.33pt] (5,6);
    \draw (6,2) edge[-,line width=1.33pt] (6,6);
    
    \draw (3,3) edge[-,line width=1.33pt] (7,3);
    \draw (3,4) edge[-,line width=1.33pt] (7,4);
    \draw (3,5) edge[-,line width=1.33pt] (7,5);

    %outer code
    \foreach \x in {0,2,4,6,8}
        \foreach \y in {0,...,8}
           \node [classv]  (v\x\y) at (\x,\y) {};

     \foreach \x in {1,3,5,7}
        \foreach \y in {0,...,8} 
          \node [classc]  (c\x\y) at (\x,\y) {};

     %inner code
     \foreach \x in {0,...,8}
         \foreach \y in {0,2,4,6,8} 
           \node [classvv]  (\x\y\x\y) at (\x,\y) {};

       \foreach \x in {0,...,8}
         \foreach \y in {1,3,5,7} 
           \node [classcc]  (\x\y\x\y) at (\x,\y) {};
           
     %outer code
     \foreach \x in {4,6}
         \foreach \y in {2,3,4,5,6}
            \node [v]  (v\x\y) at (\x,\y) {};

      \foreach \x in {3,5,7}
         \foreach \y in {3,4,5} 
           \node [c]  (c\x\y) at (\x,\y) {};

     \foreach \x in {5}
        \foreach \y in {2,6} 
          \node [c]  (c\x\y) at (\x,\y) {};

      %inner code
      \foreach \x in {3,4,5,6,7}
          \foreach \y in {3,4,5} 
            \node [vv]  (\x\y\x\y) at (\x,\y) {};
      
      \foreach \x in {4,5,6}
          \foreach \y in {2,6} 
            \node [vv]  (\x\y\x\y) at (\x,\y) {};

        \foreach \x in {3,4,5,6,7}
          \foreach \y in {3,5} 
            \node [cc]  (\x\y\x\y) at (\x,\y) {};

     %code 1
     \draw (-1.33,0) edge[-] (-1.33,8);
     \draw (-1.33,2) edge[-,line width=1.33pt] (-1.33,6);
      
       \node[classv] (5) at (-1.33,0) {5};
       \node[v] (4) at (-1.33,2) {{\color{white} 4}};
       \node [v] (3) at (-1.33,4) { {\color{white} 3}};
       \node[v] (2) at (-1.33,6) { {\color{white} 2}};
       \node [classv] (1) at (-1.33,8) { 1};
       
      \node[classc] (D) at (-1.33,1) { $d$};
       \node[c] (C) at (-1.33,3) { {\color{white} $c$}};
       \node [c] (B) at (-1.33,5) { {\color{white}  $b$}};
       \node [classc] (A) at (-1.33,7) { $a$};
           
       %code 2
       \draw (0,-1.33) edge[-] (8,-1.33);
       \draw (3,-1.33) edge[-,line width=1.33pt] (7,-1.33);
       
         \node [classv] (1) at (0,-1.33) { 1};
         \node [classv] (2) at (2,-1.33) { 2};
         \node [v] (3) at (4,-1.33) { {\color{white} 3}};
         \node [v] (4) at (6,-1.33) { {\color{white} 4}};
         \node [classv] (5) at (8,-1.33) { 5};
       
         \node [classc] (A) at (1,-1.33) { $a$};
         \node [c] (B) at (3,-1.33) { {\color{white} $b$}};
         \node [c] (C) at (5,-1.33) { {\color{white} $c$}};
         \node [c] (D) at (7,-1.33) { {\color{white} $d$}};
  \end{tikzpicture}
  \caption{Subcode created by $T \times S$.
           Shaded nodes are part of the puncture.}
  \label{fig:PQ}
\end{figure}

In the algebraic description, we let $\h_T$ and $\h_S$ be the matrices correspond to the subcodes corresponding to the subsets selected above.
\begin{align*}
 \h_T = \1\{b,c\} \h = \begin{pmatrix}
                         0& 0& 0& 0&0  \\
                         0& 1& 1& 0&0  \\
                         0& 0& 1& 1&0  \\
                         0& 0& 0& 0&0
                       \end{pmatrix}
       \qquad
\h_S = \h \1\{3,4\} = \begin{pmatrix}
                       0& 0& 0& 0& 0\\
                       0& 0& 1& 0& 0 \\
                       0& 0& 1& 1& 0\\
                       0& 0& 0& 1&0
                     \end{pmatrix}
\end{align*}

The $X$ and $Z$ stabilizers that are removed from the puncture can then be described as
\begin{align*}
\h_X' = \left(\1_B \otimes \h_S | \h_T^t \otimes \1_A \right) \qquad \h_Z' = \left(\h_T \otimes \1_S | \1_T \otimes \h_S^{t} \right)~.
\end{align*}

\subsection{Logical Pauli operators for punctures}
\label{subsec:logicalPaulis}

Mirroring the surface code, punctured hypergraph product codes support two types of logical operators - {\em loop-type operators} that exist only on the boundary of the puncture and {\em chain-type operators} that are supported on the boundary of the puncture and also extend into the rest of the code.
For the rest of this section, we let $T \subseteq C$ and $S \subseteq V$ be some connected subsets.
The sets $M \subseteq V$, $N \subseteq C$ are the respective neighborhoods, and $B \subseteq V$, $A \subseteq C$ are the respective ancestors.
We shall derive the form of the logical operators for a smooth puncture defined as above.
The logical operators of a rough puncture will follow by exchanging the roles of $S$ and $T$.

Before proceeding, we impose certain constraints on how $S$ and $T$ are chosen.
These constraints apply to smooth and rough punctures both.
The constraints stipulate that certain subcodes associated with $S$ and $T$ are \emph{correctable}, i.e. if these portions of the code were erased then we do not lose any codewords of the underlying code in this process.
    
\begin{definition}[Correctability]
  \label{def:correctable}
  A puncture defined by $T \subseteq C$ and $S \subseteq V$ is \emph{correctable} if:
  \begin{enumerate}
    \item \label{it:interior} $\ker{\h_N} = \ker{\h_M^t} = \emptyset$.
    \item \label{it:interior2} $\ker{\h_T} = \ker{\h_S^t} = \emptyset$.
  \end{enumerate}
\end{definition}

These conditions imply that certain subsets of the parity check matrix also have trivial kernels.
\begin{lemma}
  \label{lem:correctable-implication}
  The correctability condition implies the following relations:
  \begin{enumerate}
    \item $\ker{\h_S} = \ker{\h_T^t} = \emptyset$.
    \item $\ker{\h_A^t} = \ker{\h_B} = \emptyset$.
  \end{enumerate}
\end{lemma}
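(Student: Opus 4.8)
The plan is to exhibit each of the four kernels appearing in the conclusion as the kernel of a \emph{column-restriction} of one of the two matrices whose kernels are assumed trivial in Definition~\ref{def:correctable}; since deleting columns can only shrink a kernel, triviality is then inherited. I would treat the two items of the lemma separately, and within each item the two equalities are mirror images of one another under the exchange $\h\leftrightarrow\h^t$, $V\leftrightarrow C$, $S\leftrightarrow T$, $N\leftrightarrow M$, $A\leftrightarrow B$, so it is enough to argue one of each and invoke the symmetry for the other.

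For item~1 the only structural fact needed is $N=\nbhd(S)$. Every variable in $S$ has all of its incident checks inside $N$, so the columns of $\h$ indexed by $S$ are supported on the rows indexed by $N$; hence $\h_S=\h\1_S=\1_N\h\1_S$, which is exactly $\h_N=\1_N\h$ with every column outside $S$ removed (and $S$ lies inside the column set of $\h_N$ because each node of $S$ has a neighbour in $N$). Extending a kernel vector of $\h_S$ by zeros therefore yields a kernel vector of $\h_N$, and condition~\ref{it:interior} forces $\ker{\h_S}=\emptyset$. The statement $\ker{\h_T^t}=\emptyset$ is the transpose of this, using $M=\nbhd(T)$ in place of $N=\nbhd(S)$.

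For item~2 the key ingredient is the identity~(\ref{eq:identity}), $\h_B=\1_T\h\1_B$ and $\h_A=\1_A\h\1_S$, valid because $\nbhd(B)\subseteq T$ and $\nbhd(A)\subseteq S$. The first identity realises $\h_B$ as $\h_T$ with the columns outside $B$ deleted --- here one uses $B\subseteq M$, which is immediate since any node of $B$ has a neighbour and that neighbour lies in $T$, hence in $\nbhd(T)=M$ --- so $\ker{\h_B}$ embeds into $\ker{\h_T}$ and vanishes by condition~\ref{it:interior2}. Transposing the second identity gives $\h_A^t=\1_S\h^t\1_A$, which is $\h_S^t$ with the columns outside $A$ deleted (using $A\subseteq N$, likewise immediate from $\nbhd(A)\subseteq S$), so $\ker{\h_A^t}$ embeds into $\ker{\h_S^t}$ and vanishes by condition~\ref{it:interior2}.

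I do not expect a substantive obstacle; once the restriction conventions are unwound, the lemma is bookkeeping. The only point demanding care is making the column-restriction claims precise: in each case one must check both that zeroing the indicated rows of $\h$ (or $\h^t$) leaves the product unchanged --- this is precisely where $N=\nbhd(S)$, $M=\nbhd(T)$, and the identity~(\ref{eq:identity}) enter --- and that the column set of the smaller matrix is contained in that of the larger, which reduces to the elementary containments $A\subseteq N$ and $B\subseteq M$. The complementary relations $A^c=N^c\union(N\setminus A)$, $B^c=M^c\union(M\setminus B)$ and $\Gamma(S^c)\subseteq A^c$, $\Gamma(T^c)\subseteq B^c$ recorded before the definition are convenient shortcuts for the same set-theoretic facts but are not strictly required.
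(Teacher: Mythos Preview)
Your proposal is correct and takes essentially the same approach as the paper: the paper writes the decomposition $\h_N=\h\1_S+\1_N\h\1_{S^c}$ and observes that a vector supported on $S$ in $\ker{\h_S}$ is killed by both summands, which is exactly your ``extend by zeros'' argument phrased additively, and then invokes the analogous identities $\h_M^t=\h^t\1_T+\1_M\h^t\1_{T^c}$, $\h_T=\h\1_B+\1_T\h\1_{B^c}$, $\h_S^t=\h^t\1_A+\1_S\h^t\1_{A^c}$ for the remaining three cases. Your explicit check of the containments $B\subseteq M$ and $A\subseteq N$ is a harmless extra bit of care that the paper leaves implicit.
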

\begin{proof}
  Both claims follow identical proofs.
  We show one of them here and then outline the proof for the rest.
  Observe that we may write $\h_N$ as the sum of matrices supported only on $S$ and $S^c$ respectively:
  \begin{align}
    \h_N = \1_N\h = \h\1_S + \1_N\h\1_{S^c}~.
  \end{align}
  If there existed an element $\beta \in \field_2^V$ such that $\supp{\beta} \in S$, and $\beta \in \ker{\h_S}$, then this would violate the condition that $\ker{\h_N}$ is empty.

  Similarly, the other claims follow once we note the identities listed below:
  \begin{align*}
    \h_M^t &= \1_M\h^t = \h^t\1_T + \1_M\h\1_{T^c}\\
    \h_T &= \1_T\h = \h\1_B + \1_T \h \1_{B^c}\\
    \h_S^t &= \1_S\h^t = \h^t \1_A + \1_S \h \1_{A^c}~.
  \end{align*}
  This completes the proof.
\end{proof}

Henceforth we shall only consider punctures that are correctable.
To foreshadow the next few results, we will argue that this condition can be used together with the cleaning lemma \cite{bravyi2009no} to guarantee that the embedded logical operators of the quantum code are unaffected by the puncture.

\subsubsection{Logical Z operators}

We first establish that creating a puncture will not affect the embedded logical $Z$ operators.
\begin{lemma}
  \label{lem:interiorClean}
  There are no embedded $Z$ logical operators supported within the interior of the puncture.
\end{lemma}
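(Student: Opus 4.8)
The plan is to show that the interior qubits of a smooth puncture, namely the set $B \times S \cup T \times A$, form a \emph{correctable region} for the embedded $Z$ logical operators, and then invoke the cleaning lemma of \cite{bravyi2009no}. Concretely, the cleaning lemma says that a logical operator can be supported away from any correctable region; equivalently, if a region is correctable then no nontrivial logical operator is supported \emph{entirely within} it. So the goal reduces to proving that $B \times S \cup T \times A$ is correctable, i.e. that erasing these qubits does not destroy any codeword, which in turn amounts to showing there is no nontrivial element of the (embedded) $Z$ logical space supported only on $B \times S \cup T \times A$.

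First I would recall from Lemma \ref{lem:embeddedLogicals} that an embedded $Z$ logical operator is (up to stabilizers) of the form $\left( x \otimes y \mid \bz_{m^2}\right)$ with $x \in \field_2^n/\rs{\h}$ and $y \in \ker{\h}$, or of the form $\left(\bz_{n^2} \mid z \otimes w\right)$ with $z \in \ker{\h^t}$ and $w \in \field_2^m/\rs{\h^t}$. Take such an operator and suppose its support lies entirely in the interior $B \times S \cup T \times A$. For the VV-part $x \otimes y$, being supported in $B \times S$ forces (after possibly adding a stabilizer to move it, using that $B$ and $S$ are the relevant correctable pieces) every "row" of the tensor to lie in $\ker{\h_S}$ or the relevant restricted kernel; by Lemma \ref{lem:correctable-implication} we have $\ker{\h_S} = \ker{\h_B} = \emptyset$, which kills the VV-part. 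Symmetrically, for the CC-part $z \otimes w$ supported in $T \times A$, the restriction arguments combined with $\ker{\h_T^t} = \ker{\h_A^t} = \emptyset$ (again Lemma \ref{lem:correctable-implication}) kill the CC-part. Hence the operator is trivial.

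The delicate point — and what I expect to be the main obstacle — is handling the freedom to add stabilizers: "supported within the interior" should be interpreted up to multiplication by stabilizers of the \emph{original} code $\Q$, and I must be careful that the stabilizer used to localize the operator does not itself spill outside the interior in a way that invalidates the restriction argument. The clean way to do this is to phrase everything through the cleaning lemma directly: show that the interior region $R = B \times S \cup T \times A$ supports no nontrivial logical operator by exhibiting, for each candidate logical representative, a correction (a stabilizer or destabilizer) confined appropriately, using the identities $\h_B = \1_T \h \1_B$, $\h_A = \1_A \h \1_S$ from eq. \ref{eq:identity} together with the correctability hypotheses. The decomposition $\h_N = \h\1_S + \1_N\h\1_{S^c}$ (and its transposes) from the proof of Lemma \ref{lem:correctable-implication} is exactly the tool that lets one argue that a tensor factor supported on $S$ and annihilated by the ambient check must already be annihilated by $\h_S$, giving the contradiction with emptiness of the restricted kernels.

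In summary, the proof structure is: (i) reduce to showing $R = B \times S \cup T \times A$ is a correctable region via the cleaning lemma; (ii) write a general embedded $Z$ logical as a VV-tensor plus a CC-tensor using Lemma \ref{lem:embeddedLogicals}; (iii) for a representative supported in $R$, apply the restriction/decomposition identities to conclude each tensor factor lies in one of $\ker{\h_S}, \ker{\h_B}, \ker{\h_T^t}, \ker{\h_A^t}$; (iv) invoke Lemma \ref{lem:correctable-implication} to see all these are empty, so the operator is a product of stabilizers, i.e. trivial as a logical. The only real subtlety is step (iii), where the bookkeeping of which restricted parity-check matrix applies on which side of the tensor product must be done carefully, but it is entirely parallel to the kernel-emptiness propagation already carried out in Lemma \ref{lem:correctable-implication}.
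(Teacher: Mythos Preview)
Your proposal is correct and follows essentially the same route as the paper: take an embedded $Z$ logical in the tensor form of Lemma~\ref{lem:embeddedLogicals}, assume it is supported on $B\times S\cup T\times A$, slice along one tensor index to obtain a nonzero vector in $\ker{\h}$ supported on $S$ (respectively in $\ker{\h^t}$ supported on $T$), and contradict Lemma~\ref{lem:correctable-implication}. The ``delicate point'' you raise about stabilizer freedom is an unnecessary detour: the paper simply works with the canonical representatives from Lemma~\ref{lem:embeddedLogicals}, for which the slice $\pi(\alpha)[u,*]$ lands directly in $\ker{\h}$ without any stabilizer adjustment, so no cleaning-lemma bootstrap is needed at this stage (the cleaning lemma is invoked only later, in the proof of Theorem~\ref{thm:logicalZ}, once the present lemma is in hand).
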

\begin{proof}
  The proof idea is to show that if a logical operator were completely contained within the puncture, then we would violate condition \ref{def:correctable}, part \ref{it:interior}.
  This will entail projecting down to the level of the classical code until we arrive at a contradiction.

  Let $\alpha \in \L_Z$ be a logical $Z$ operator , i.e. as given by lemma \ref{lem:embeddedLogicals}
  \begin{align}
    \alpha \in (\field_2^n/\rs{\h} \otimes \ker{\h}| \bz_{m^2} ) \union
               (\bz_{n^2}| \ker{\h^t} \otimes \field_2^m/\rs{\h^t})~.
  \end{align}
  For the sake of contradiction, let $\alpha$ be supported entirely within the interior of the puncture, i.e. only on $B \times S \union T \times A$.

  Without loss of generality, suppose the VV part of $\alpha$ is non-trivial.
  Let $\pi(\alpha)$ denote the projection of $\alpha$ on to the VV qubits.
  Let us index the elements of $\pi(\alpha)$ using variable nodes $u,v \in V$ as $\pi(\alpha)[u,v]$.
  Furthermore, let $\pi(\alpha)[u,*]$ denote the vector obtained by fixing the first component to $u$.
  Since $\pi(\alpha)$ is non-trivial and supported on $B \times S$, there must exist at least one $u \in B$ such that the vector $\beta_u := \pi(\alpha)[u,*]$ is non-trivial.

  Furthermore, it also implies that $\beta_u$ is supported only on $S$.
  However, this in turn implies that there exists a non-trivial element in $\ker{\h_S}$.

  As shown in lemma \ref{lem:correctable-implication}, this violates condition \ref{def:correctable} since if $\ker{\h_N}$ is empty, then so is $\ker{\h_S}$
  Therefore there cannot be any logical $Z$ operators completely contained within the interior.
\end{proof}

The next lemma will be useful in showing that operators in the row-space of $\h_Z'$ are not in the span of the $Z$ stabilizers $\h_Z + \h_Z'$, i.e. they are not redundant.
These operators will later be used to construct logical operators.
\begin{lemma}
  \label{lem:indpt}
  The stabilizers $\h_Z + \h_Z'$ outside the puncture are independent of the stabilizers $\h_Z'$ within the puncture, i.e.
  \begin{align*}
    \rs{\h_Z + \h_Z'} \cap \rs{\h_Z'} = \emptyset~.
  \end{align*}
\end{lemma}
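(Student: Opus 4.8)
The plan is to argue by contradiction: suppose there is a nonzero vector $s$ in the intersection, i.e. $s \in \rs{\h_Z + \h_Z'} \cap \rs{\h_Z'}$. Then $s$ can be written simultaneously as a combination of rows of $\h_Z + \h_Z'$ (the stabilizers of the deformed code that live "outside" the puncture) and as a combination of rows of $\h_Z'$ (the stabilizers that were removed, living "inside"). The key structural observation to exploit is the support: rows of $\h_Z'$ are indexed by $T \times S$ and, by Definition~\ref{def:smooth}, are supported on the internal qubits $B\times S \union T\times A$ together with (at most) the boundary qubits; more importantly, $\h_Z + \h_Z'$ restricted to the rows indexed by $T\times S$ vanishes on the internal qubits $B \times S \union T \times A$, because that is exactly the part of $\h_Z$ that $\h_Z'$ cancels. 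So I would first split the row index set of $\h_Z$, which is $C\times V$, into the block $T\times S$ and its complement, and track which qubit columns each block touches.

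The main steps, in order: (1) Write $s = x(\h_Z+\h_Z')$ for some $x \in \field_2^{C\times V}$ and $s = y\,\h_Z'$ for some $y \in \field_2^{T\times S}$, with $y \neq 0$. (2) Decompose $x = x_{\mathrm{in}} + x_{\mathrm{out}}$ according to whether a $Z$-stabilizer index $(c,v)$ lies in $T\times S$ or not. Since $(\h_Z+\h_Z')$ restricted to rows in $T\times S$ is zero (those rows were fully cancelled — a smooth puncture removes the $Z$ stabilizers $T\times S$ entirely), we get $s = x_{\mathrm{out}}\,\h_Z$ on the relevant columns, and $x_{\mathrm{out}}\h_Z$ involves no row from $T\times S$. (3) Compare with $s = y\,\h_Z'$. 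Projecting both expressions onto the internal qubits $B\times S$ (VV part) and $T\times A$ (CC part), the left-hand side is a combination of genuine $Z$-stabilizer rows of the \emph{original} code $\h_Z$ indexed \emph{outside} $T\times S$, restricted to internal qubits; using the neighborhood containment identities $\Gamma(A)\subseteq S$, $\Gamma(B)\subseteq T$ and Eq.~\eqref{eq:identity}, one shows that no original $Z$-stabilizer outside $T\times S$ has any support on the internal qubit set $B\times S \union T\times A$. Hence the projection of $s$ onto the interior is zero. (4) On the other hand $s = y\,\h_Z'$, and I would then show that if $y\neq 0$ the projection of $y\,\h_Z'$ onto the interior $B\times S \union T\times A$ is nonzero — this is where correctability enters. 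The matrix $\h_Z'$ on the interior is essentially $\h_T\otimes\1_S$ (VV block restricted to $B\times S$, using $\h_T$ maps into $\field_2^B$ after $\1_B$) together with $\1_T\otimes\h_S^t$ (CC block restricted to $T\times A$). A nonzero $y$ that maps to zero on the interior would force, by the tensor structure and a fix-a-coordinate argument like the one in Lemma~\ref{lem:interiorClean}, a nonzero element of $\ker{\h_T}$ or $\ker{\h_S^t}$, contradicting Definition~\ref{def:correctable}, part~\ref{it:interior2}. (5) Combining (3) and (4): the interior projection of $s$ is both zero and nonzero, contradiction. Therefore $s = 0$, proving $\rs{\h_Z+\h_Z'}\cap\rs{\h_Z'} = \emptyset$.

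The step I expect to be the main obstacle is (3)–(4), namely making precise the claim that "$Z$-stabilizers of the punctured code outside the puncture have no support on the interior qubits, while the removed stabilizers $\h_Z'$ do act faithfully on the interior." The first half requires carefully checking, via $A^c = N^c \union (N\setminus A)$ and $B^c = M^c\union(M\setminus B)$, that every row of $\h_Z$ indexed outside $T\times S$ which could \emph{a priori} touch $B\times S$ or $T\times A$ in fact does not — equivalently that $\h_Z'$ exactly captures all of $\h_Z$'s action on the interior. The second half is a faithfulness/injectivity statement for $\h_Z'$ restricted to interior columns, and reducing it cleanly to the correctability kernels $\ker{\h_T} = \ker{\h_S^t} = \emptyset$ (possibly also invoking Lemma~\ref{lem:correctable-implication} to handle $\ker{\h_B}$ and $\ker{\h_A^t}$) is the delicate bookkeeping. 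Once those two support/injectivity facts are nailed down, the contradiction is immediate.
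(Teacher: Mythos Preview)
Your step (4) does not go through as stated. When you project $y\,\h_Z'$ onto the interior qubits $B\times S \union T\times A$, the resulting map is $y\mapsto y(\h_T\1_B \otimes \1_S \mid \1_T \otimes \h_S^t\1_A)$, which by eq.~\eqref{eq:identity} is $y\mapsto y(\h_B \otimes \1_S \mid \1_T \otimes \h_A^t)$. The left kernel of this map over $\field_2^{T\times S}$ is not trivial: it is precisely $\ker{\h_B^t}\otimes\ker{\h_A}$, not anything governed by $\ker{\h_T}$ or $\ker{\h_S^t}$. In fact this is exactly the computation carried out in Part~1 of the proof of Theorem~\ref{thm:logicalZ}, where that kernel is used to produce the new logical $Z$ operators --- those are combinations $y\,\h_Z'$ that vanish on the interior and survive only on the boundary. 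So a nonzero $y$ with zero interior projection is not a contradiction at all; it is the generic situation for the very logical operators whose independence from $\rs{\h_Z+\h_Z'}$ this lemma is meant to establish. Your argument therefore stalls at exactly the point where the lemma has content, and the correctability condition you invoke (part~\ref{it:interior2}) is the wrong one.

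The paper's proof avoids this by projecting onto the larger region $M\times S \union T\times N$ (interior together with boundary) rather than the interior alone. On that region the contributions from $a(\h_Z+\h_Z')$ and $b\,\h_Z'$ can be combined into a single equation $c(\h_M\otimes\1_S \mid \1_T\otimes\h_N^t)=0$ for a nonzero $c$, and now the relevant left kernels are governed by $\ker{\h_M^t}$ and $\ker{\h_N}$, which \emph{are} trivial by part~\ref{it:interior} of Definition~\ref{def:correctable}. The moral: you must keep the boundary in the projection, because the interior by itself cannot separate the removed stabilizers from the logical operators they generate.
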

\begin{proof}
  The proof idea is to study the overlap of the interior and exterior, and show that if there was a vector in the common support then we would violate condition \ref{def:correctable}, part \ref{it:interior}.

  For the sake of contradiction, suppose there exist vectors $a, b \in \field_2^{C \times V}$ such that
  \begin{align}
  \label{eq:assumption}
    a(\h_Z + \h_Z') = b\h_Z'~.
  \end{align}
  The stabilizers $\h_Z'$ in the interior and $\h_Z + \h_Z'$ in the exterior only share support along the boundary $(M \setminus B) \times S \union T \times (N \setminus A)$.

  Projecting both sides of eq. \ref{eq:assumption} on to the sets $M \times S \union T \times N$, i.e. the sets containing the interior and the boundary, we get
  \begin{align}
    a(\1_{T^c}\h\1_{M\setminus B} \otimes \1_S| \1_T \otimes \1_{S^c}\h^t\1_{N\setminus A}) = b(\h_T \otimes \1_S| \1_T \otimes \h_S^t)~.
  \end{align}
  By rearranging terms, we can find some non-trivial vector $c \in \field_2^{C\times V}$ such that this can be expressed as
  \begin{align}
    c(\h_M \otimes \1_S | \1_T \otimes \h_N^t) = 0~.
  \end{align}
  If $c$ is non-trivial, this requires that $\ker{\h_M^t}$ and $\ker{\h_N}$ are non-empty.
  However this violates condition \ref{def:correctable}, part \ref{it:interior}.
  This shows that $\h_Z + \h_Z'$ and $\h_Z'$ cannot have non-trivial overlap.
\end{proof}

With these conditions, we can study the logical $Z$ operators that emerge by creating a puncture.
These operators can be classified in terms of the (classical) codespaces associated with $\h_A$ and $\h_B^t$.
\begin{theorem}
  \label{thm:logicalZ}
  Let $\tG_B$ and $\g_A$ be the generator matrices for the codespaces defined by $\h_B^t$ and $\h_A$ respectively, i.e.
  the rows of $\tG_B$ and $\g_A$ span $\ker{\h_B^t}$ and $\ker{\h_A}$ respectively.
  The logical $Z$ operators are spanned by
  \begin{align*}
    (\tG_B^t \otimes \g_A^t) \h_Z'~.
  \end{align*}
\end{theorem}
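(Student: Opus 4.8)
The plan is to mimic the structure of the proof of Lemma~\ref{lem:embeddedLogicals}: first exhibit that every vector of the claimed form is a genuine logical $Z$ operator (i.e.\ it commutes with the deformed $X$ stabilizers $\h_X + \h_X'$, and is not itself in the span of the deformed $Z$ stabilizers $\h_Z + \h_Z'$), and then show by a dimension count that these exhaust the newly created logical $Z$ operators. The object $(\tG_B^t \otimes \g_A^t)\h_Z'$ should be read as: pick a codeword $\mu$ of $\ker{\h_B^t}$ and a codeword $\nu$ of $\ker{\h_A}$, form $\mu\otimes\nu \in \field_2^{T\times A}$, and multiply by $\h_Z'$ on the left; this spreads the tensor codeword onto the qubit register $(S\times T)\cup\ldots$ in exactly the way the deformed $Z$ checks act, producing a $Z$-type Pauli supported near the puncture.

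First I would verify the commutation relations. Since $\h_Z'$ is a block of the $Z$-stabilizer matrix of the ``interior'' hypergraph product, we know $\h_X'(\h_Z')^t$ and $\h_X(\h_Z')^t$ were essentially computed in Lemma~\ref{lem:validCommutation}; the extra ingredient is that multiplying $\h_Z'$ on the left by $\tG_B^t\otimes\g_A^t$ kills the remaining boundary terms. Concretely, $(\tG_B^t\otimes\g_A^t)\h_Z'$ has VV-component $\tG_B^t\h_T \otimes \g_A^t\1_S$ and CC-component $\tG_B^t\1_T \otimes \g_A^t\h_S^t$. Using the identity \eqref{eq:identity}, $\h_T = \h\1_B + \1_T\h\1_{B^c}$ and $\h_S^t = \h^t\1_A + \1_S\h\1_{A^c}$, together with $\tG_B\h = \tG_B\h_B{}^{\,t\,t}\cdots$ — more precisely $\tG_B^t\h\1_B = \tG_B^t\h_B = 0$ because the rows of $\tG_B$ span $\ker{\h_B^t}$ — and similarly $\g_A^t$ annihilates $\h_A$; the cross terms with $\1_{B^c}$, $\1_{A^c}$ land entirely on the boundary/exterior qubits and one checks they pair up in the product against $\h_X + \h_X'$ exactly as in Lemma~\ref{lem:validCommutation}, so the total commutator vanishes mod $2$. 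This is the step I expect to be the main obstacle: bookkeeping which boundary terms survive the left-multiplication by $\tG_B^t\otimes\g_A^t$ and confirming they cancel against the deformed $X$ checks rather than against each other.

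Next I would show non-triviality, i.e.\ that a nonzero $\mu\otimes\nu$ yields an operator outside $\rs{\h_Z + \h_Z'}$. Suppose $(\mu\otimes\nu)\h_Z' \in \rs{\h_Z+\h_Z'}$. Restricting attention to the interior qubits $(S\times B)\cup(A\times T)$ — on which $\h_Z + \h_Z'$ has no support by the definition of the puncture — the right-hand side vanishes there while the left-hand side is $(\mu\otimes\nu)$ supported on the interior; combining with Lemma~\ref{lem:indpt} (independence of interior and exterior $Z$ stabilizers) forces $\mu\otimes\nu$ to be a combination of rows of $\h_Z'$, hence $\mu \in \rs{\h_B^t}$ and $\nu\in\rs{\h_A}$, contradicting that they are nonzero representatives of $\ker{\h_B^t}$, $\ker{\h_A}$. (One also uses Lemma~\ref{lem:interiorClean} to rule out that the difference could be an embedded logical hiding in the interior.) Finally, for completeness, I would count: the number of such operators is $\dim\ker{\h_B^t}\cdot\dim\ker{\h_A}$, and I would argue this matches the net change in the number of logical qubits caused by the puncture — the qubits removed (measured out) minus the stabilizers removed — using the correctability conditions of Definition~\ref{def:correctable} and Lemma~\ref{lem:correctable-implication} to ensure no embedded logicals were destroyed, so that the entire count is accounted for by loop- and chain-type operators, of which the $Z$-loop operators are precisely those spanned by $(\tG_B^t\otimes\g_A^t)\h_Z'$.
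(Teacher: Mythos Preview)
Your overall strategy (verify commutation, verify non-triviality, then count) is a legitimate route, but it diverges from the paper's argument and your execution has a real gap in the non-triviality step.

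The paper does not verify commutation by direct computation. Instead it argues structurally: any new logical $Z$ operator may be assumed to have no support on the interior $B\times S\cup T\times A$ (because of $\INT$), and since $\h_X'$ acts only on interior qubits, such an operator automatically lies in $\ker{\h_X}$. Hence it is an old embedded logical (handled by Lemma~\ref{lem:interiorClean}) or an old $Z$ stabilizer. Modulo the surviving stabilizers $\h_Z+\h_Z'$, every old $Z$ stabilizer is equivalent to something in $\rs{\h_Z'}$, and the requirement ``no interior support'' becomes exactly $a(\h_B\otimes\1_S\mid\1_T\otimes\h_A^t)=0$, i.e.\ $a^t\in\ker{\h_B^t}\otimes\ker{\h_A}$. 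This is where the generator matrices $\tG_B,\g_A$ come from; no counting is needed because the kernel has been characterised completely. Your commutation bookkeeping would eventually work, but the observation ``$\h_X'$ lives only on the interior, and the claimed operators vanish on the interior'' replaces all of it in one line. (Relatedly, what left-multiplication by $\tG_B^t\otimes\g_A^t$ kills is the \emph{interior} support, not the ``boundary terms'' as you wrote.)

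Your non-triviality argument, however, does not go through as written. You restrict to the interior qubits and claim ``the left-hand side is $\mu\otimes\nu$ supported on the interior''. But $(\mu\otimes\nu)\h_Z'$ has \emph{zero} support on the interior --- that is precisely what the choice $\mu\in\ker{\h_B^t}$, $\nu\in\ker{\h_A}$ buys you --- so both sides vanish under that restriction and nothing is learned. The subsequent sentence (``forces $\mu\otimes\nu$ to be a combination of rows of $\h_Z'$, hence $\mu\in\rs{\h_B^t}$ \ldots'') is also confused: $\mu\otimes\nu$ is a coefficient vector for $\h_Z'$, not itself a qubit-space vector, and ``$\mu\in\rs{\h_B^t}$'' is not in contradiction with $\mu\in\ker{\h_B^t}$. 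The correct (and much shorter) argument is the one the paper uses: $(\mu\otimes\nu)\h_Z'$ is a nonzero element of $\rs{\h_Z'}$, and Lemma~\ref{lem:indpt} says $\rs{\h_Z'}\cap\rs{\h_Z+\h_Z'}=\emptyset$, so it cannot be a deformed stabilizer. That is the whole of Part~2.
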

\begin{proof}
  We begin from first principles.
  The stabilizers are given by $\h_X + \h_X'$ and the single-qubit operators in the interior of the puncture are described by the matrix $\INT = \left(\1_B \otimes \1_S | \1_T \otimes \1_A\right)$.
  The logical operators are defined as $\ker{\h_X + \h_X' + \INT}/\rs{\h_Z + \h_Z'}$.

  \textbf{Part 1:} $\ker{\h_X + \h_X'+\INT}$

  Suppose $\alpha \in \field_2^{n^2 + m^2}$ such that $\alpha \in \ker{\h_X + \h_X'+ \INT}$.
  We can assume that $\alpha$ is not supported in the interior, and consider the kernel of $\h_X + \h_X'$ instead of $\h_X + \h_X'+\INT$.
  We use lemma \ref{lem:interiorClean} together with the cleaning lemma \cite{bravyi2009no} to note that the embedded logicals are unaffected by the puncture.
  Any other $Z$ type operator that is supported in the interior will anti-commute with the single-qubit $X$ measurements used to generate the puncture and therefore will be removed.

  The operator $\alpha$ must therefore lie in the kernel of $\h_X$ outside the puncture.
  This contains the embedded logical operators and products of old stabilizers that were not in the interior.
  We shall only focus on the latter here in order to obtain the new logical operators.

  The stabilizers in the interior are spanned by $\h_Z'$.
  Those operators in $\rs{\h_Z'}$ but not supported in the interior are thus what we seek.
  The interior of the puncture corresponds to $B \times S \union T \times A$.
  Let $a \in \field_2^{C \times V}$ such that $\supp{a} \subseteq T \times S$.
  We want the projection of $a \h_Z'$ to vanish in the interior, i.e.
  \begin{align}
    a \h_Z' \1_{B \times S \union T \times A} &= 0\\
    a (\h_B \otimes \1_S | \1_T \otimes \h_A^t) &= 0~.
  \end{align}
  Inspecting the VV and CC parts of this equation separately, we find that we must have
  \begin{align}
    a^t \in \ker{\h_B^t} \otimes \ker{\h_A}~.
  \end{align}
  Equivalently, the space we desire is spanned by
  \begin{align}
  \label{eq:logicalZ}
    (\tG_B^t \otimes \g_A)\h_Z'~.
  \end{align}

  \textbf{Part 2:} $\rs{\h_Z + \h_Z'}$

  We refer to lemma \ref{lem:indpt} which states that the span of the stabilizers from within the puncture are independent of those outside the puncture.
  Therefore the space defined by eq. \ref{eq:logicalZ} is not in the span of the $Z$ stabilizers.
\end{proof}

\subsubsection{Logical X operators}

We now discuss the logical $X$ operators associated to a smooth puncture.

\begin{lemma}
  \label{lem:embeddedX}
  There are no embedded logical $X$ operators within the puncture.
\end{lemma}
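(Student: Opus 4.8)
The plan is to follow the proof of Lemma~\ref{lem:interiorClean} essentially line for line, with the two tensor factors of the hypergraph product interchanged. The only genuine bookkeeping subtlety is that, by Lemma~\ref{lem:embeddedLogicals}, the embedded $X$ logicals carry their $\ker\h$ (resp.\ $\ker\h^t$) factor in the \emph{opposite} slot from the $Z$ logicals, so the contradiction will come from part~2 of Lemma~\ref{lem:correctable-implication} (the statements $\ker\h_B = \ker\h_A^t = \emptyset$) rather than from part~1.

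Concretely, I would argue by contradiction. Suppose $\alpha\in\L_X$ is an embedded logical $X$ operator supported entirely within the (smooth) puncture, i.e.\ only on the interior qubits $B\times S\union T\times A$. By Lemma~\ref{lem:embeddedLogicals}, $\alpha$ lies in $\big(\ker\h\otimes(\field_2^n/\rs\h)\,\big|\,\bz_{m^2}\big)\union\big(\bz_{n^2}\,\big|\,(\field_2^m/\rs{\h^t})\otimes\ker{\h^t}\big)$, so its VV part or its CC part is nontrivial. Assume without loss of generality that the VV part $\pi(\alpha)$ is nontrivial; then $\supp{\pi(\alpha)}\subseteq B\times S$. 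Writing $\pi(\alpha)[u,v]$ for $u,v\in V$ and fixing the \emph{second} index, the vector $\gamma_v := \pi(\alpha)[*,v]\in\field_2^V$ lies in $\ker\h$, because the first tensor factor of the $X$-logical VV block is $\ker\h$. Since $\pi(\alpha)$ is nontrivial and supported on $B\times S$, there is some $v\in S$ with $\gamma_v\neq 0$, and $\gamma_v$ is supported only on $B$; this is a nontrivial element of $\ker\h_B$, contradicting Lemma~\ref{lem:correctable-implication}. The case where the CC part is nontrivial is identical after exchanging $(\h,B,S)\leftrightarrow(\h^t,A,T)$ and fixing the \emph{first} index instead: it produces a nontrivial element of $\ker\h_A^t$, again contradicting Lemma~\ref{lem:correctable-implication}. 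Hence no embedded $X$ logical is supported within the puncture.

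The step I expect to require the most care is precisely this index bookkeeping: recognizing that for the $X$-sector one must collapse the second index of the VV block (the first index of the CC block) to land inside a kernel, which is exactly what makes $\h_B$ and $\h_A^t$ — rather than $\h_S$ and $\h_T^t$ — the relevant restricted matrices, so that correctability enters through Lemma~\ref{lem:correctable-implication} part~2. Everything else is a transcription of Lemma~\ref{lem:interiorClean}, and as in Theorem~\ref{thm:logicalZ} this lemma is then meant to be combined with the cleaning lemma~\cite{bravyi2009no} to conclude that the embedded $X$ logicals are untouched by the deformation.
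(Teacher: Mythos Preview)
Your proposal is correct and matches the paper's own proof essentially line for line: the paper likewise assumes a VV-supported embedded $X$ logical inside $B\times S$, fixes the second index to obtain $\beta_v := \pi(\alpha)[*,v]\in\ker\h$ supported on $B$, and derives a contradiction with $\ker\h_B=\emptyset$ via Lemma~\ref{lem:correctable-implication} (hence Definition~\ref{def:correctable}, part~\ref{it:interior2}). Your remark about the index bookkeeping is exactly the point the paper's argument turns on.
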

\begin{proof}
  The proof idea is to show that if a logical $X$ operator were contained entirely within the puncture, then it violates the assumptions that the interior is correctable.

  Let $\alpha$ be a logical $X$ operator, i.e. as given by lemma \ref{lem:embeddedLogicals}
  \begin{align*}
    \alpha \in (\ker{\h} \otimes \field_2^n/\rs{\h} | \bz_{m^2}) \union (\bz_{n^2} | \field_2^{m}/\rs{\h^t} \otimes \ker{\h^t})~,
  \end{align*}
  which for the sake of contradiction is contained entirely within the interior $B \times S \union T \times A$.
  Without loss of generality, let us assume that the VV part of $\alpha$ is non-trivial.
  Let $\pi(\alpha)$ denote the projection of $\alpha$ on to the VV qubits.
  Let us index the elements of $\pi(\alpha)$ using variable nodes $u,v \in V$.
  It follows that $\pi(\alpha)$ is supported entirely on $B \times S$.

  Let $\pi(\alpha)[*,v]$ denote the vector obtained by fixing the second component to $v$.
  Since $\pi(\alpha)$ is non-trivial, there must exist at least one $v \in S$ such that the vector $\beta_v := \pi(\alpha)[*,v] \in \ker{\h}$.

  By assumption, since $\alpha$ is supported entirely on the interior, $\beta_v$ is supported only on $B$.
  However this in turn implies that there exists a non-trivial element in $\ker{\h_B}$.

  As shown in lemma \ref{lem:correctable-implication}, this violates condition \ref{def:correctable}, part \ref{it:interior2}.
  Therefore the puncture cannot contain any logical $X$ operators.
\end{proof}

The next lemma will help show that certain $X$ operators are not in the span of the stabilizer $\h_X + \h_X'$.
These operators will then be used to construct logical $X$ operators.
These are comprised of codewords of the classical codes that are complementary to the subgraphs chosen by $S$ and $T$.
In other words, they will involve the terms $\ker{\h_{T^c}}$ and $\ker{\h_{S^c}^t}$.

In the proof that follows, we shall make certain claims on these spaces.
Note that since the neighborhood the set $B$ is contained in the set $T$, it implies that the neighborhood of $T^c$ is contained within $B^c$.
Similarly, the neighborhood of $S^c$ is contained within $A^c$.
Therefore when studying $\ker{\h_{T^c}}$ and $\ker{\h_{S^c}^t}$, we shall assume that their support is contained in $B^c$ and $A^c$ respectively.

\begin{lemma}
  \label{lem:indpt2}
  Let $\alpha \in \field_2^{n^2 + m^2}$ such that $\supp{\alpha} \cap (M\setminus B) \times S \union T \times (N\setminus A) \neq \emptyset$ and it lies in one of the two following sets:
  \begin{enumerate}
    \item $\left(\ker{\h_{T^c}} \otimes \field_2^S / \rs{\h_A} | \bz_{m^2}\right)$; or
    \item $\left(\bz_{n^2}| \field_2^T / \rs{\h_B^t} \otimes \ker{\h_{S^c}^t} \right)$~.
  \end{enumerate}
  Then $\alpha$ does not lie in the row-span of $\h_X + \h_X'$.
\end{lemma}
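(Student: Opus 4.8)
The plan is to argue by contradiction, in exact parallel with Lemmas~\ref{lem:indpt}, \ref{lem:interiorClean} and \ref{lem:embeddedX}: assume the operator lies in $\rs{\h_X+\h_X'}$, project the defining identity onto the interior-and-boundary of the puncture, and strip off tensor factors until a correctability hypothesis of Definition~\ref{def:correctable} (or a consequence via Lemma~\ref{lem:correctable-implication}) is violated. By the usual $S\leftrightarrow T$ symmetry of the smooth puncture (interchanging $V\leftrightarrow C$, $\h\leftrightarrow\h^t$, and the VV and CC sectors) it is enough to treat case~1, so suppose $\alpha=(\pi(\alpha)|\bz_{m^2})$ with $\pi(\alpha)\in\ker{\h_{T^c}}\otimes(\field_2^S/\rs{\h_A})$ and $\supp{\pi(\alpha)}\cap\bigl((M\setminus B)\times S\bigr)\neq\emptyset$, and assume for contradiction that $\alpha=a(\h_X+\h_X')$ for some $a\in\field_2^{V\times C}$.

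The first step is to rewrite the punctured $X$-check matrix in a form adapted to the boundary. Using $\h_S=\h\1_S$, $\h_T=\1_T\h$ and arithmetic mod $2$,
\begin{align*}
  \h_X+\h_X'=\bigl(\;\1_V\otimes\h\1_{S^c}+\1_{B^c}\otimes\h\1_S\;\;\big|\;\;\h^t\1_{T^c}\otimes\1_C+\h^t\1_T\otimes\1_{A^c}\;\bigr).
\end{align*}
Since the CC component of $\alpha$ vanishes, the right-hand block yields $a\bigl(\h^t\1_{T^c}\otimes\1_C+\h^t\1_T\otimes\1_{A^c}\bigr)=0$, and reading this column by column forces the columns of $a$ indexed by $A^c$ into $\ker{\h}$ and those indexed by $A$ into $\ker{\h_{T^c}}$. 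Substituting these constraints into the left (VV) block, and using the convention (justified in the paragraph preceding the statement) that the $\ker{\h_{T^c}}$-factor of $\pi(\alpha)$ is supported on $B^c$, I would then show that $\pi(\alpha)$ vanishes on $B\times S$, that $a\h$ is supported on its $S$-columns, and that on $B^c\times S$ each row of $\pi(\alpha)$ equals the corresponding row of $a\,\h_S$, which lies in $\rs{\h_S}$.

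The last step is to make the hypothesis ``$\pi(\alpha)$ meets $(M\setminus B)\times S$'' bite. Write $\pi(\alpha)=\sum_i x_i\otimes\bar y_i$ with $\{x_i\}\subseteq\ker{\h_{T^c}}$ linearly independent and $\bar y_i\in\field_2^S/\rs{\h_A}$; the hypothesis then supplies some $i$ with $\bar y_i\neq 0$ and $\supp{x_i}\not\subseteq M^c$. Reducing the ``row in $\rs{\h_S}$'' found above modulo $\rs{\h_A}$ — splitting $N=A\cup(N\setminus A)$, recognizing the $A$-part as already lying in $\rs{\h_A}$ via the identity $\h_A=\1_A\h\1_S$, and controlling the $(N\setminus A)$-part with the column constraints on $a$ — should force every $\bar y_i=0$, the contradiction. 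Equivalently, one can rearrange the boundary projection of $a(\h_X+\h_X')=\alpha$ into the shape $c\,(\h_T\otimes\1_S|\cdots)=0$ with $c$ nontrivial and contradict Definition~\ref{def:correctable}, part~\ref{it:interior2}. I expect this last reduction to be the only genuinely delicate point: ``$\alpha$ reaches the boundary'' is a statement about a coset in $\field_2^S/\rs{\h_A}$ rather than about a chosen representative, so one must check that the nontriviality supplied by the hypothesis is not quietly absorbed when the element is pushed into the quotient — and this is exactly where the $\ker{\h}$- and $\ker{\h_{T^c}}$-constraints on $a$ (the $X$-sector counterpart of the correctability bookkeeping in Lemmas~\ref{lem:indpt}--\ref{lem:embeddedX}) must be used. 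If the direct route proves unwieldy, a fallback is to note that $\alpha\in\rs{\h_X+\h_X'}$ would force $\alpha$ to commute with every logical $Z$ of the punctured code, and to derive the contradiction by pairing $\alpha$ against the loop-type logical $Z$ of Theorem~\ref{thm:logicalZ}.
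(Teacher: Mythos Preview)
Your overall plan---argue by contradiction, split into the VV and CC blocks of $a(\h_X+\h_X')=\alpha$, and extract a violation of Definition~\ref{def:correctable}---is exactly the paper's strategy, and your ``equivalently, rearrange into $c(\h_T\otimes\1_S|\cdots)=0$ and contradict Definition~\ref{def:correctable}, part~\ref{it:interior2}'' is precisely the paper's endgame. The paper, however, runs the two blocks in the \emph{opposite} order, and this makes the closing step that you flag as ``genuinely delicate'' essentially immediate. It first uses the VV block together with the boundary hypothesis and the quotient condition on the second tensor factor to pin down $\supp{a}\cap\bigl((M\setminus B)\times A^c\bigr)\neq\emptyset$; only then does it project the CC identity onto $T\times C$, obtaining $a(\h_T^t\otimes\1_{A^c})=0$, so that some column $b_{c'}:=a[*,c']$ with $c'\in A^c$ is nonzero on $M\setminus B$ yet satisfies $\h_T b_{c'}=0$, directly contradicting $\ker{\h_T}=\emptyset$.

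Your CC-first route does give sharper column information ($a[*,c']\in\ker{\h}$ for $c'\in A^c$, $a[*,c']\in\ker{\h_{T^c}}$ for $c'\in A$), but your proposed way of closing---``controlling the $(N\setminus A)$-part with the column constraints on $a$'' so as to force each row $\pi(\alpha)[u,*]$ into $\rs{\h_A}$---does not obviously go through: the column constraints concern $a[*,c']$ for fixed $c'$, whereas the VV row computation $a[u,*]\h_S$ mixes all columns for fixed $u$, and there is no evident transfer between the two. I would promote your ``equivalently'' remark to the main line of the argument; that is what the paper does, and it avoids the delicate coset bookkeeping you anticipate. The fallback via pairing against the loop-type logical $Z$ of Theorem~\ref{thm:logicalZ} is not circular and could be made to work, but is less direct than either route above.
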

\begin{proof}
  The proof idea is to successively project down to the level of the constituent classical codes until we arrive at a contradiction.

  We shall focus on the first object,
  \[
    \left(\ker{\h_{T^c}} \otimes \field_2^S / \rs{\h_A} | \bz_{m^2}\right)
  \]
  and note the other follows identically.

  For the sake of contradiction, let $\alpha \in (\ker{\h_{T^c}} \otimes \field_2^S / \rs{\h_A} | \bz_{m^2})$ such that it is in the row-span of $\h_X + \h_X'$.
  It follows that $\alpha$ has non-trivial support on $(M \setminus B) \times S$.

  By assumption, there exists a vector $a \in \field_2^{V \times C}$ such that
  \begin{align}
  \label{eq:contra1}
    a(\h_X + \h_X') = \alpha~.
  \end{align}
  Let us index $a$ as $a[v,c]$ for $v \in V$ and $c \in C$.
  Similarly, let $\pi(\alpha)$ denote the projection of $\alpha$ on to its VV part and let us index its elements as $\pi(\alpha)[u,v]$ for $u,v \in V$.

  We shall study the VV part first, and this will help us understand the support of the co-ordinate vector $a$.
  We wish to show that
  \begin{align}
  \label{eq:req}
    \supp{a} \cap (M\setminus B) \times A^c \neq \emptyset~.
  \end{align}
  As noted above, $\pi(\alpha)$ is supported on $(M\setminus B) \times S$.
  Therefore, if we had
  \begin{align*}
    \supp{a} \subseteq M^c \times A^c~,
  \end{align*}
  then for all $v \in S$, $\pi(\alpha)[*,v]$ cannot have support on $M \setminus B$.
  This follows from the fact that the VV portion of \ref{eq:contra1} is
  \begin{align*}
    a(\1_V \otimes \h + \1_B \otimes \h_S) = \pi(\alpha)~.
  \end{align*}
  In turn it would not be a part of $\ker{\h_{T^c}}$ and this violates the assumption on $\alpha$.
  For $u \in B^c$, $\pi(\alpha)[u,*]$ is supported on $\field_2^S/\rs{\h_A}$ and therefore $a[u,*]$ must be supported on $A^c$.
  This implies eq. \ref{eq:req}.

  With this condition on the support of $\alpha$ obtained from the VV side, we shall now show that we run into problems on the CC side.
  Since the CC portion of $\alpha$ is trivial,
  \begin{align}
    \label{eq:contra2}
    a(\h^t \otimes \1_C + \h_T^t \otimes \1_A) = 0~.
  \end{align}
  We may project this from the right on to the set $T \times C$ to obtain; it is on this set that we will run into a contradiction.
  After projection, this equation becomes
  \begin{align}
  \label{eq:contra3}
    a(\h_T^t \otimes \1_{A^c}) = 0~.
  \end{align}
  For some $c \in A^c$, we must have $b_c := a[*,c]$ such that $b_c$ has non-trivial support on $M\setminus B$.

  The vector $b$ has non-trivial support on $M \setminus B$ as explained, and therefore eq. \ref{eq:contra3} implies that
  \begin{align*}
    b_c \h_T^t = 0~.
  \end{align*}
  However this is not possible since $\ker{\h_T}$ is empty, as stated in condition \ref{def:correctable}, part \ref{it:interior2}.
  This completes the proof.
\end{proof}

These operators can be classified in terms of the (classical) codespaces associated with $\h_{T^c}$ and $\h_{S^c}^t$.
\begin{theorem}
\label{thm:logicalX}
  Let $O_Z$ be the $Z$ operators defined by
  \begin{align*}
    O_Z = (\h_M \otimes \1_S | \1_{T} \otimes \h_N^t)~,
  \end{align*}
  and let $\Omega_X = \ker{\h_Z + O_Z}$ denote the $X$ type operators in its kernel.

  The logical $X$ operators are described by
  \begin{align}
    \left[(\ker{\h_{T^c}} \otimes (\field_2^S / \rs{\h_A}) | \bz_{m^2}) \union
    (\bz_{n^2} | (\field_2^T / \rs{\h_B^t}) \otimes \ker{\h_{S^c}^t})\right]/\Omega_X~.
  \end{align}
\end{theorem}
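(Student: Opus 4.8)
The plan is to mirror the structure of the proof of Theorem \ref{thm:logicalZ}, working with the $X$-type operators instead. The logical $X$ operators live in $\ker{\h_Z + \h_Z' + \INT}/\rs{\h_X + \h_X'}$, where $\INT = (\1_B \otimes \1_S | \1_T \otimes \1_A)$ is the matrix of single-qubit $X$ measurements made in the interior. As in Part 1 of Theorem \ref{thm:logicalZ}, I would first argue using Lemma \ref{lem:embeddedX} and the cleaning lemma \cite{bravyi2009no} that the embedded $X$ logicals are unaffected by the puncture, so we may restrict attention to $X$ operators supported outside the interior that commute with the deformed $Z$ stabilizers $\h_Z + \h_Z'$. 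The $Z$ stabilizers surviving in the exterior, together with the broken boundary pieces, are captured by $\h_Z + O_Z$ with $O_Z = (\h_M \otimes \1_S | \1_T \otimes \h_N^t)$ — this is the natural completion of $\h_Z$ by the boundary $Z$-checks that are partially supported near the puncture — so the constraint becomes $\alpha \in \Omega_X = \ker{\h_Z + O_Z}$.

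Next I would exhibit the candidate new logical operators explicitly. Stopping the $X$ measurements in the interior frees up $X$ strings that used to be forbidden; these are precisely strings whose VV-part lies in $\ker{\h_{T^c}} \otimes \field_2^S$ modulo the residual relations coming from $\h_A$ (the $X$ stabilizers still acting, which on the relevant block restrict to $\rs{\h_A}$), and symmetrically for the CC-part with $\ker{\h_{S^c}^t}$ and $\rs{\h_B^t}$. This is why the quotient space $\field_2^S/\rs{\h_A}$ and $\field_2^T/\rs{\h_B^t}$ appear. I would check that a representative of
\[
  (\ker{\h_{T^c}} \otimes (\field_2^S/\rs{\h_A}) \mid \bz_{m^2}) \;\union\; (\bz_{n^2} \mid (\field_2^T/\rs{\h_B^t}) \otimes \ker{\h_{S^c}^t})
\]
indeed commutes with every surviving $Z$ stabilizer — i.e.\ lies in $\Omega_X$ — using the containment facts $\Gamma(T^c) \subseteq B^c$, $\Gamma(S^c) \subseteq A^c$ and the identity in eq. \ref{eq:identity}, exactly the bookkeeping already used in Lemma \ref{lem:validCommutation}.

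Then I would invoke Lemma \ref{lem:indpt2}, which is tailor-made for this step: it says that any such $\alpha$ with nontrivial support on the boundary $(M\setminus B)\times S \union T\times(N\setminus A)$ is \emph{not} in $\rs{\h_X + \h_X'}$, hence is a genuine logical operator and not a stabilizer. Quotienting the displayed space by $\Omega_X$ removes exactly the pieces that either are trivial or fail to reach the boundary (these are either stabilizers or the already-accounted-for embedded logicals), leaving the claimed set. To finish, I would run a dimension count in the style of Lemma \ref{lem:embeddedLogicals} and Theorem \ref{thm:logicalZ}: the number of cosets in $\field_2^S/\rs{\h_A}$ times $\dim\ker{\h_{T^c}}$, plus the symmetric CC contribution, must match the number of new logical $Z$ operators found in Theorem \ref{thm:logicalZ} (they come in anticommuting pairs), which pins down that we have found all of them and no redundancy remains.

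The main obstacle I anticipate is not the commutation check but getting the quotient by $\Omega_X$ exactly right — i.e.\ proving that modding out by $\Omega_X$ removes precisely the embedded logicals and the stabilizers and nothing more, so that the count comes out equal to the logical $Z$ count. This requires carefully disentangling three overlapping subspaces (embedded logicals, old stabilizers restricted to the exterior, and the genuinely new boundary operators) on the boundary region where they all have support, and it is the place where the correctability hypotheses of Definition \ref{def:correctable} (via Lemma \ref{lem:correctable-implication}) have to be used in full strength, just as they were in Lemmas \ref{lem:interiorClean}, \ref{lem:indpt}, \ref{lem:embeddedX}, and \ref{lem:indpt2}.
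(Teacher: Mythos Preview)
Your overall architecture matches the paper's proof: split into a Part~1 computing $\ker(\h_Z+\h_Z')$ (after cleaning the interior via Lemma~\ref{lem:embeddedX}) and a Part~2 showing that exactly the $\rs{\h_A}$ and $\rs{\h_B^t}$ pieces lie in $\rs{\h_X+\h_X'}$, with Lemma~\ref{lem:indpt2} doing the heavy lifting. That much is right.

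There are two places where your proposal diverges from the paper and where you should be careful. First, your reading of $\Omega_X$ is inverted. You describe $\h_Z+O_Z$ as ``the $Z$ stabilizers surviving in the exterior, together with the broken boundary pieces'' and conclude that the \emph{constraint} on a new logical $X$ is $\alpha\in\Omega_X$. In fact $\h_Z+O_Z$ is what you get by \emph{deleting} from $\h_Z$ its action on the region $M\times S\cup T\times N$; elements of $\Omega_X=\ker(\h_Z+O_Z)$ are precisely the $X$ operators that already commuted with every exterior $Z$ check before puncturing --- i.e.\ old stabilizers and embedded logicals that never see the puncture. That is why one \emph{quotients} by $\Omega_X$ rather than imposes it. The paper makes this explicit via a proxy: for $\alpha\in\ker(\h_Z+\h_Z')$ supported off the interior it sets $a:=\h_Z\alpha=\h_Z'\alpha$; the case $a=0$ is exactly $\Omega_X$, and the case $a\neq 0$ forces $\alpha$ into $(\ker{\h_{T^c}}\otimes\field_2^S\,|\,\bz)\cup(\bz\,|\,\field_2^T\otimes\ker{\h_{S^c}^t})$. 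You later use $\Omega_X$ correctly as a quotient, but the earlier paragraph would need to be rewritten.

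Second, the paper does \emph{not} close with a dimension count; it derives the form of $\ker(\h_Z+\h_Z')$ directly from the proxy $a$ and then shows in Part~2 that modding out $\rs{\h_X+\h_X'}$ kills exactly $\ker{\h_{T^c}}\otimes\rs{\h_A}$ and $\rs{\h_B^t}\otimes\ker{\h_{S^c}^t}$ (an explicit one-line computation $(g\otimes a)(\h_X+\h_X')=g\otimes x$ for $x=a\h_A$), with Lemma~\ref{lem:indpt2} certifying the remaining cosets are nontrivial. Your proposed count --- matching $\dim\ker{\h_{T^c}}\cdot\dim(\field_2^S/\rs{\h_A})+\dim(\field_2^T/\rs{\h_B^t})\cdot\dim\ker{\h_{S^c}^t}$, modulo $\Omega_X$, against $\dim\ker{\h_B^t}\cdot\dim\ker{\h_A}$ from Theorem~\ref{thm:logicalZ} --- is not obviously an identity and would itself require the correctability hypotheses to untangle the $\Omega_X$ overlap you flag as the main obstacle. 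The proxy argument sidesteps this entirely, so I would recommend adopting it rather than attempting the count.
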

Before proceeding to the proof, we make the following observations and highlight important features of this claim.
At first glance, the logical operators appear to break into two types, the VV type logicals defined by $\ker{\h_{T^c}}$ and the CC type operators defined by $\ker{\h_{S^c}^t}$.
Thus the logical $X$ operators are defined by the code spaces that are left over after the portions corresponding to $T$ and $S$ have been carved out.

The set $\h_Z + O_Z$ represents $Z$ stabilizers \emph{outside} the puncture.
Vectors in the kernel of $\h_Z + O_Z$ are unaffected by the addition of the puncture, and in that sense represent some invariant space.
The space $\Omega_X$ thus contains $X$ stabilizers and logicals whose support does not overlap with the puncture.
To help this object seem less alien, let us return to the surface code and consider an example.

Consider a smooth puncture defined on a surface code with only smooth boundaries.
This could define a logical qubit with the $X$ string running from the boundary of the smooth puncture to one of the boundaries of the lattice.
Depending on the arrangement, this logical operator could be supported only on CC qubits or only VV qubits as shown in fig. \ref{fig:smooth-everywhere}.
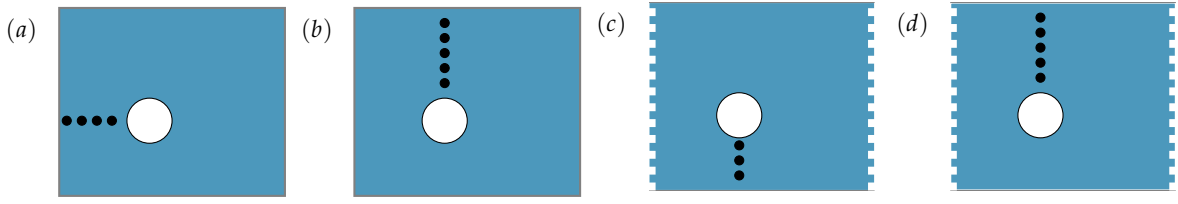
\begin{figure}[h]
\centering
  \begin{tikzpicture}
      \draw[gray,fill={rgb,255: red,76; green,152; blue,188},thick,inner sep=5pt] (1,2.5) rectangle (4,5);

      \foreach \t in {5,...,8} {
        \draw[fill=black] ({0.1+0.2*\t},3.5) circle (0.06cm);
      }
      \draw[fill=white] (2.2,3.5) circle (0.3cm);
      \node at (0.5,4.7) {$(a)$};
  \end{tikzpicture}
  \begin{tikzpicture}
    \draw[gray,fill={rgb,255: red,76; green,152; blue,188},thick,inner sep=5pt] (1,2.5) rectangle (4,5);
  
    \foreach \t in {0,...,4} {
      \draw[fill=black] (2.2,{4+0.2*\t}) circle (0.06cm);
    }
    \draw[fill=white] (2.2,3.5) circle (0.3cm);

    \node at (0.5,4.7) {$(b)$};
\end{tikzpicture}
\begin{tikzpicture}
  \draw[white,fill={rgb,255: red,76; green,152; blue,188},inner sep=5pt] (1,2.5) rectangle (4,5);
  \draw[gray] (1,2.5) edge[-] (4,2.5) (1,5) edge[-] (4,5);
  \draw[white,line width=5pt,dashed] (1,2.5) edge[-] (1,5) (4,2.5) edge[-] (4,5);
  
  \foreach \t in {0,...,4} {
    \draw[fill=black] (2.2,{2.7+0.2*\t}) circle (0.06cm);
    }
    \draw[fill=white] (2.2,3.5) circle (0.3cm);

  \node at (0.5,4.7) {$(c)$};
\end{tikzpicture}
\begin{tikzpicture}
  \draw[white,fill={rgb,255: red,76; green,152; blue,188},thick,inner sep=5pt] (1,2.5) rectangle (4,5);
  \draw[gray] (1,2.5) edge[-] (4,2.5) (1,5) edge[-] (4,5);
  \draw[white,line width=5pt,dashed] (1,2.5) edge[-] (1,5) (4,2.5) edge[-] (4,5);

  \foreach \t in {0,...,4} {
    \draw[fill=black] (2.2,{4+0.2*\t}) circle (0.06cm);
    }
    \draw[fill=white] (2.2,3.5) circle (0.3cm);

  \node at (0.5,4.7) {$(d)$};
\end{tikzpicture}
  \caption{Panels $(a)$ and $(b)$ feature a smooth puncture defined on a lattice with only smooth boundaries.
  The strings of $X$ operators defined only on VV qubits (as in panel $(a)$) or only on CC qubits (as in panel $(b)$) are equivalent.
  Panels $(c)$ and $(d)$ feature a lattice with smooth and rough boundaries, with a smooth punctured carved out from the inside.
  The logical $X$ shown running from the smooth puncture to the boundary in panels $(c)$ and $(d)$ are equivalent up to an embedded logical $X$.
  }
  \label{fig:smooth-everywhere}
\end{figure}
However, these two objects are equivalent up to stabilizer.
This equivalence is captured by $\Omega_X$.

Furthermore, consider a lattice with two smooth and rough boundaries as shown in panels $(c)$ and $(d)$ of fig. \ref{fig:smooth-everywhere}.
The two representations of the logical $X$ operator shown running from the smooth puncture to the boundary are equivalent up to an embedded logical $X$ operator, and $X$ stabilizers.
This equivalence is also captured by $\Omega_X$.

With these comments, we proceed to the proof of theorem \ref{thm:logicalX}.

\begin{proof}
  We start from first principles.
  Recall that we defined the matrix $\INT = \left(\1_B \otimes \1_S | \1_T \otimes \1_A\right)$. 
  The logicals are defined as
  \begin{align*}
    \ker{\h_Z + \h_Z'}/\rs{\h_X + \h_X' + \INT}~.
  \end{align*}

  \textbf{Part 1:} $\ker{\h_Z + \h_Z'}$

  We would like to understand the structure of vectors $\alpha \in \ker{\h_Z + \h_Z'}$.
  We shall assume that $\alpha$ is not supported on the interior of the puncture $B \times S \union T \times A$.
  If $\alpha$ is supported within the interior, it can be removed using the single-qubit operators described by $\INT$.
  As shown by lemma \ref{lem:embeddedX}, the embedded logical $X$ operators are unaffected by the puncture.

  We shall argue that $\alpha$ ought to have a certain structure using a proxy.
  Let us define $a \in \field_2^{C \times V}$ as
  \begin{align}
    \label{eq:g}
    a = \h_Z \alpha = \h_Z' \alpha~.
  \end{align}
  The only occasion when $a$ is non-trivial is when $\alpha$ is supported on the boundary.
  If not, $\alpha$ is either a stabilizer or logical belonging to the code that is unaffected by the puncture.
  We shall mod out by this set, and this will correspond to $\Omega_X$.

  The VV and CC portions of eq. \ref{eq:g} stipulate that
  \begin{enumerate}
    \item $a \subseteq \im{\h_T\1_{M\setminus B}} \cap \im{\h\1_{B^c}} \otimes \field_2^S$;
    \item $a \subseteq \field_2^T \otimes \im{\h_S^t\1_{N\setminus A}} \cap \im{\h\1_{A^c}}$
  \end{enumerate}
  respectively.

  Equivalently, this means that
  \begin{align}
    \label{eq:kerhz}
    \alpha \in (\ker{\h_{T^c}} \otimes \field_2^S | \bz_{m^2}) \union (\bz_{n^2}| \field_2^T \otimes \ker{\h_{S^c}^t})~.
  \end{align}

  \textbf{Part 2:} $\rs{\h_X + \h_X' + \INT}$

  Since the operators we are interested in only lie outside the puncture by assumption, we may ignore $\INT$ and need only concern ourselves with $\rs{\h_X + \h_X'}$.
  Let $g \in \ker{\h_{T^c}}$ and $x \in \rs{\h_A}$ such that $a\h_A = x$.
  Its product $g \otimes x$ is clearly in $\ker{\h_{T^c}} \otimes \rs{\h_A}$ and therefore in the kernel of $\h_Z + \h_Z'$.
  We shall show that this vector lies in the span of the $X$ stabilizers as well.
  Indeed, it can be expressed as
  \begin{align*}
    (g \otimes a)\left(\h_X + \h_X'\right) = g \otimes x~.
  \end{align*}
  In a similar manner, we can show that any vector in $\rs{\h_B^t} \otimes \ker{\h_{S^c}^t}$ is in the row span of $\h_X + \h_X'$.

  As was already shown in lemma \ref{lem:indpt}, any vector $\alpha \in \field_2^{n^2 + m^2}$ that is in the row span of
  \begin{enumerate}
    \item $(\ker{\h_{T^c}} \otimes \field_2^S / \rs{\h_A} | \bz_{m^2})$
    \item $(\bz_{n^2}| \field_2^T / \rs{\h_B^t} \otimes \ker{\h_{S^c}^t})$
  \end{enumerate}
  do not lie in the row span of the stabilizer $\h_X + \h_X'$.

  This completes the proof.
\end{proof}

\section{Wormholes}
\label{sec:wormholes}

We have now established how to construct punctures by carving out portions of the hypergraph product code.
On the surface code, punctures facilitate CNOT gates on encoded qubits via braiding, but it is limited.
This process maps physical (and logical) $X$ operators to $X$ operators and $Z$ operators to $Z$ operators.
In other words, it is a CSS-preserving operation.
To complete even just the Clifford group, we require operations that can map $X$ operators to $Z$ operators, on the physical and logical levels.
In particular, we need ways to perform logical single-qubit Clifford operations.
On a $2$-dimensional code, twist defects \cite{bombin2010topological,bombin2011clifford,yoder2017surface,brown2017poking} can be used to encode qubits, and also perform single-qubit Clifford gates on these qubits.

Twist defects however rely on symmetries of $2$-dimensional codes that do not naturally extend to general LDPC codes.
For instance, an error chain on the surface code has two frustrated stabilizers on either end regardless of the length of the chain.
LDPC codes however do not possess these properties.
For instance, expander codes have the property that the number of frustrated stabilizers grows with the size of the error.
For these reasons, we have to look for other ways of generalizing twist defects.

In a companion paper, we introduce a defect called a wormhole that addresses this issue.
Rather than rely on line-like defects, it builds upon and generalizes puncture defects.
Since we already know how to construct punctures on the hypergraph product code, it is natural to extend them to wormholes.

The key idea is to entangle stabilizers along the boundaries of punctures.
Doing so yields hybrid stabilizers whose weight does not scale with the size of the puncture.
As we shall see, these stabilizers are created by measuring two-qubit Pauli operators.
These measurements locally break the CSS nature of the code and serve as a resource to complete the Clifford group.

Let $\G = (V\union C,E)$ be a bipartite graph corresponding to a classical code $\C$.
Consider a hypergraph product of a graph $\G$ with itself.
As before, let $S \subseteq V$ and $T \subseteq C$ be connected subsets of variable nodes and check nodes respectively.
Furthermore the induced subgraphs do not overlap, i.e. they obey $N \cap T = M \cap S = \emptyset$.
These sets must be correctable, i.e., they obey conditions specified in definition \ref{def:correctable}.

The wormhole is created by entangling the stabilizers along the boundaries of two punctures.
This alters the structure of the code along the boundaries, and we must ensure that these enlarged regions remain correctable.
Hence, we need to strengthen the notion of correctability to include the neighborhoods that define the punctures.
\begin{definition}[Extended correctability]
\label{def:wormhole-correctable}
In addition to condition \ref{def:correctable}, wormholes will also need to obey
\[
  \ker{\h_{\Gamma(M)}} = \ker{\h_{\Gamma(N)}^t} = \emptyset~.
\]
\end{definition}
This will be necessary because the stabilizers on the boundary of the puncture will be removed to form hybrid stabilizers.
To argue that the logical operators that emerge have certain properties, we shall use the above extended correctability condition.
Equivalently these can be thought of as the conditions for a puncture defined using the sets $\tS := M$ and $\tT := N$.

With these constraints established, we can associate a smooth puncture to the product $T \times S$ and a rough puncture to the product $S \times T$.
These punctures will be used to construct a wormhole in the following sections.

\subsection{Measurements and hybrid-stabilizers}

Within the interior of the punctures, we perform the same measurements as we did to initialize a puncture -- single-qubit $X$ measurements within the smooth puncture and single-qubit $Z$ measurements within the rough puncture.
We then perform two-qubit measurements along the boundaries of the two punctures.
This yields hybrid stabilizers.

For what follows, it will be helpful to use the schematic for the smooth and rough puncture shown in fig. \ref{fig:wormholeSchematic}.
Recall that the boundaries of punctures are described as follows:
\begin{enumerate}
  \item Smooth puncture: $T \times \left(N \setminus A\right) \union \left(M \setminus B\right) \times S$.
  \item Rough puncture: $\left(N \setminus A\right) \times T \union S \times \left(M \setminus B\right)$.
\end{enumerate}
For any VV qubit $(u,u')$ or CC qubit $(c,c')$, we shall let $P(u,u')$ or $P(c,c')$ denote the single-qubit Pauli operator $P$ on that qubit.

\begin{figure}[h]
  \centering
  \begin{tikzpicture}
    \draw (0,0) rectangle (1.5,1.5); % T x S
    \node at (0.75,0.75) {$T \times S$};
    \draw (1.9,1.9) rectangle (3.1,3.1); % B x A
    \node at (2.5,2.5) {$B \times A$};
    \draw (1.9,3.25) rectangle (3.25,3.6); % (M \ B) x A top
    \node at (2.55,3.4) {{$\scriptstyle (M \setminus B) \times A$}};
    \draw (3.25,1.9) rectangle (3.6,3.25); % B x (N \ A) right
    \node[rotate=-90] at (3.4,2.6) {{$\scriptstyle B \times (N\setminus A)$}};
    \draw (1.9,0) rectangle (3.1,1.5); % T x A
    \node at (2.5,0.75) {$T \times A$};
    \draw (0,1.9) rectangle (1.5,3.1); % B x S
    \node at (0.75,2.5) {$B \times S$};
    \draw (-0.5,0) edge[-] (-0.5,1.5); % T
    \node at (-0.75,0.75) {$T$};
    \draw (-0.5,1.9) edge[-] (-0.5,3.1); % B
    \node at (-0.325,2.5) {$B$};
    \draw (-0.6,1.9) edge[-] (-0.6,3.6); % M
    \node at (-0.8,2.5) {$M$};
    \draw (0,-0.5) edge[-] (1.5,-0.5); % S
    \node at (0.75,-0.75) {$S$};
    \draw (1.9,-0.5) edge[-] (3.1,-0.5); % A
    \node at (2.5,-0.3) {$A$};
    \draw (1.9,-0.6) edge[-] (3.6,-0.6); % N
    \node at (2.5,-0.8) {$N$};
    \draw (0,3.25) rectangle (1.5,3.6); % (M \ B) x S top
    \node at (0.75,3.4) {{$\scriptstyle (M \setminus B) \times S$}};
    \draw (3.25,0) rectangle (3.6,1.5); % T x (N \ A) right
    \node[rotate=-90] at (3.4,0.75) {{$\scriptstyle T \times (N\setminus A)$}};
    
    \begin{scope}[yscale=1,xscale=-1,xshift=-250]
      \draw (0,0) rectangle (1.5,1.5); % S x T
      \node at (0.75,0.75) {$S \times T$};
      \draw (1.9,1.9) rectangle (3.1,3.1); % A x B
      \node at (2.5,2.5) {$A \times B$};
      \draw (1.9,3.25) rectangle (3.25,3.6); % A x (M \ B) top
      \node at (2.55,3.4) {{$\scriptstyle (N \setminus A) \times B$}};
      \draw (3.25,1.9) rectangle (3.6,3.25); % (N \ A) x B right
      \node[rotate=90] at (3.4,2.6) {{$\scriptstyle A \times (M \setminus B)$}};
      \draw (1.9,0) rectangle (3.1,1.5); % S x B
      \node at (2.5,0.75) {$S \times B$};
      \draw (0,1.9) rectangle (1.5,3.1); % A x T
      \node at (0.75,2.5) {$A \times T$};
      \draw (-0.5,0) edge[-] (-0.5,1.5); % S
      \node at (-0.75,0.75) {$S$};
      \draw (-0.5,1.9) edge[-] (-0.5,3.1); % A
      \node at (-0.325,2.5) {$A$};
      \draw (-0.6,1.9) edge[-] (-0.6,3.6); % N
      \node at (-0.8,2.5) {$N$};
      \draw (0,-0.5) edge[-] (1.5,-0.5); % T
      \node at (0.75,-0.75) {$T$};
      \draw (1.9,-0.5) edge[-] (3.1,-0.5); % B
      \node at (2.5,-0.3) {$B$};
      \draw (1.9,-0.6) edge[-] (3.6,-0.6); % M
      \node at (2.5,-0.8) {$M$};
      \draw (0,3.25) rectangle (1.5,3.6); % S x (M \ B) top
      \node at (0.75,3.4) {{$\scriptstyle (N \setminus A) \times T$}};
      \draw (3.25,0) rectangle (3.6,1.5); % (N \ A) x T right
      \node[rotate=90] at (3.4,0.75) {{$\scriptstyle S \times (M\setminus B)$}};
    \end{scope}
  \end{tikzpicture}
  \caption{Schematic to denote the wormhole.
  Smooth puncture on the left and a rough puncture on the right.}
  \label{fig:wormholeSchematic}
\end{figure}
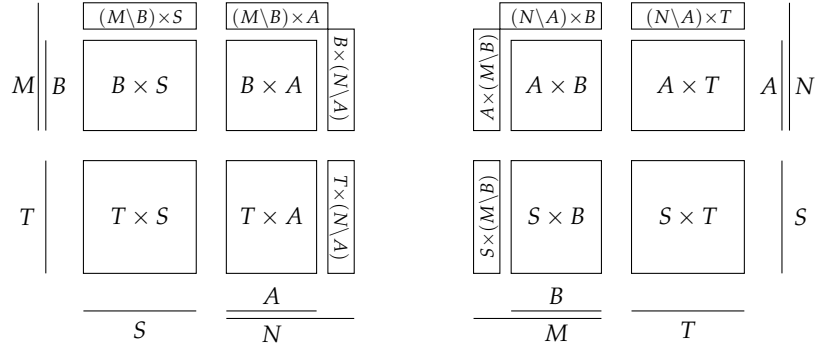

The hybrid-stabilizers are generated by the following measurements along these boundaries.
\begin{enumerate}
\item CC qubits: for every $c \in N\setminus A$, $c' \in T$, measure $X(c,c') \otimes Z(c',c)$; we denote this as
\begin{align*}
  (N \setminus A) \times T \leftrightarrow T \times (N \setminus A)~.
\end{align*}
\item VV qubits: for every $u \in S$, $u' \in M \setminus B$, measure $X(u,u') \otimes Z(u',u)$; we denote this as
\begin{align*}
  S \times (M\setminus B) \leftrightarrow (M \setminus B) \times S~.
\end{align*}
\end{enumerate}

These two-qubit measurements do not commute with the stabilizers located on the boundary of the puncture.
The next proposition will list the hybrid stabilizers we choose to resolve anti-commutations due to these measurements.
Suppose $P$ and $Q$ are some sets of variable and check nodes respectively, sets of the form $P \times Q$ will refer to $X$ stabilizer and $Q \times P$ to $Z$ stabilizers.
We shall write $P \times Q \leftrightarrow Q \times P$ to denote the hybrid stabilizer formed by pairing stabilizers in a natural way.
In other words, for $p \in P$ and $q \in Q$, we let $P \times Q \leftrightarrow Q \times P$ denote the hybrid stabilizers acting as $X$ on the support of $(p,q)$ and $Z$ on the support of $(q,p)$.
As a matter of convention, the operators with $X$ support are always denoted on the left, and those with $Z$ support on the right.

The proposition states that the set of stabilizers that we choose to form hybrids are those that are adjacent to the puncture minus those in the interior.
In other words, this is the set of stabilizers that live on the boundary of the punctures.
\begin{proposition}
  \label{prop:hybridStabs}
  Upon performing the measurements listed above, the new hybrid-stabilizers are associated with
  \begin{align*}
    (M \times N) \setminus (B \times A) \leftrightarrow (N \times M) \setminus (A \times B)~,
  \end{align*}
where the double-arrow denotes the one-to-one pairing between the two sets as described above.
\end{proposition}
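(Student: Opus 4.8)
The plan is to work directly with the two-qubit measurements and the old stabilizer generators, and to identify which products of old stabilizers must be merged into hybrids in order to restore commutativity. First I would set up the bookkeeping: the measured operators are $M_1 = X(c,c') \otimes Z(c',c)$ for $(c,c') \in (N\setminus A)\times T$ (so the $X$-leg lives on a CC qubit indexed by $(c,c')$ and the $Z$-leg on the CC qubit $(c',c)$), and $M_2 = X(u,u') \otimes Z(u',u)$ for $(u,u')\in S \times (M\setminus B)$, with both legs on VV qubits. For each old $X$-stabilizer generator $g_X$ indexed by a node in $V\times C$ and each old $Z$-stabilizer generator $g_Z$ indexed by a node in $C\times V$, I would compute the symplectic inner product with $M_1$ and $M_2$. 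An $X$-stabilizer anticommutes with $M_i$ exactly when its support overlaps the $Z$-leg of $M_i$ in an odd number of qubits; a $Z$-stabilizer anticommutes exactly when it overlaps the $X$-leg oddly. Using the incidence structure of $\G_\Q = \G\times\G$ (an $X$-stabilizer at $(v,c)$ touches VV qubits $\Gamma(v)\times\{c\}$ via $\h$ and CC qubits $\{v\}\times\Gamma^{-1}(c)$ via $\h^t$, and dually for $Z$), this reduces to checking membership of $v,c$ in the sets $S,T,M,N,A,B$ and their complements.

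Next I would determine precisely which generators are spoiled. The $Z$-leg of $M_1$ is the CC qubit $(c',c)$ with $c'\in T$, $c\in N\setminus A$; an $X$-stabilizer at $(v,c'')$ touches this qubit iff $v=c'$\ldots more carefully, iff $c''=c'$ and $c\in\Gamma^{-1}(c'')$, which (after translating to the $\h^t$ block) says the $X$-stabilizer sits at a node in $M\times N$ and is incident to the boundary CC qubit. The key observation is that stabilizers entirely inside the puncture, i.e.\ those indexed in $B\times A$, have already been removed (they are in the interior, measured out when the puncture was created), while stabilizers indexed in $(M\times N)\setminus(B\times A)$ are exactly those adjacent to the puncture that still have support on boundary qubits. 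The symmetric computation for $Z$-stabilizers and the $X$-legs of $M_1,M_2$ picks out $(N\times M)\setminus(A\times B)$. So I would show: (i) every old $X$-generator indexed outside $M\times N$ commutes with both $M_1$ and $M_2$ (its support misses the boundary qubits entirely, using $\Gamma(S^c)\subseteq A^c$ and $\Gamma(T^c)\subseteq B^c$); (ii) each old $X$-generator indexed in $(M\times N)\setminus(B\times A)$ anticommutes with exactly the measured operators sharing its boundary qubit; and dually for $Z$.

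Then the hybrid stabilizers are forced: for a matched pair — an $X$-stabilizer at $(v,c)\in (M\times N)\setminus(B\times A)$ and the $Z$-stabilizer at $(c,v)\in (N\times M)\setminus(A\times B)$ — each individually anticommutes with the same boundary measurement operators (this is the point where the one-to-one pairing $P\times Q \leftrightarrow Q\times P$ is natural: the transpose symmetry of $\h_X$ versus $\h_Z$ means the $X$-stabilizer's overlap with a $Z$-leg mirrors the $Z$-stabilizer's overlap with the corresponding $X$-leg), so their product commutes with all the $M_i$. One must also check these hybrid products commute with each other and with the surviving (unbroken) old generators — the former because $\h_X\h_Z^t=0$ plus a matching of cross terms as in the proof of Lemma \ref{lem:validCommutation}, the latter because an unbroken generator's overlap with a hybrid is its overlap with the old $X$-part plus its overlap with the old $Z$-part, each of which is already even. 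Finally I would count: the number of destroyed generators equals the number of measured qubits up to the hybrids produced, confirming we have a valid stabilizer code and that the hybrid generators are indexed exactly by $(M\times N)\setminus(B\times A) \leftrightarrow (N\times M)\setminus(A\times B)$.

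The main obstacle I expect is the careful index-chasing in step two — correctly tracking, for a boundary qubit that lies in $(M\setminus B)\times S$ versus $T\times(N\setminus A)$, which $X$- and $Z$-generators are incident to it and whether the incidence count is odd. The sets $M\setminus B$ and $N\setminus A$ (qubits adjacent to but not inside the puncture) are where the subtlety concentrates, and the identities $A^c = N^c \cup (N\setminus A)$, $B^c = M^c\cup(M\setminus B)$ together with eq.\ \eqref{eq:identity} will be the workhorses. Once the incidence structure is pinned down, the commutation checks and the counting are routine.
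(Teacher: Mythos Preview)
Your proposal is correct and follows essentially the same logic as the paper's proof: identify the stabilizers adjacent to the boundary that anticommute with the two-qubit measurements, observe that the interior generators $B\times A$ and $A\times B$ were already removed when the punctures were formed, and use the transpose symmetry to pair the frustrated $X$-generator at $(v,c)$ with the frustrated $Z$-generator at $(c,v)$. The paper's argument is much terser than yours --- it simply notes that a $Z$ stabilizer $(c,v)$ with $c\in N$, $v\in M$ anticommutes with some two-qubit measurement iff $c\in N\setminus A$ or $v\in M\setminus B$, invokes the symmetry ``any $X$ measurement on the support of the $Z$ stabilizer $(c,v)$ will also act as $Z$ on the support of the $X$ stabilizer $(v,c)$'', and concludes --- whereas you plan the full incidence-structure bookkeeping plus the mutual-commutation and counting checks; those extra steps are fine but the paper omits them.
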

\begin{proof}
  Since we begin with a puncture, certain stabilizers have already been removed from our code.
  These correspond to the stabilizers $B \times A$ of $X$ type and $A \times B$ of $Z$ type from the interior of respective punctures.
  We remove these stabilizers from a larger set corresponding to $M \times N$ and $N \times M$ respectively.
  With the interior carved out, this leaves only the boundary of the two punctures.

  Next, consider two nodes $c,v$ such that $c \in N$ and $v \in M$.
  The $Z$ stabilizer $(c,v)$ will anti-commute with the $X$ measurements if either $c \in N\setminus A$ or $v \in M \setminus B$.
  By symmetry, any $X$ measurement on the support of the $Z$ stabilizer $(c,v)$ will also act as $Z$ on the support of the $X$ stabilizer $(v,c)$.
  The two individual stabilizers are frustrated, but this can be resolved by pairing them.

  This produces the desired hybrid stabilizers.
\end{proof}

The weight of the hybrid-stabilizers is thus independent of the size of the code.
For this reason, this construction guarantees that we still have an LDPC code.

Thus when creating a wormhole, we begin as before by carving out certain portions of the code.
Then, we perform two-qubit measurements along the boundaries of these punctures.
We remove $X$ and $Z$ stabilizers from the code either because they lie within a puncture, or they anti-commute with a two-qubit measurement and are replaced by a hybrid stabilizer.

The code thus has the following $X$, $Z$ and hybrid (denoted $h$) stabilizers:
\begin{align*}
  \begin{matrix}
    X: & (V \times C) \setminus [(S \times T) \union (M \times N)]\\
    Z: & (C \times V) \setminus [(T \times S) \union (N \times M)]\\
    h: & (M \times N) \setminus (B\times A) \leftrightarrow (N \times M) \setminus (A \times B)\\
       & (N\setminus A) \times T \leftrightarrow T \times (N \setminus A)\\
       & S \times (M \setminus B) \leftrightarrow (M \setminus B) \times S~.
    \end{matrix}
\end{align*}
We conclude by reiterating the origin of these objects.
Note that there are two punctures that are used to create the wormhole, one smooth and one rough.
The set of all $X$ stabilizers corresponds to $V \times C$, but we subtract those stabilizers in these punctures.
This corresponds to $S \times T$ from the rough puncture and $M \times N$ from the smooth puncture and the hybrid stabilizers.
Similarly, the set of all $Z$ stabilizers corresponds corresponds to $C \times V$, but we subtract the stabilizers $T \times S$ from the smooth puncture and $N\times M$ from the rough puncture and the hybrid stabilizers.
The hybrid stabilizers are created using the boundaries of these sets and are stated in proposition \ref{prop:hybridStabs}.
The last two lines of hybrid stabilizers correspond to the two-qubit measurements used to generate the wormhole.
We remind the reader that in this notation, operators with $X$ support are to the left of the arrow, and those with $Z$ support are to the right of the arrow.

\subsection{Logical Pauli operators for wormholes}

When we create a wormhole from two punctures, there are two ways in which stabilizers are updated.
First, there are stabilizers within the puncture that are jettisoned because they anti-commute with the single-qubit measurements.
Second, there are stabilizers on the boundary of the puncture that are replaced by hybrid stabilizers.
This results in two sets of logical operators for the wormhole as we shall see below.
As in the case of punctures, there are two varieties of logical operators: loop-type operators and chain-type operators.
These logical operators are inherited from the underlying punctures.

The first set of logical operators can be described as the logical operators corresponding to the punctures $S \times T$ and $T \times S$.
Given the symmetry of the construction, there is a one-to-one correspondence between the loop-type logical $X$ operators around the rough puncture $S \times T$ and the loop-type logical $Z$ operators around the smooth puncture $T \times S$.

\begin{lemma}
\label{lem:wormhole-logicalZ}
  The logical $Z$ operators correspond to loop-type operators around one of the punctures.
  They come in two sets which are described as follows.
  \begin{enumerate}[label={}]
    \item \textbf{Type 1:} The first set of logical $Z$ operators correspond to the smooth puncture $T \times S$.
    \item \textbf{Type 2:} The second set of logical $Z$ operators corresponding to the smooth puncture $N \times M$.
  \end{enumerate}
\end{lemma}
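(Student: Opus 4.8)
The plan is to read both families off Theorem~\ref{thm:logicalZ}: apply it once to the smooth puncture with data $(T,S)$ and once to the smooth puncture with data $(\tT,\tS)=(N,M)$, verify that the resulting loop operators survive the passage to the wormhole, and then use the two-qubit hybrid stabilizers to force an arbitrary logical $Z$ operator onto the union of the two puncture boundaries so that completeness follows. Recall that the wormhole is assembled from a smooth puncture $T\times S$ and a rough puncture $S\times T$ with $N\cap T=M\cap S=\emptyset$, that the extended correctability hypothesis (Definition~\ref{def:wormhole-correctable}) is precisely the correctability condition of Definition~\ref{def:correctable} applied to the smooth puncture $(\tT,\tS)=(N,M)$, and that the $Z$-type part of the stabilizer group is generated by the pure-$Z$ checks $(C\times V)\setminus[(T\times S)\union(N\times M)]$ together with the three families of hybrid stabilizers. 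A pure-$Z$ logical operator is a pure-$Z$ operator commuting with every $X$ and every hybrid stabilizer, taken modulo the pure-$Z$ and hybrid stabilizers; exactly as for plain punctures, a cleaning-lemma argument together with Lemma~\ref{lem:interiorClean} shows the embedded logical $Z$ operators are unaffected, so the claim concerns the defect-created operators.

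For Type~1 I would invoke Theorem~\ref{thm:logicalZ} for the smooth puncture $(T,S)$, giving the operators $(\tG_B^t\otimes\g_A^t)\h_Z'$ with $\h_Z'=(\h_T\otimes\1_S\,|\,\1_T\otimes\h_S^t)$; as in that proof their support cancels on the interior $B\times S\union T\times A$ and is carried by the boundary qubits $(M\setminus B)\times S\union T\times(N\setminus A)$, which still belong to the wormhole. They commute with the surviving pure-$X$ stabilizers, since those are among, or are truncations of, the $X$ checks of the $(T,S)$-punctured code, whose commutation was settled in Lemma~\ref{lem:validCommutation}; being pure $Z$ they commute with the $Z$-parts of the hybrids, and commutation with the $X$-parts is inherited, the $X$-part of a hybrid being supported inside the corresponding truncated $X$ stabilizer. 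Non-triviality modulo stabilizers follows from Lemma~\ref{lem:indpt} for the pure-$Z$ part; and a $\field_2$-combination of hybrids is pure $Z$ only if its $X$-parts cancel, but the $X$-parts of the two-qubit-measurement hybrids are the single-qubit operators $X(c,c')$, $X(u,u')$ on distinct boundary qubits that (thanks to $N\cap T=M\cap S=\emptyset$) are untouched by the other hybrids, so the combination must be trivial and the question reduces to Lemma~\ref{lem:indpt}.

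Type~2 is obtained the same way from Theorem~\ref{thm:logicalZ} applied to the smooth puncture $(\tT,\tS)=(N,M)$, which is correctable by the extended correctability hypothesis; writing $\widetilde B=\Gamma^{-1}(N)$, $\widetilde A=\Gamma^{-1}(M)$, this yields loop operators around $N\times M$ classified by $\ker{\h_{\widetilde B}^t}$ and $\ker{\h_{\widetilde A}}$ and supported on the outer boundary of that region. The verification that they commute with the pure-$X$ and hybrid stabilizers, and that they are not in the span of the stabilizer group, mirrors the Type~1 argument, now using eq.~\ref{eq:identity} for $\widetilde B$, $\widetilde A$. That no nontrivial $\field_2$-combination of a Type~1 and a Type~2 operator is a stabilizer follows by projecting onto the two disjoint boundary regions and applying the respective non-redundancy statements; this is the sense in which every logical $Z$ operator here is a loop-type operator around one of the two punctures.

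Completeness is the step I expect to be the main obstacle, but there is a route that avoids a global rank count. Let $\alpha$ be any pure-$Z$ operator commuting with all $X$ and hybrid stabilizers. Commuting with the single-qubit $X$-parts $X(c,c')$ ($c\in N\setminus A$, $c'\in T$) and $X(u,u')$ ($u\in S$, $u'\in M\setminus B$) of the two-qubit-measurement hybrids forces $\alpha$ to vanish on the boundary qubits $(N\setminus A)\times T$ and $S\times(M\setminus B)$. Then $\alpha$ commutes with every $X$ stabilizer except, effectively, those carved out of $S\times T$ and of $B\times A$ --- i.e.\ $\alpha$ sees two holes, the inner puncture $(T,S)$ and the annular region around $N\times M$ --- so projecting onto the boundary qubits of each hole and applying Theorem~\ref{thm:logicalZ} to each expresses $\alpha$, modulo stabilizers, as a sum of a Type~1 and a Type~2 operator. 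The heaviest bookkeeping, and the place where correctability and extended correctability really enter, is pinning down the exact (full versus truncated) supports of the $X$- and $Z$-parts of the hybrid stabilizers so that this two-hole decomposition and the accompanying non-redundancy arguments go through cleanly.
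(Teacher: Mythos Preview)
Your approach for the existence and non-triviality of the two families is essentially the paper's: invoke Theorem~\ref{thm:logicalZ} for the smooth puncture $(T,S)$ and for the enlarged smooth puncture $(N,M)$ (legitimized by Definition~\ref{def:wormhole-correctable}), then check that the resulting loop operators survive the wormhole stabilizer group. The paper's argument is considerably terser than yours: for Type~1 it simply notes that the two-qubit measurements act as $Z$ on the $T\times S$ boundary and hence commute with a $Z$-loop there, and that any product of two-qubit hybrids is supported on both mouths so cannot equal a loop on one; for Type~2 it observes that those loops sit on $(\Gamma(N)\setminus S)\times M\cup N\times(\Gamma(M)\setminus T)$, disjoint from the two-qubit measurement support, so commutation and non-redundancy are immediate.

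Where you genuinely diverge is in attempting completeness. The paper does \emph{not} prove that every defect-created logical $Z$ is a combination of Type~1 and Type~2 loops; its proof only verifies that the two named families are logical operators and stops there. Your sketch of completeness (force $\alpha$ off the rough-puncture boundary via the single-qubit $X$-parts of the two-qubit hybrids, then decompose by the two holes) is a reasonable strategy but is extra work relative to what the lemma, as proved in the paper, actually establishes. If you keep that part, flag it as going beyond the paper's claim; otherwise your argument matches the paper's in spirit, only with more bookkeeping spelled out.
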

\begin{proof}
  To show that these objects are no longer part of the stabilizer group but commute with the $X$ and $Z$ stabilizers, we point to the proof of theorem \ref{thm:logicalZ}.
  The development is similar save for the new stabilizers, the new two-qubit operators that were measured along the boundaries of the two punctures.

  \textbf{Type 1:} Note that the measurements surrounding the smooth puncture are of $Z$ type and will therefore commute with the loop-type logical $Z$ operators of type 1.
  Furthermore, these logical operators are defined only along the boundary of the puncture $T \times S$.
  A product of the two-qubit stabilizers is necessarily defined on both punctures, as $X$ on the puncture $T \times S$ and $Z$ on the puncture $S \times T$.
  Therefore this implies that the proposed logical operators of type 1 cannot be in the span of the stabilizers.

  \textbf{Type 2:} The logical operators of type 2 are defined on $(\Gamma(N) \setminus S) \times M \union N \times (\Gamma(M) \setminus T)$.
  Thus they do not interact with the two-qubit measurements.
  For the same reason, they cannot be expressed as a product of the two-qubit measurement operators.
  By definition \ref{def:wormhole-correctable}, these objects do not affect the embedded logical operators, and are themselves not in the span of the $Z$ stabilizer.
\end{proof}

Before proceeding to the conjugate logical operators, it will be useful to highlight a symmetry of this construction.
For logical $Z$ operators, we chose the vector of $Z$ operators around the puncture $T \times S$ for type 1 and $N \times M$ for type 2.
Equivalently we could have chosen the vector of $X$ operators around the puncture $S \times T$ for type 1 or $M \times N$ for type 2.
The next lemma states that these two choices are equivalent.

\begin{lemma}
  Every logical loop-type operator for a wormhole has two equivalent representations: an $X$ type loop around one puncture or a $Z$ type loop around the other.
\end{lemma}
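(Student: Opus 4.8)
The plan is to exhibit an explicit \emph{mirror map} which, given a loop-type logical operator written as a $Z$-type loop around one puncture, multiplies it by a product of the two-qubit hybrid stabilizers of Proposition~\ref{prop:hybridStabs} and returns the \emph{same} logical operator written as an $X$-type loop around the other puncture; one then checks that this map preserves the logical class and is an involution, so that the two presentations are genuinely interchangeable.

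First I would pin down the relevant qubit supports from Table~\ref{tab:summary} and the wormhole stabilizer list. A Type-1 logical $Z$ operator around the smooth puncture $T\times S$ (Lemma~\ref{lem:wormhole-logicalZ}, via Theorem~\ref{thm:logicalZ}) is supported on the boundary qubits $(M\setminus B)\times S$ and $T\times(N\setminus A)$, whereas a logical $X$ operator around the rough puncture $S\times T$ (Theorem~\ref{thm:logicalX}, with the roles of $S$ and $T$ exchanged) is supported on $S\times(M\setminus B)$ and $(N\setminus A)\times T$. These are exactly the $Z$-side and the $X$-side of the two-qubit hybrid stabilizers $S\times(M\setminus B)\leftrightarrow(M\setminus B)\times S$ and $(N\setminus A)\times T\leftrightarrow T\times(N\setminus A)$ that build the wormhole.

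Next I would define the mirror map: given such a $Z$-loop $\ell$, multiply it by the hybrid stabilizer $X(u,u')\otimes Z(u',u)$ for every $(u',u)\in(M\setminus B)\times S$ in $\supp{\ell}$, and by $X(c,c')\otimes Z(c',c)$ for every $(c',c)\in T\times(N\setminus A)$ in $\supp{\ell}$. Each of these belongs to the generalized stabilizer group of the wormhole, so the logical class is unchanged. On the $Z$-side every factor cancels the corresponding single-qubit $Z$ of $\ell$, and the surviving $X$ factors sit precisely on the mirror qubits $S\times(M\setminus B)$ and $(N\setminus A)\times T$, so the image is an $X$-type operator living on the boundary of the rough puncture. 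Matching the classifying data confirms that it is the \emph{corresponding} loop: the same pair of classical codewords, one in $\ker{\h_B^t}$ and one in $\ker{\h_A}$, parametrizes both the $Z$-loop (Theorem~\ref{thm:logicalZ}) and, after exchanging $S\leftrightarrow T$, the $X$-loop (Theorem~\ref{thm:logicalX}); running the computation in reverse shows the map is an involutive bijection between the two families. The Type-2 loops around the extended punctures $N\times M$ and $M\times N$ are treated the same way, but now the cancelling operators are the boundary hybrid stabilizers $(M\times N)\setminus(B\times A)\leftrightarrow(N\times M)\setminus(A\times B)$ of Proposition~\ref{prop:hybridStabs}; the supports line up because $\Gamma(M)\setminus T$ and $\Gamma(N)\setminus S$ play for the extended puncture the roles that $N\setminus A$ and $M\setminus B$ play for the original one.

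The step I expect to be the main obstacle is verifying that the mirror image genuinely lands in the logical family of the \emph{other} puncture, rather than in some arbitrary operator of the right Pauli type. This requires checking, first, that the image commutes with all ordinary and hybrid stabilizers --- a parity computation in the spirit of Lemma~\ref{lem:validCommutation} and the proof of Proposition~\ref{prop:hybridStabs} --- and, second, that the image is not in the span of the stabilizer group. The second point is where correctability is needed: just as Lemma~\ref{lem:indpt} used Definition~\ref{def:correctable} to show that the loop representatives around a single puncture are non-redundant, here Definition~\ref{def:wormhole-correctable} supplies the same non-redundancy for the enlarged regions on which the hybrid stabilizers act, so that no product of stabilizers can trivialize the mirror image. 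With commutation and non-triviality in hand, the equivalence of the two representations follows.
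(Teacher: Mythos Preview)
Your approach is essentially the paper's: Type~1 is handled by multiplying through by the two-qubit hybrid stabilizers $X(u,u')\otimes Z(u',u)$ and $X(c,c')\otimes Z(c',c)$ exactly as you describe, and Type~2 by a suitable combination of the boundary hybrid stabilizers from Proposition~\ref{prop:hybridStabs}.

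One remark on where the work actually lies. The step you flag as the main obstacle---checking that the mirror image commutes with all stabilizers and is not in their span---is automatic: you are multiplying a known nontrivial logical by stabilizer elements, so both properties are inherited for free, and invoking Definition~\ref{def:wormhole-correctable} for non-redundancy is unnecessary here. The genuine technical point, which you pass over with ``the supports line up,'' is the Type~2 cancellation: the hybrid stabilizers of Proposition~\ref{prop:hybridStabs} are full paired stabilizers, not two-qubit operators, so multiplying by one of them deposits $X$ and $Z$ weight on many qubits, including interior ones. The paper resolves this by taking the specific linear combination $(g\otimes f)$ of hybrid stabilizers with $g\in\ker{\h_S^t}$ and $f\in\ker{\h_T}$; these kernel conditions are exactly what kill the interior contributions $(\h_A\otimes\1_B\,|\,\1_A\otimes\h_B^t)$ and leave a pure $X$-loop on the other side. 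Your plan is correct, but that computation is the substance of the Type~2 case.
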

\begin{proof}
  We shall deal with each type in turn.

  \textbf{Type 1:}

  Consider loop-type logical $X$ operators that emerge from the puncture $S \times T$.
  These logical operators are supported on $(N\setminus A) \times T \union S \times (M\setminus B)$.
  Each qubit on this boundary has a unique partner on the other boundary $T \times (N\setminus A) \union (M\setminus B) \times S$.
  By symmetry, there is a one-to-one correspondence between the loop-type logical $X$ operators on $S \times T$ and the loop-type logical $Z$ operators on $T \times S$.
  These can be mapped to one another because of the two-qubit measurements.

  \textbf{Type 2:}

  Let $\tG_S$, $\g_T$ be the matrices whose rows span $\ker{\h_S^t}$ and $\ker{\h_T}$ respectively.

  Let $g \in \tG_S^t$ and $f \in \g_T^t$ be any two rows of $\tG_S^t$ and $\g_T^t$ respectively.
  By the considerations above and theorem \ref{thm:logicalZ}, we can define $\alpha_Z$ as a loop-type operator around $T \times S$, where
  \begin{align*}
    \alpha_Z := (g \otimes f)(\h_N \otimes \1_M | \1_N \otimes \h_M^t)~.
  \end{align*}
  Similarly, $\alpha_X$ can be defined as a loop-type operator around $S \times T$, where
  \begin{align*}
    \alpha_X := (f \otimes g)(\1_M \otimes \h_N | \h_M^t \otimes \1_N)
  \end{align*}
  is also a logical operator.

  To show that these are in the span of the stabilizers, note that the hybrid stabilizers are given by
  \begin{align*}
    (\h_N \otimes \1_M | \1_N \otimes \h_M^t)_Z + (\h_A \otimes \1_B | \1_A \otimes \h_B^t)_Z \leftrightarrow (\1_M \otimes \h_N| \h_M^t \otimes \1_N)_X + (\1_B \otimes \h_A| \h_B^t \otimes \1_A)_X~.
  \end{align*}
  Thus the operator
  \begin{align*}
    (g \otimes f) (\h_N \otimes \1_M | \1_N \otimes \h_M^t)_Z \leftrightarrow (f\otimes g)(\1_M \otimes \h_N | \h_M^t \otimes \1_N)_X
  \end{align*}
  maps the loop-type logical $Z$ operator to the loop-type logical $X$ operator.

  Since this is true for arbitrary $f$ and $g$, any operator in the space can be mapped between one puncture and the other.
\end{proof}

The logical $X$ operator for the wormhole are products of chain-type operators.
The form of these operators are given in theorem \ref{thm:logicalX}.

\begin{lemma}
  For every loop-type logical $Z$ operator $L_Z$ around $T \times S$ and the unique loop-type logical $X$ operator $L_X$ on $S \times T$ corresponding to $L_Z$, let the conjugate chain-type operators be $Q_X$ and $Q_Z$ respectively.
  The product $Q_XQ_Z$ is the conjugate logical operator to the operator $L_Z$.
\end{lemma}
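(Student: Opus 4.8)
The plan is to build $Q_XQ_Z$ from the chain operator $Q_X$ of the smooth puncture and its image under the transpose symmetry of the hypergraph product, and then verify directly that the result commutes with every stabilizer of the wormhole while anticommuting with $L_Z$. First I would record the symmetry: let $\sigma$ be the involution that swaps the two tensor coordinates of every qubit ($V\times V\to V\times V$, $C\times C\to C\times C$) and simultaneously exchanges $X\leftrightarrow Z$. Since $\sigma$ carries $\h_X=(\1_V\otimes\h|\h^t\otimes\1_C)$ to $\h_Z=(\h\otimes\1_V|\1_C\otimes\h^t)$ it is a symmetry of the base code, and comparing Definitions \ref{def:smooth} and \ref{def:rough} it carries the smooth puncture $T\times S$ onto the rough puncture $S\times T$ — e.g. it sends $\h_Z'=(\h_T\otimes\1_S|\1_T\otimes\h_S^t)$ of the smooth puncture to $(\1_S\otimes\h_T|\h_S^t\otimes\1_T)$, which is exactly the $\h_X'$ of the rough puncture — taking interiors to interiors and boundaries to boundaries. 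Restricted to boundary qubits, $\sigma$ is precisely the pairing implemented by the two-qubit measurements $X(q')\otimes Z(\sigma(q'))$. By the preceding lemma, $\sigma(L_Z)$ equals $L_X$ up to rough-puncture stabilizers, so $\sigma(Q_X)$ is a chain-type logical $Z$ of the rough puncture conjugate to $L_X$; I would take $Q_Z=\sigma(Q_X)$.

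Next I would check commutation with all wormhole stabilizers. Using the canonical representative of Theorem \ref{thm:logicalX}, $Q_X$ is supported on $B^c\times S\union T\times A^c$. Because $B\subseteq M$, $A\subseteq N$ and $M\cap S=N\cap T=\emptyset$, this support is disjoint from the rough puncture's interior $S\times B\union A\times T$, from its boundary, from the support of the hybrid stabilizers of Proposition \ref{prop:hybridStabs} (and the hybrids listed after it), and from all single-qubit $Z$ measurements, so $Q_X$ commutes with all of those; being pure $X$ it commutes with every $X$ stabilizer and single-qubit $X$ measurement, and as a logical of the smooth-punctured code it commutes with every surviving $Z$ stabilizer of the wormhole (which form a subset of the smooth puncture's). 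The mirror statements hold for $Q_Z=\sigma(Q_X)$. The only stabilizers left are the two-qubit measurements $M=X(q')\otimes Z(\sigma(q'))$: $Q_X$ anticommutes with $M$ iff $Q_X$ has $X$-support on $\sigma(q')$, while $Q_Z=\sigma(Q_X)$ anticommutes with $M$ iff it has $Z$-support on $q'$, i.e. iff $Q_X$ has $X$-support on $\sigma(q')$. These conditions coincide, so the two anticommutations cancel and $Q_XQ_Z$ commutes with every two-qubit measurement, hence with the entire stabilizer group.

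Finally I would establish the anticommutation. The loop-type operator $L_Z$ is supported only on the smooth boundary, and the same index disjointness shows $Q_Z$ (supported on $S\times B^c\union A^c\times T$ near the rough puncture) misses it, so $Q_Z$ commutes with $L_Z$ and $\{Q_XQ_Z,L_Z\}=\{Q_X,L_Z\}\neq 0$ since $Q_X$ was chosen conjugate to $L_Z$. Hence $Q_XQ_Z$ is not a stabilizer, commutes with all stabilizers, and — because $\sigma$ is a code isomorphism and $Q_X$, $Q_Z$ commute with all the other loop-type logicals of their respective punctures — it commutes with the remaining logical operators, so it represents the logical operator conjugate to $L_Z$. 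I expect the main obstacle to lie in the first step's bookkeeping: showing that the two-qubit-measurement pairing really is $\sigma$ on the boundary and that the representative $Q_Z=\sigma(Q_X)$ genuinely lands in the conjugate class named in the statement (rather than merely being some logical $Z$ of the rough puncture), so that the cancellation of anticommutations in the second step is exact; once that is in place, steps two and three are essentially support matching powered by $M\cap S=N\cap T=\emptyset$.
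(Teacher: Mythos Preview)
Your approach is essentially the same as the paper's: both rely on the transpose symmetry of the wormhole construction to argue that anticommutations of $Q_X$ and $Q_Z$ with the two-qubit boundary measurements occur in matching pairs and hence cancel, and both defer commutation with the remaining $X$ and $Z$ stabilizers to the puncture analysis of Theorem~\ref{thm:logicalX}. Your version is considerably more explicit than the paper's three-sentence sketch---you name the involution $\sigma$, fix the representative $Q_Z=\sigma(Q_X)$, and verify directly via the support constraints $M\cap S=N\cap T=\emptyset$ both that $Q_X$ misses the rough puncture (and vice versa) and that $Q_Z$ commutes with $L_Z$ so the anticommutation with $L_Z$ comes solely from $Q_X$---filling in steps the paper leaves to the reader; the only minor imprecision is your claim that $Q_X$'s support is disjoint from the hybrid stabilizers of Proposition~\ref{prop:hybridStabs}, when in fact it can meet their $X$ part (commutation still holds since $Q_X$ is pure $X$), but this does not affect the argument.
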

\begin{proof}
  Following the proof of theorem \ref{thm:logicalX}, the logical chain-type operators evidently commute with the $X$ and $Z$ stabilizers and are not spanned by them.

  From the symmetry of the construction, any overlap with the two-qubit measurement operators always occurs in pairs if at all.
  Therefore these operators commute with the two-qubit stabilizers.

  Furthermore since the two-qubit measurement operators are only supported on the boundary, it cannot span the logical chain-type operators.
  Since the chain-type operators anti-commute with the logical loop-type operators, it cannot be expressed as a product of stabilizers alone.
\end{proof}

\section{Code deformation}
\label{sec:code-def}

Having described how to create defects, we can now proceed to discuss how to use them.
Logical transformations will be effected using a technique called code deformation (see for instance \cite{bombin2011clifford,anderson2014fault,bombin2009quantum,vuillot2019code}).
The core idea behind this technique is to perform a sequence of $T-1$ elementary transformations of a code $\C =: \C^{(1)}$, obtaining codes $\C^{(2)},...,\C^{(T-1)}, \C^{(T)}$ in the process.
\begin{center}
\begin{tikzpicture}[decoration = {markings,mark = between positions 0.1 and 1 step 0.3333 with {\arrow[>=stealth]{<}}}]
      \draw[postaction = decorate] (0, 0) circle [radius = 1cm];
      \path (90 :1cm) node[fill=white]{$\mathcal{C}^{(1)}$}
            (30 :1cm) node[fill=white]{$\mathcal{C}^{(2)}$}
            (330:1cm) node[fill=white]{$\mathcal{C}^{(3)}$}
            (210:1cm) node[fill=white]{$\dots$}
            (150:1cm) node[fill=white]{$\mathcal{C}^{(T-1)}$};
    \end{tikzpicture}
\end{center}
Each elementary transformation is comprised of measurements of Pauli operators.
As shown in the schematic, the overall result is to leave the codespace globally unchanged, i.e. $\C^{(1)} = \C^{(T)} = \C$, but the logical operators of the code may, and hopefully will, undergo a non-trivial transformation.
Since this transformation maps all Pauli operators to Pauli operators, the resulting operation must be a logical Clifford operation.

We begin by reviewing code deformation to highlight some useful properties.
Rather than focus right away on hypergraph product codes, we step back and study code deformation as it applies to general quantum codes.
Our intent is to track the transformation of the logical operators.

\subsection{Non-mixing}
\label{subsec:nonmixing}
For all $t \in \{1,...,T\}$, the quantum error correcting code $\C^{(t)}$ is defined on $n$ qubits for some fixed $n$.
At step $t$, $\C^{(t)}$ is the eigenspace of the stabilizer group $\S^{(t)}$.

The logical operators of the code are denoted $\L^{(t)}$.
These are the objects that we wish to track as we transform the code.
Let $\Z := \S^{(t)} \cap \S^{(t+1)}$ be the operators that are common to  both $\S^{(t)}$ and $\S^{(t+1)}$.

The following lemma states that when we transition from $\C^{(t)}$ to $\C^{(t+1)}$, the logical operators that need to be updated are either removed entirely or mapped by multiplying by a stabilizer element.
\begin{lemma}
  \label{cor:mult}
  If a logical operator $L \in  \L^{(t)}$ anti-commutes with the measurement of $S^{(t+1)} \in \S^{(t+1)}/ \Z$, then it is either
  \begin{enumerate}
    \item $L$ is moved to the space of errors; or
    \item there exists a unique $S' \in \S^{(t)} \setminus \Z$ such that $L \mapsto LS'$.
  \end{enumerate}
\end{lemma}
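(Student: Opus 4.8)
The plan is to analyze what happens to a logical operator $L \in \L^{(t)}$ under the measurement of a stabilizer generator $S^{(t+1)} \in \S^{(t+1)}/\Z$ with which it anticommutes, using the standard stabilizer-measurement formalism. Since $S^{(t+1)} \notin \Z = \S^{(t)} \cap \S^{(t+1)}$, it is not in the stabilizer group $\S^{(t)}$; because the code $\C^{(t)}$ is a valid stabilizer code and $S^{(t+1)}$ is a Pauli operator, $S^{(t+1)}$ either commutes with all of $\S^{(t)}$ (hence represents a logical operator or a new independent stabilizer relative to $\C^{(t)}$) or it anticommutes with some element of $\S^{(t)}$. I would split the argument on exactly this dichotomy, as it is what distinguishes the two conclusions of the lemma.

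First I would treat the case where $S^{(t+1)}$ commutes with every element of $\S^{(t)}$. Then, relative to $\C^{(t)}$, $S^{(t+1)}$ is either a logical operator or can be adjoined as a new stabilizer generator. In this situation, measuring $S^{(t+1)}$ is a genuine logical measurement: the logical operator $L$, which anticommutes with it, no longer commutes with the new stabilizer group $\S^{(t+1)}$, so $L \notin \N(\S^{(t+1)})$ and is therefore demoted to an error — this gives conclusion (1). Conversely, if $S^{(t+1)}$ anticommutes with some element of $\S^{(t)}$, I would invoke the fact that $\S^{(t)}$ is generated by a set of independent Pauli generators; pick a generating set adapted so that exactly one generator $S'$ anticommutes with $S^{(t+1)}$ (this is always possible by multiplying generators together to absorb anticommutations, as in the standard normal-form argument). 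Then $L S'$ commutes with $S^{(t+1)}$ while $L$ did not, and $LS'$ still commutes with all other stabilizers of $\S^{(t+1)}$ since those lie in $\Z \subseteq \S^{(t)}$ and commute with both $L$ and $S'$. Hence $L \mapsto LS'$ lands back in $\L^{(t+1)}$ — conclusion (2) — and I would argue $S'$ is unique up to multiplication by $\Z$ (or genuinely unique given the chosen adapted generating set) by noting that any two such choices differ by an element of $\S^{(t)}$ that commutes with $S^{(t+1)}$, which can be absorbed.

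The main obstacle I anticipate is making the uniqueness claim in conclusion (2) precise: "unique $S' \in \S^{(t)}\setminus\Z$" is only true relative to a choice of generating set, since $S'$ and $S' z$ for $z \in \Z$ both work but the latter lies in $\S^{(t)} \setminus \Z$ only if $S' z \notin \Z$. I would resolve this by phrasing uniqueness at the level of the coset $S' \Z$, or by fixing once and for all an independent generating set of $\S^{(t)}$ in normal form with respect to the measurement, and noting the lemma is really a statement about the Heisenberg-picture update rule. The commutation bookkeeping — verifying $LS'$ commutes with all of $\S^{(t+1)}$ and with $S^{(t+1)}$ itself — is routine and I would state it without belaboring the $\pm 1$ signs, since only the commutation structure matters for membership in $\L^{(t+1)}$.
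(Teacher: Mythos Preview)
Your proposal is correct and follows essentially the same route as the paper: both split on whether $S^{(t+1)}$ commutes with all of $\S^{(t)}$ (giving case~1, $L$ becomes an error) or anticommutes with some $S'\in\S^{(t)}\setminus\Z$ (giving case~2, $L\mapsto LS'$), and both establish uniqueness of $S'$ by the standard normal-form trick of multiplying any other anticommuting generator $S_0$ by $S'$. Your treatment of the uniqueness caveat is actually more careful than the paper's, but the underlying argument is the same.
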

\begin{proof}
  Consider any element $S \in \S^{(t+1)} \setminus \Z$.
  One of two things can happen to $S$ as we transition from step $t$ to $t+1$:
  \begin{enumerate}
    \item $S$ can commute with all of $\S^{(t)}\setminus \Z$, i.e. $S \in \L^{(t)}$.
    \item $S$ can commute with some element $S' \in \S^{(t)}\setminus \Z$.
    This element $S'$ can be chosen uniquely because if there were any other element $S_{0}$ that anti-commuted with $S$, we map $S_{0} \to S_{0}S'$.
  \end{enumerate}

  In turn, this leads to two possibilities for the logical operators $L^{(t)}$.
  Suppose we have an operator $L \in \L^{(t)}$ that anti-commutes with an element $S \in \S^{(t+1)}\setminus \Z$.
  We then update the logical as follows:
  \begin{enumerate}
    \item If $S$ is an operator in $\L^{(t)}$, then we remove the operator $L$ from the logical operators.
    It must now be an error.
    \item If there exists an element $S' \in \S^{(t)} \setminus \Z$ such that $S$ anti-commutes with $S'$, then $L \mapsto LS'$.
  \end{enumerate}
  This proves the claim.
\end{proof}

In addition, the sets $\S^{(t)}$ and $\L^{(t)}$ can also exchange operators.
The operators in $\S^{(t)} \setminus \S^{(t+1)}$ that commute with all of $\S^{(t+1)}$ will be transformed into logical operators.
For instance, this happens when we create a new logical qubit in the surface code by forming a puncture.
Recall that stabilizer generators and errors can be partitioned into pairs such that a stabilizer generator $S$ and error $E$ only anti-commute with each other, and commute with all other operators.
When a stabilizer $S$ is transformed into a logical operator, the conjugate errors $E$ becomes the unique conjugate error.
If there is no unique conjugate error, then this space cannot be used to store a qubit.
For instance this happens when we have a smooth puncture on a lattice with only rough boundaries.
The string of $X$ from the smooth boundary of the puncture cannot be terminated on the boundary.
Thus creating a smooth puncture on a lattice with only rough boundaries cannot be used to store a qubit (because there are redundant checks).

The operators in $\L^{(t)}$ that are in the span of $\S^{(t+1)} \setminus \S^{(t)}$ will be removed from the stabilizer group.

This analysis proves that logical operators transform linearly as summarized by the following lemma.
\begin{lemma}
  For $t \in \{1,...,T\}$, let $\L^{(t)}$ denote the set of logical operators of the code $\C^{(t)}$.
  Let $\mathbb{L}^{(t)} \in \field_2^{k \times n}$ and $\mathbb{S}^{(t)} \in \field_2^{(n-k)\times n}$ be the generator matrix for this space.
  There exists a matrix $Q^{(t)}$ such that we can write the logical operators $\mathbb{L}^{(t)}$ as
  \begin{align}
    \mathbb{L}^{(t+1)} = Q^{(t)}
    \begin{pmatrix}
      \mathbb{L}^{(t)}\\
      \mathbb{S}^{(t)}
    \end{pmatrix}
  \end{align}
\end{lemma}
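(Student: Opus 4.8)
The plan is to treat the statement as a bookkeeping corollary of Lemma~\ref{cor:mult} and the discussion immediately preceding it: I would enumerate the ways in which a generator of $\L^{(t+1)}$ can arise from the generators of $\L^{(t)}$ and $\S^{(t)}$, and then read off the matrix $Q^{(t)}$ as the table of coefficients of those combinations.

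First I would fix generating sets for $\S^{(t)}$ and $\L^{(t)}$, realized as the rows of $\mathbb{S}^{(t)}$ and $\mathbb{L}^{(t)}$, and pass to the symplectic $\field_2$ representation, in which ``multiply by a stabilizer'' becomes ``add a row mod $2$''. With $\Z = \S^{(t)}\cap\S^{(t+1)}$, the elementary deformation step measures operators generating $\S^{(t+1)}/\Z$. By Lemma~\ref{cor:mult}, each row $L$ of $\mathbb{L}^{(t)}$ then does one of three things: (i)~it commutes with every measured operator and survives as a logical operator of $\C^{(t+1)}$, contributing the row $L$ unchanged; (ii)~it anticommutes with a measured operator and is pushed into the error space, contributing no row; or (iii)~it anticommutes with a measured operator and is replaced by $LS'$ for a unique $S'\in\S^{(t)}\setminus\Z$, contributing the row $L+S'$. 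In every surviving case the resulting row is an $\field_2$-combination of at most two rows taken from $\mathbb{L}^{(t)}$ and $\mathbb{S}^{(t)}$.

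Next I would account for the logical operators that are born from stabilizers, as in the paragraph preceding the lemma: any operator in $\S^{(t)}\setminus\S^{(t+1)}$ that commutes with all of $\S^{(t+1)}$ becomes a logical operator of $\C^{(t+1)}$, and such an operator is a product of generators of $\S^{(t)}$ (in the surface-code puncture example it is the redundant product of the removed plaquettes), hence an $\field_2$-combination of rows of $\mathbb{S}^{(t)}$. Conversely, any row of $\mathbb{L}^{(t)}$ that falls into $\S^{(t+1)}$ is dropped. Choosing one representative for each logical generator of $\C^{(t+1)}$ in this way yields a generating set for $\L^{(t+1)}$, every element of which lies in the row span of the stacked generator matrix $\begin{pmatrix}\mathbb{L}^{(t)}\\ \mathbb{S}^{(t)}\end{pmatrix}$; letting the rows of $Q^{(t)}$ be the corresponding coefficient vectors gives $\mathbb{L}^{(t+1)} = Q^{(t)}\begin{pmatrix}\mathbb{L}^{(t)}\\ \mathbb{S}^{(t)}\end{pmatrix}$.

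The only point beyond Lemma~\ref{cor:mult} that needs care is verifying that cases (i)--(iii) together with the promoted stabilizers really furnish a \emph{complete} and independent generating set for $\L^{(t+1)}$, so that the coefficient table is a genuine matrix $Q^{(t)}$ of the stated shape and not merely a partial description. I would settle this by a dimension count: $\C^{(t+1)}$ encodes the same $k$ qubits, so it suffices to exhibit a full set of independent logical representatives modulo $\S^{(t+1)}$, and the bookkeeping above produces exactly that number once each logical generator destroyed (a case-(ii) operator, or one absorbed into $\S^{(t+1)}$) is matched against one created (a promoted stabilizer), with the remainder updated in place. This matching, rather than any new estimate, is the one step I expect to require attention; the rest is routine linear bookkeeping.
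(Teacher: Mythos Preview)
Your approach is essentially the same as the paper's: the paper does not give a separate proof for this lemma at all, but simply writes ``This analysis proves that logical operators transform linearly as summarized by the following lemma,'' treating it as an immediate bookkeeping consequence of Lemma~\ref{cor:mult} and the paragraph on stabilizers being promoted to logicals. Your enumeration of cases (i)--(iii) plus promoted stabilizers is exactly that analysis spelled out, and your reading off of $Q^{(t)}$ as the coefficient table is the intended content.

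One small caution: your dimension count assumes ``$\C^{(t+1)}$ encodes the same $k$ qubits,'' but the paper itself later remarks that the number of logical operators need not stay constant over the course of code deformation (punctures can grow or shrink). The lemma as stated in the paper is loose on this point too (it writes $\mathbb{L}^{(t)}\in\field_2^{k\times n}$ for every $t$ without comment), so you are not missing anything the paper supplies; just be aware that the matching you flag as ``the one step requiring attention'' is genuinely where the slack lies, and the paper does not resolve it either.
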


As we proceed with code deformation, we will encounter problems unique to codes that carry several logical qubits.
We define below the notions of non-mixing and small transformations as guidelines for studying such transformations.

First, there may potentially be several ways of updating the logical operators.
This is because there is no preferred basis for us to express the logical operators in the intermediary steps.
Equivalently, the matrix $Q^{(t)}$ is not unique as there could be several different ways of expressing the logical operators over the course of code deformation.
However, the global transformation $Q = Q^{(T)}Q^{(T-1)}\ldots Q^{(2)}Q^{(1)}$ generated by the entire sequence of code deformation is unique if we choose the same logical operator basis for $Q^{(1)} = Q^{(T)}$.

For $t \in \{1,...,T\}$, let $\L^{(t)}$ denote the set of logical operators of the code $\C^{(t)}$.
Let
\begin{align*}
  \langle L^{(t)}_{j} \rangle_j := \langle L^{(t)}_g \rangle_g \times \langle L^{(t)}_b \rangle_{b}
\end{align*}
be a partition of the set of logical operators $\L^{(t)}$ into good and bad operators.
These sets are defined such that the operators in the set $g$ all have weight above some threshold, whereas those in $b$ have weight below this threshold.
Furthermore, assume that $\langle L^{(t)}_g \rangle$ contains $k' < k$ independent operators.
The set of qubits defined by $\{ L^{(t)}_b \}_b$ shall be considered as gauge qubits.
We wish to avoid the space of gauge qubits interacting with our logical qubits and to this end, define the non-mixing condition.

\begin{definition}[Non-mixing]
  \label{def:nonmixing}
  We say that code deformation is \emph{non-mixing} with respect to the partition if there exists a direct-sum decomposition
  \begin{align*}
    Q^{(t)} = Q^{(t)}_g \oplus Q^{(t)}_b~,
  \end{align*}
  where matrices $Q^{(t)}_g$, $Q^{(t)}_b$ only act on the spaces $\langle L^{(t)}_g \rangle_g$, $\langle L^{(t)}_b \rangle_b$ respectively.
  Each elementary step in code deformation is small with respect to $\Q^{(t)}_{g}$ if $\Q^{(t)}_g$ is rank $k'$ over the good partition.
\end{definition}

By guaranteeing that an operation is non-mixing with respect to this partition, we can show that the gauge qubits do not affect the logical qubits.
The constraint on the rank will guarantee that none of the good logical operators are mapped to either the stabilizers or gauge operators over the course of code deformation.

The non-mixing condition is important because we cannot guarantee that the number of logical operators will remain a constant over the course of code deformation.
In general, hypergraph product codes need not be translation invariant like the toric code; even if we maintain a puncture of a fixed radius, the number of logical operators created by this puncture could change as it moves.
If we move a puncture by enlarging it and then shrinking it, this could also change the number of logical operators supported by the puncture.
However the two conditions on the high-weight operators regulate their transformation.

To illustrate, we consider encoding logical qubits on the surface in a slightly unusual way.
Consider the pair of punctures on the surface code shown in fig. \ref{fig:nonmixing}.
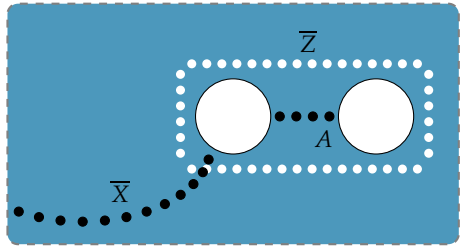
\begin{figure}[htp]
\centering
\begin{tikzpicture}
    \draw[gray,dashed,fill={rgb,255: red,76; green,152; blue,188},thick,inner sep=5pt,rounded corners] (-1,1.8) rectangle (5,5);
    
    \draw[fill=white] (2,3.5) circle (0.5cm);
    \draw[fill=white] (3.9,3.5) circle (0.5cm);

    \foreach \x in {1,...,16} {
      \fill[white] ({1.25 + .2*\x},4.2) circle (0.06cm);
      \fill[white] ({1.25 + .2*\x},2.8) circle (0.06cm);
    }
    \node at (3,4.5) {$\overline{Z}$};
    \foreach \y in {1,...,6} {
      \fill[white] (1.3,{2.8 + 0.21*\y}) circle (0.06cm);
      \fill[white] (4.6,{2.8 + 0.21*\y}) circle (0.06cm);
    }
    \foreach \x in {1,...,4} {
      \draw[fill=black] ({2.4 + 0.22*\x},3.5) circle (0.06cm);
    }
    \node at (3.2,3.2) {$A$};
    \foreach \t in {-3,...,8} {
      \draw[fill=black] ({1.7*sin(10*\t)},{3.1-cos(10*\t)}) circle (0.06cm);
    }
    \node at (0.5,2.5) {$\overline{X}$};
\end{tikzpicture}
\caption{A pair of punctures used to encode a single logical qubit on the surface code.}
\label{fig:nonmixing}
\end{figure}
This pair shall be treated as a single entity that encodes a logical qubit.
The logical $Z$ operator is the loop encircling the pair, denoted $\overline{Z}$.
The logical $X$ operator is a string of $X$s running from the boundary of a puncture to the boundary of the lattice, denoted $\overline{X}$.
The operator $A$ running between punctures is a low-weight string of $X$s and this logical operator is a potential liability.
We therefore treat it as a gauge qubit.
When encoding information, we only store logical information using the qubit defined by $\overline{Z}$ and $\overline{X}$.
So long as we do not braid using $A$ or drag another defect between these two punctures, this troublesome chain will not cause a problem.
In this way the operation will be non-mixing because the operator $A$ will never be entangled with other qubits of interest.

\subsection{Code deformation on the hypergraph product code}
\label{subsec:codedefhgp}

We now return to hypergraph product codes, and in this section shall discuss how to perform Clifford gates with the help of an ancilla.
We begin by recalling lemma $1$ from \cite{krishna2019topological}.
It stated that we can perform all Clifford gates on a qubit of interest, labelled $1$, with the help on an ancillary qubit, labelled $a$, as follows.

\begin{lemma}
  \label{lem:claim}
  Let $A$ and $B$ be distinct, non-trivial single-qubit Pauli operators.
  Let $S$ and $T$ be two Pauli operators, not necessarily distinct.
  The two-qubit measurements $A_1S_a$ and $B_1T_a$, together with all single-qubit Pauli measurements on qubit $a$ are sufficient to generate the single-qubit Clifford group on qubit $1$.
\end{lemma}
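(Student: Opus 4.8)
The plan is to show that the stated measurements generate enough logical operations on qubit $1$ to reach every element of the single-qubit Clifford group. I would work at the level of the stabilizer/Heisenberg picture: a measurement of a two-qubit Pauli $P$ followed (when needed) by a Pauli correction implements, on the subspace where qubit $a$ is in a fixed state, a Clifford on qubit $1$ determined by how $P$ acts. So the strategy is to first prepare the ancilla $a$ in a known eigenstate (using the single-qubit Pauli measurements on $a$), then perform the two-qubit measurements $A_1 S_a$ and $B_1 T_a$ interleaved with single-qubit measurements of $a$, and track the induced map on the Pauli group of qubit $1$. The claim is then reduced to: the group generated by these induced maps is all of $\mathrm{Cl}_1$, which is generated by (say) the Hadamard $H$ and the phase gate $P = \operatorname{diag}(1,i)$, or equivalently by any two ``non-commuting'' order-controlled rotations.

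The key steps, in order. First, I would recall the standard gadget: measuring $A_1 S_a$, then measuring $S_a$ (a single-qubit Pauli on $a$, allowed by hypothesis), and applying a correction, teleports/reflects qubit $1$'s state in a way that realizes a logical operation built from $A$; concretely, measuring $A_1 S_a$ and then $B_1 T_a$ with $A \neq B$ single-qubit Paulis implements, up to Pauli corrections and the ancilla measurements, a product of two reflections on the Bloch sphere about the axes determined by $A$ and $B$. Second, since $A$ and $B$ are distinct non-trivial single-qubit Paulis, their axes on the Bloch sphere are orthogonal, so the product of the two corresponding $\pi$-rotations is a $\pi$-rotation about the third axis — these three $\pi$-rotations together with the Pauli frame already give the order-two Clifford elements. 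Third — and this is where the real content lies — I would show that by choosing which eigenvalue outcomes to post-select (or by repeating with single-qubit $a$-measurements in between to reset), the sequence of measurements $A_1 S_a, B_1 T_a, A_1 S_a, \dots$ generates not just the $\pi$-rotations but a rotation of infinite order, hence a dense (in fact, because we only need the finite Clifford group, exactly generating) subgroup: the composition $R_A(\pi) R_B(\pi)$ is a rotation by $\pi$ about the third axis, but alternating the order and the $a$-basis produces the Hadamard-like element mapping $X \leftrightarrow Z$, and combined with a Pauli this yields an order-$3$ or order-$4$ element; together these generate $\mathrm{Cl}_1$.

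The main obstacle I expect is bookkeeping the ancilla: each two-qubit measurement entangles $1$ with $a$, so one must carefully use the single-qubit Pauli measurements on $a$ to disentangle and reinitialize it between gadget applications, and verify that the Pauli byproduct operators from random measurement outcomes can always be pushed through and corrected (or absorbed into the final logical Pauli frame) without obstructing the next step. The subtlety is that $S$ and $T$ are arbitrary and possibly equal, so the disentangling measurement on $a$ must be chosen adaptively depending on $S$, $T$ and the outcome; I would handle this by checking the small number of cases for the pair $(S,T)$ (they commute or anticommute) and exhibiting, in each case, an explicit sequence of $a$-measurements that returns $a$ to a product state while leaving the intended Clifford on qubit $1$. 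Once the reset is in hand, the generation argument is the routine observation that two distinct Pauli reflections plus Paulis generate $\mathrm{Cl}_1$, which I would state and cite rather than re-derive.
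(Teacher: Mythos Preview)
The paper does not actually prove this lemma: it is \emph{recalled} from the companion paper \cite{krishna2019topological} (``We begin by recalling lemma~$1$ from \cite{krishna2019topological}'') and no argument is given here. So there is no in-paper proof to compare your proposal against; I can only assess your outline on its own merits.

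Your framework is the right one --- prepare $a$ in a Pauli eigenstate, apply two-qubit measurements, disentangle by measuring $a$, track the induced Clifford on qubit~$1$ in the Heisenberg picture --- but the mechanism you describe for producing non-Pauli Cliffords is not correct. You argue that the gadgets implement $\pi$-rotations about the $A$- and $B$-axes and that composing these gives the Clifford group. But $\pi$-rotations about Pauli axes are, up to phase, exactly the Pauli operators themselves; any product of them is again a Pauli, so this line never leaves the Pauli group and cannot produce $H$ or the phase gate. The subsequent remark about obtaining ``a rotation of infinite order, hence a dense subgroup'' is also off: the single-qubit Clifford group is finite, and nothing in a sequence of Pauli measurements will produce an irrational rotation.

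What is actually missing is the gate-teleportation step that uses \emph{three distinct} Pauli bases on the ancilla. Concretely: prepare $a$ in an eigenstate of a Pauli $R$ that anticommutes with $S$ (allowed, since all single-qubit Pauli measurements on $a$ are available), then measure $A_1 S_a$, then measure $a$ in the \emph{third} Pauli basis (the one anticommuting with both $R$ and $S$). This sequence implements, up to a Pauli correction, $(I \pm i A)/\sqrt{2}$ on qubit~$1$ --- i.e.\ a $\pi/2$-rotation about the $A$-axis, which is a genuine non-Pauli Clifford (for $A=Z$ it is the phase gate). Running the analogous gadget with $B_1 T_a$ gives the $\pi/2$-rotation about the $B$-axis. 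Since $A\neq B$, two such $\pi/2$-rotations about distinct Pauli axes generate all of $\mathrm{Cl}_1$. Your outline should replace the ``product of $\pi$-reflections'' step with this construction; once that is in place, the case analysis on $(S,T)$ and the bookkeeping of Pauli byproducts that you already anticipate are indeed routine.
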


First, we point out that regardless of whether the qubit $1$ is an embedded logical qubit, or merely another wormhole, the ancilla qubit(s) shall be encoded in a wormhole.
Second, we will require these wormholes to encode $Y$ resource states.
For reasons that will become clear shortly, we will find it difficult to perform single-qubit $Y$ measurements on the logical level.
If however we are provided ancilla qubits that are prepared in the $Y$ state, we may use these objects catalytically to perform a $Y$ measurement.
This is necessary, at least as per lemma \ref{lem:claim}, to complete the Clifford group.

In addition to these single-qubit gates, we also require entangling gates between multiple logical qubits.
The following circuit shows how this too can be accomplished on the logical level with the help of an ancilla and Pauli measurements.
\begin{figure}[h]
\centering
  \begin{tikzpicture}
    \node at (-3.8,0.5) {$\ket{\psi}$};
    \node at (-3.8,1.5) {$\ket{\phi}$};
    %% Column 1
    \draw (-3.5,0.5) edge[-] (-2,0.5);
    \draw (-3.5,1.5) edge[-] (-2,1.5);
    \fill[black] (-2.75,0.5) circle (0.07cm);
    \fill[black] (-2.75,1.5) circle (0.07cm);
    \draw (-2.75,0.5) edge[-] (-2.75,1.5);

    \node at (-1.5,1) {$=$};

    %% Init
    \node at (-0.8,0.5) {$\ket{\psi}$};
    \node at (-0.8,1) {$\ket{0}$};
    \node at (-0.8,1.5) {$\ket{\phi}$};
    %% Column 1
    \draw (-0.5,0.5) edge[-] (0,0.5);
    \draw (-0.5,1) edge[-] (0,1);
    \draw (-0.5,1.5) edge[-] (0,1.5);
    \draw (0,0.8) rectangle (0.7,1.7);
    \node at (0.35,1.25) {{$ZX$}};
    %% Column 2
    \draw (0.7,1) edge[-] (1,1);
    \draw (0,0.5) edge[-] (1,0.5);
    \draw (1,0.3) rectangle (1.7,1.2);
    \node at (1.35,0.75) {{$ZZ$}};
    %% Column 3
    \draw (1.7,1) edge[-] (2,1);
    \draw (2,0.8) rectangle (2.4,1.2);
    \node at (2.2,1) {$X$};
    %% Column 4
    \draw (0.7,1.5) edge[-] (2.7,1.5);
    \draw (2.4,1) edge[-] (2.7,1);
    \draw (1.7,0.5) edge[-] (2.7,0.5);
  \end{tikzpicture}
  \caption{A circuit to perform controlled-$Z$.}
\end{figure}
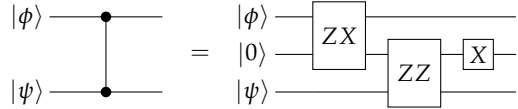
This completes the requirements to perform Clifford gates.
In the next subsection, we shall study how exactly to perform these measurements.

\subsection{Measurements and traceability}
The measurements of Pauli operators described above will have to be performed on the logical operator and the ancilla qubit encoded in a wormhole.
In addition, we use a regular puncture based qubit to perform the measurement.
This puncture shall be referred to as a needle.

We are interested in logical operators of the needle that can be measured fault tolerantly.
In turn these operators will be used to measure the logical Pauli operators of a wormhole or even an embedded logical qubit.
For instance, suppose we have a smooth puncture that has high weight $X$ and $Z$ operators.
The loop-type operator can be measured by shrinking the size of the puncture while simultaneously maintaining the size of the chain-type conjugate logical operator.
This would of course make the logical qubit susceptible to logical $Z$ error.
However this will not affect the measurement outcome so long as the logical $X$ operator remains high weight.
Similarly, the chain-type logical $X$ operator can be measured fault tolerantly by shrinking its size while maintaining the size of the loop-type $Z$ logical operator.
Unfortunately, the logical $Y$ operator of a puncture cannot be measured fault tolerantly as this would require that we minimize both the size of the chain-type operator as well as that of the loop-type operator.
The logical qubit would then become unprotected, and the measurement outcome error prone.
The impossibility to fault-tolerantly measure $Y$ operators will cause some problems as we shall see below.

Let $w$ denote a logical qubit, or sets of logical qubits whose state we wish to measure.
This could refer to a set of logicals on the wormhole, or an embedded logical, or some combination thereof.
Let $P_p$ refer to a logical operator of the puncture that is fault tolerantly measurable.
A logical operator $Q_w$ on system $w$ is said to be \emph{traceable} if there is a unitary operation $U$ implementable by code deformation such that
\begin{align*}
  U\conj P_p U = Q_w P_p~.
\end{align*}
Such operations will be used to measure traceable operators $Q_w$ in order to effect measurements of logical operators in lemma \ref{lem:claim}.
The operator $Q_w$ shall be measured using the standard ancilla-assisted way: we shall prepare the ancilla in an eigenstate of $P_p$, applying the unitary $U$ and then measure the operator $P_p$.

If $Q_w$ is either $X$ or $Z$, then we need to find an operator $P_p$ and an operation $U$ such that
\begin{align*}
  U\conj P_p U \to Q_w P_p~.
\end{align*}
On the other hand, if $Q_w = Y$, then the situation is complicated for reasons we will discuss shortly.
In such a case, we shall assume that we are given access to an another system $b$ prepared in a $Y$ state.
The aim is to perform the operation
\begin{align*}
  U\conj P_p U \to Y_w Y_b P_p~,
\end{align*}
which would not have been possible without the system $b$.
If we now use this operation to perform a $Y_wY_b$ measurement, then we can do so without affecting the state of system $b$.
In this way, the system $b$ serves as a catalyst and can be used for the next $Y$ measurement on $w$.
In the next subsection, we shall discuss how to obtain such a resource states to serve as catalysts.

We now make some remarks about traceability, but before doing so, we remark that this completes the requirements to perform Clifford gates on the system $w$.

We begin by noting that the advantage of the wormhole in this process is that it permits both the $X$ and $Z$ type logical operators associated to a defect to be traced.
This has been via example in our companion paper \cite{krishna2019topological}; whether or not a logical operator is traceable on a specific code is a code dependent question.
The other advantage of a wormhole is that it permits the use of a $Y$ resource state whose role is catalytic.
Without the wormhole, the $Y$ resource states would be consumed and we would therefore require a constant supply of these states.

In the case of the surface code, the new wormhole defects that we have introduced make it possible to trace both the $X$ and $Z$ type logicals associated to a wormhole \cite{krishna2019topological}.
Suppose that $P_p$ above is the logical $X$ operator of a smooth puncture.
As it traces the support of a logical $Y = iXZ$ operator, it will encounter the location when the $X$ and $Z$ logical operators cross.
As shown in fig. \ref{fig:nontraceable}, the trailing chain-type $X$ operator is mapped to the logical $Y$ operator and this has the effect of breaking the original protocol.
So instead of mapping $X_p$ to $Y_wX_p$ as desired, we are mapping it to $Y_wY_p$.
Completing the protocol would require measuring $Y_p$, but this cannot be done because it is not fault tolerantly measurable.

\begin{figure}[h]
  \begin{tikzpicture}
    \draw[gray,dashed,fill={rgb,255: red,76; green,152; blue,188},thick,inner sep=5pt,rounded corners] (0,1.8) rectangle (5,5);
    
    \draw[fill=white] (2.2,3.5) circle (0.5cm);

    \foreach \y in {1,...,15} {
      \fill[white] ({2.9 + 0.1*sin(60*\y)},{1.8 + 0.21*\y}) circle (0.06cm);
    }

    \foreach \t in {0,...,8} {
      \draw[fill=black] ({0.2*\t},{3.5+0.1*cos(70*\t)}) circle (0.06cm);
    }
    \node at (-0.5,4.7) {$(a)$};
  \end{tikzpicture}
  \def\p{20}
  \begin{tikzpicture}
    \draw[gray,dashed,fill={rgb,255: red,76; green,152; blue,188},thick,inner sep=5pt,rounded corners] (0,1.8) rectangle (5,5);
    
    \draw[fill=white] (4,3.5) circle (0.5cm);
    \foreach \t in {1,...,\p} {
      \fill[white] ({4 + 0.7*sin(\t*360/\p)}, {3.5 + 0.7*cos(\t*360/\p)}) circle (0.06cm);
    }

    \foreach \y in {1,...,15} {
      \fill[white] ({2.9 + 0.1*sin(60*\y)},{1.8 + 0.21*\y}) circle (0.06cm);
    }

    \foreach \t in {0,...,17} {
      \draw[fill=black] ({0.2*\t},{3.5+0.1*cos(70*\t)}) circle (0.06cm);
    }
    \node at (-0.5,4.7) {$(b)$};
  \end{tikzpicture}
  \caption{A puncture crossing its own path. The puncture first leaves a trail of $Z$s, and passes through a wormhole.
  Upon crossing $Z$, the logical $X$ is mapped to $XZ$.
  This operator is no longer guaranteed to be needle-measurable.}
  \label{fig:nontraceable}
\end{figure}
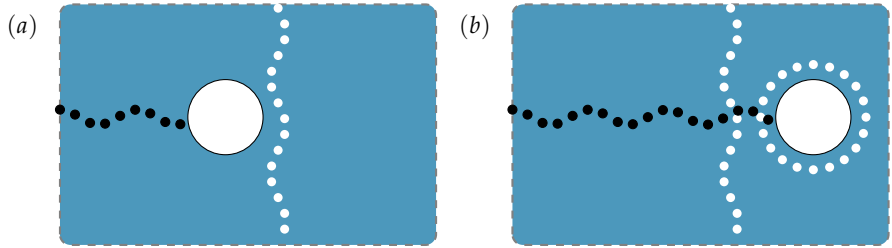

Not all is lost however; we may find that \emph{products} of logical $Y$ operators are traceable.
As an example, this was demonstrated in the case of the surface code with wormhole defects \cite{krishna2019topological}, so the framework we describe is sufficiently rich to enable the complete Clifford group in principle.

We would like to highlight that we are not providing a constructive approach to compiling specific logical operators.
Compiling the operation required to trace operators not only depends on the code, but also on the representation of the logical we are interested in performing.
The process described merely provides a framework within which to search for such an operation.
For instance, given a specific code, we could search for non-trivial Clifford operators that we can perform using brute force.
Once a set of Clifford operations has been found, these can be used as a basis to compose Clifford gates of interest using standard compiling tools.

We consider an example, perhaps perverse, to illustrate that simply having a wormhole and a puncture is insufficient to generate all Clifford gates.
Suppose we have two copies of the toric code that are disconnected from each other.
If we were to initialize a wormhole and a needle on one of the two codes, then clearly this is insufficient to perform all gates fault tolerantly on all logical qubits.
This is because there is clearly no way for the needle to move to the second code.
This suggests that in general, the ability to perform all gates may be closely tied to graph connectivity.
We will return to this idea in section \ref{subsec:pointlike}.

\subsection{Resource states}
\label{subsec:injection}

Clifford gates by themselves only generate a finite group \cite{nielsen2002quantum}.
Furthermore, the techniques that we have discussed above only map Pauli operators to Pauli operators.
Therefore they can at best generate logical Clifford operations.

As is well known, it is sufficient to add any gate that is not already in the Clifford group to achieve a universal gate set.
One such non-Clifford gate is the $T$ gate, defined as $T = e^{i\pi Z/8}$
We assume that we have access to physical $T$ gates and wish to construct a logical $T$ gate.
The technique presented here will mean that this logical $T$ gate is not inherently fault tolerant.
In turn the $T$ gate will be noisy and therefore need to be distilled \cite{bravyi2005universal}.

%We begin by observing that the $T$ gate, being a diagonal gate, can be expressed as
%\begin{align*}
%  T = \frac{1}{2}\left(1 + e^{i\pi/4} \right)\1 + \frac{1}{2}\left(1-e^{i\pi/4}\right)Z~.
%\end{align*}
%Therefore any circuit used to implement the $T$ gate on a given qubit can be supported solely on the support of the $Z$ operator of said qubit.

By its definition, the logical $T = e^{i\pi Z/8}$ gate can be executed on the support of the logical $Z$ operator.
 Suppose we have a logical qubit encoded in a smooth puncture prepared in the $\ket{+}$ state.
The logical $Z$ operator is a loop-type operator that is supported only on the boundary of the puncture.
We can inject the $T$ gate on to the code by performing an explicit circuit on this boundary.
One might hope that the stabilizers and logicals on the boundary obey some symmetry such as triorthogonality \cite{bravyi2012magic}.
In that case, the circuit to inject the $T$ gate could be made fault tolerant as we would merely require transversal physical $T$ gates in order to implement the transversal logical $T$ gate.
However we do not assume that the puncture obeys these symmetries, and thus the circuit to perform the $T$ is not fault tolerant.
Thus we would want to minimize the size of the circuit to minimize the number of faulty locations.
To this end, we may shrink the puncture i.e. reduce the size of the boundary that supports the logical qubit.
Upon performing the $T$ circuit, we increase the size of the puncture again to make it resistant to logical $Z$ type errors.
While doing so of course, the logical qubit is still subject to logical $Z$ errors and is thus subject to dephasing errors.
The end result is a noisy version of the $T$ state defined as $\ket{A} = T\ket{+}$.

Once we have several such logical qubits carrying potentially noisy $T$ states, we can perform state distillation on the hypergraph product code.
State distillation is a technique that uses several noisy $T$ states and produces fewer, but higher fidelity copies of the $T$ state.
These higher fidelity copies can then be used in the computation if they are sufficiently reliable.
State distillation only requires Clifford gates and Pauli measurements, and these are operations we already have the ingredients to perform.

To perform a logical $T$ on an embedded logical qubit, we can use the $T$ gate and a single-qubit teleportation circuit \cite{zhou2000methodology}.

In addition to using resource states to inject the $T$ gate, we also require resource states that are prepared in the $Y$ basis as was discussed in the previous section.
However since these resource states are used catalytically, they can be prepared once before the beginning of the computation.
To prepare these states, we follow a procedure similar to the preparation of a $T$ state.
We shall perform the circuit required to prepare a puncture qubit in a logical $Y$ state.
We note that the circuit to prepare the $Y$ state will have to be performed on the support of both the $X$ and $Z$ logical operators.
To minimize the size of this circuit, we therefore reduce the size of both the loop-type and the chain-type logical operators.
This circuit will likely not be fault tolerant; after performing the circuit we will have to increase the size of the loop-type and chain-type logical operators.
During this period the logical qubit supported on the puncture may be subject to depolarizing noise.

Once we have prepared several such logical qubits, we will have access to several noisy $Y$ resource states.
To purify theme, we will have to perform distillation.
Of course, there may be more optimal ways to prepare these states, but this is sufficient to generate the desired resource states.

\subsection{Point-like punctures}
\label{subsec:pointlike}

Before we conclude, we consider a scheme that involves the movement of point-like punctures.
This deviates slightly from the framework that we have discussed above, and since the punctures are point-like, the setup is not fault tolerant.
However, we feel that it still may help understand movement in these codes.

When discussing the surface code, traceability is relatively simple.
We can move a rough puncture around a smooth puncture and thereby perform a non-trivial Clifford operation.
Whether or not such an operation is possible is dictated by topology; a rough puncture can trace any closed loop of $Z$ operators.
The situation is not so simple in the case of hypergraph product codes.

In this subsection, we discuss when logical operators are traceable using a point-like puncture to serve as the needle.
Of course, using a point-like puncture is not fault tolerant as the loop-type logical operators are low-weight and therefore error prone.
This discussion will however shed light on when something non-trivial is possible.
It also illustrates that we can generalize braiding to graph-theoretic concepts.

We shall show that in the case of point-like punctures, traceability can be cast as walks on a graph. 
Thus whether or not a logical operator is traceable boils down to verifying whether a certain path on a graph exists.
This shows that it may be possible to efficiently verify when an operator is traceable.

Consider a point-like puncture, i.e. one created by removing a single stabilizer generator.
For the sake of illustration, this puncture is smooth.
Code deformation entails that there exist a series of steps such that the point-like puncture corresponds to $T_j \times S_j$ for $j = 1,...,N$.

Let $T_j = \{c_{j}\}, S_j = \{v_{j}\}$ be singleton sets and let $T_j \times S_j$ be the associated point-like puncture.
The logical $Z$ operator $\alpha$ that emerges is then just the support of the $Z$ stabilizer $(c_j,v_j)$ i.e. $\alpha_j := \Gamma(c_j) \times v_j \union c_j \times \Gamma(v_j)$.
Let the conjugate logical operator be $\beta_j$.
Let us study how these objects transform as we transition from step $j$ to step $j+1$.

\textbf{Moving a single step:} 
Suppose we consider moving this puncture by changing $T_{j} \to T_{j+1} = \{c_{j+1}\}$, where $c_{j}$ and $c_{j+1}$ share a bit $u_{j}$ in their common neighborhood as shown in fig. \ref{fig:pointlike}.

\begin{figure}[h]
\centering
  \begin{tikzpicture}
      \draw (0,0) node[csquare=c-0-1];
      \xdef\radius{0cm}
      \xdef\level{0}
      \xdef\nbnodes{1}
      \xdef\degree{(4+1)} % special degree just for the root node
      \foreach \ndegree/\form in {4/cdisc,4/square}{
        \pgfmathsetmacro\nlevel{int(\level+1)}
        \pgfmathsetmacro\nnbnodes{int(\nbnodes*(\degree-1))}
        \pgfmathsetmacro\nradius{\radius+0.8cm}
        \foreach \div in {1,...,\nnbnodes} {
          \pgfmathtruncatemacro\src{((\div+\degree-2)/(\degree-1))}
          \path (c-0-1) ++({\div*(360/\nnbnodes)-180/\nnbnodes}:\nradius pt) node[\form=c-\nlevel-\div];
          \draw (c-\level-\src) -- (c-\nlevel-\div);
        }
        \xdef\radius{\nradius}
        \xdef\level{\nlevel}
        \xdef\nbnodes{\nnbnodes}
        \xdef\degree{\ndegree}
      }

      \path (c-2-1) node[csquare=b];

      \node (c) at (0,-0.4) {$c$};
      \node (v) at (0.68,0.12) {$u$};
      \node (c') at (1.8,0) {$c'$};
      \node (c') at (1.3,0.8) {$c_{1}$};
      \node (c') at (0.74,1.3) {$c_{2}$};
     \begin{scope}[yscale=1,xshift=150]
        \draw (0,0) node[cdisc=c-0-1];
        \xdef\radius{0cm}
        \xdef\level{0}
        \xdef\nbnodes{1}
        \xdef\degree{(4+1)} % special degree just for the root node
        \foreach \ndegree/\form in {4/csquare,4/disc}{
          \pgfmathsetmacro\nlevel{int(\level+1)}
          \pgfmathsetmacro\nnbnodes{int(\nbnodes*(\degree-1))}
          \pgfmathsetmacro\nradius{\radius+0.8cm}
          \foreach \div in {1,...,\nnbnodes} {
            \pgfmathtruncatemacro\src{((\div+\degree-2)/(\degree-1))}
            \path (c-0-1) ++({\div*(360/\nnbnodes)-180/\nnbnodes}:\nradius pt) node[\form=c-\nlevel-\div];
            \draw (c-\level-\src) -- (c-\nlevel-\div);
          }
          \xdef\radius{\nradius}
          \xdef\level{\nlevel}
          \xdef\nbnodes{\nnbnodes}
          \xdef\degree{\ndegree}
        }
        \draw[] (0,-2) node {};
        \node (v) at (0,-0.4) {$v$};
    \end{scope}
    \end{tikzpicture}
  \caption{A point-like puncture centered at $c,v$.
    The check $c'$ is connected to $c$ via the variable node $u$.}
  \label{fig:pointlike}
\end{figure}
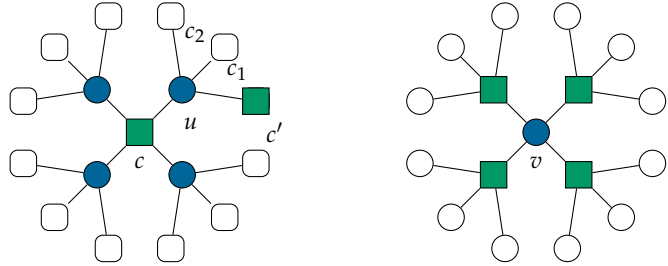

In the growth phase, we would measure $X$ on the qubit $(u,v_j)$.
This anti-commutes with all the $Z$ stabilizers that are incident to $(u,v_j)$, i.e. $(c_1,v_j)$, $(c_2,v_j)$ and $(c_{j+1},v_j)$.
The logical $Z$ operator $\alpha_j$ (corresponding to $(c_j,v_j)$) also anti-commutes with this operation.
These objects are updated per the stabilizer update rule.
We multiply the operators $(c_1,v_j)$, $(c_2,v_j)$ and $\alpha_j$ by $(c_{j+1},v_j)$.
The stabilizer $(c_{j+1},v_{j})$ is itself removed from the stabilizer group.

In the contraction phase, we measure the stabilizer $(c_{j},v_{j})$, returning it to the stabilizer group.
This anti-commutes with the measurement $X(u,v_{j})$.
This also anti-commutes with the logical operator $\beta_{j}$.
To resolve this anti-commutation relation, we map $\beta_j \to \beta_{j+1} := \beta_j X(u,v)$.
We then discard $X(u,v)$ from the stabilizer group.

Thus the logical operator $\beta_j$ has grown by a single qubit.
However, we have not yet returned to the code space.
The logical operator $\beta_{j+1}$ still anti-commutes with the operators $(c_1,v_j)$ and $(c_2,v_j)$.
These objects have still not been returned to the stabilizer group.
We highlight this matter because it is this issue that does not allow us to fit the movement of a point-like puncture into the framework described in the previous sections.

There are two concerns associated with these frustrated stabilizers:
\begin{enumerate}
  \item Will they return to the stabilizer group?
  \item Will the weight of these objects grow or remain upper bounded by some constant?
\end{enumerate}

First, since the path we traverse corresponds to a logical, it must commute with all the stabilizers.
Thus at some point during the course of the point-like puncture moving, it will commute with each stabilizer that it frustrated.
Exactly when these operators will be returned to the stabilizer group will depend on the path being traversed.

Secondly, the weight of these stabilizers is always guaranteed to be upper bounded by a constant.
In fact, these frustrated operators are always pairwise products of the puncture at step $j$ and themselves.
For instance, consider the example above where we transitioned by one step, from $j$ to $j+1$.
In the very next step, suppose the next point to be removed corresponds to the stabilizer $(c_{j+2},v_j)$.
The stabilizer $(c_{j+1},v_{j})$ is returned to the stabilizer group.
Therefore the frustrated stabilizers $(c_1,v_j)$ and $(c_2,v_j)$ are now multiplied by $(c_{j+2},v_{j})$.
This will continue until we measure another qubit in the support of $(c_1,v_j)$ and $(c_2,v_j)$, at which point they will return to the stabilizer group.

For a point-like puncture, this analysis shows that the logical can grow one qubit at a time.
With this insight, we can cast the problem of whether or not a logical is traceable as a graph problem.
In this problem, we first consider a logical operator, say $Q$, that has no $Y$ operators in its support.
We use this operator to define a graph $\G_{Q}$  as follows.
The stabilizers that are adjacent to the qubits become the vertices of $\G_{Q}$ and the qubits in the support of $Q$ become the edges of $\G_{Q}$.
If there exists a sequence of stabilizers that we can puncture, each connected by a single-qubit then this path becomes traceable.

In particular, this can be cast as Eulerian cycle \cite{moore2011nature}.
Eulerian cycle is an efficient algorithm that can be stated as follows:

\begin{algorithm}[H]
\caption{Eulerian cycle}
\label{alg:euler}
\begin{algorithmic}[1]
  \State \textbf{Input:} Graph $\G = (V,E)$.
  \State \textbf{Output:} A path on the graph such that every edge is traversed exactly once if it exists.
\end{algorithmic}
\end{algorithm}

It is well known that an Eulerian cycle exists only when the degree of each vertex in the graph is even.
Thus given a graph with $n$ vertices, the existence of an Eulerian cycle can be verified efficiently.
This example shows that braiding may generalize to a purely graph-theoretic concept. 
Moreover, there may exist an efficient algorithm to answer when a logical is traceable. 

\section{Conclusions}
We have provided a general framework to implement Clifford gates on hypergraph product codes.
This framework is based on code deformation, and generalizes defect based encoding from topological codes.
In particular, we generalize wormhole defects introduced in a companion paper \cite{krishna2019topological}.
These defects ensure that the code remains LDPC at each step of code deformation.

In contrast to a previous scheme suggested by Gottesman, these operations are defined on a single block.
The generalized punctures that we obtain are capable of encoding several logical qubits.
We discussed a framework that is rich enough to permit all Clifford gates on encoded qubits.
Whether a particular code permits these gates is a code dependent question.
Finally, we discussed the movement of point-like charges on these graphs.
These defects serve to illustrate that something non-trivial can be accomplished on hypergraph product codes.
Furthermore they demonstrate how braiding can be generalized to a purely graph-theoretic notion.

Of course, this is merely a proof of concept, and there is a lot of work to be done in the future.
In no particular order, we discuss some issues that need to be addressed; this is by no means a complete list.
Using point-like punctures helps us understand that traceability in certain instances is connected to Eulerian cycle.
As punctures become larger however, it is unclear how this algorithm will generalize.
Furthermore, we would like efficient algorithms that can verify whether a given representation of a logical is traceable.

We believe that addressing these questions will be intimately connected to the specific code we wish to use.
It is of course important to consider specific classes of codes to understand which Clifford gates we can generate using this process.
At this juncture however, this seems premature as there is as yet no consensus as to which hypergraph product codes offer the best performance.
As the theory progresses, this will likely be informed by decoding algorithms, but perhaps the ability to perform Clifford gates could also factor into this choice.
The punctures may exhibit symmetries that do not require state distillation to prepare resource states.
Since these codes are no longer local, it is unclear what sorts of gates can be implemented transversally, and which cannot.

Addressing these questions will help establish the role of hypergraph product codes in quantum computation.

\section{Acknowledgements}
AK would like to thank Christophe Vuillot, Daniel Gottesman, Vivien Londe and Nicolas Delfosse for discussions.
AK also acknowledges the support of the Fonds de Recherche - Nature et Technologie (FRQNT) for the B2X scholarship.
This work was partially funded by Canada's NRC and NSERC. David Poulin is a CIFAR Fellow in the Quantum Information Science program.

\bibliographystyle{unsrtabbrev}
\bibliography{references}

\end{document}